\documentclass[superscriptaddress,longbibliography,nofootinbib,notitlepage,aps,pra,10pt]{revtex4-1}
\usepackage{iftex}
\ifPDFTeX\pdfoutput=1\fi

\usepackage{silence}
\usepackage{graphicx}
\usepackage{amsthm}
\usepackage{thmtools}
\usepackage{mathtools}
\usepackage{thm-restate}
\usepackage{amsmath}
\usepackage{amssymb}
\usepackage{bm}
\usepackage{xcolor}

\usepackage{placeins}
\usepackage{algorithm}
\usepackage{algpseudocode}
\usepackage[normalem]{ulem}
\usepackage[colorlinks]{hyperref}
\hypersetup{
	pdfstartview={FitH},
	pdfnewwindow=true,
	colorlinks=true,
	linkcolor=blue,
	citecolor=blue,
	filecolor=blue,
	urlcolor=blue}
\usepackage{tikz}
\usetikzlibrary{calc,backgrounds}
\usepackage{physics}
\usepackage{cancel}
\usepackage{multirow}
\usepackage{pgffor}
\usepackage{url}
\usepackage{hypcap}
\usepackage{dsfont}
\usepackage{soul}
\usepackage{inconsolata}

\usepackage{hyperref}
\hypersetup{pdfpagemode=UseNone}

\allowdisplaybreaks

\newcommand{\thm}[1]{\hyperref[thm:#1]{Theorem~\ref*{thm:#1}}}
\newcommand{\defn}[1]{\hyperref[defn:#1]{Definition~\ref*{defn:#1}}}
\newcommand{\lem}[1]{\hyperref[lem:#1]{Lemma~\ref*{lem:#1}}}
\newcommand{\prop}[1]{\hyperref[prop:#1]{Proposition~\ref*{prop:#1}}}
\newcommand{\fig}[1]{\hyperref[fig:#1]{Figure~\ref*{fig:#1}}}
\newcommand{\tab}[1]{\hyperref[tab:#1]{Table~\ref*{tab:#1}}}
\renewcommand{\sec}[1]{\hyperref[sec:#1]{Section~\ref*{sec:#1}}}
\newcommand{\app}[1]{\hyperref[app:#1]{Appendix~\ref*{app:#1}}}
\newcommand{\cor}[1]{\hyperref[cor:#1]{Corollary~\ref*{cor:#1}}}
\newcommand{\obs}[1]{\hyperref[obs:#1]{Observation~\ref*{obs:#1}}}
\newcommand{\nn}{\nonumber \\}
\newcommand{\append}[1]{\hyperref[append:#1]{Appendix~\ref*{append:#1}}}

\newtheorem{remark}{Remark}

\renewcommand{\ket}[1]{|#1\rangle}
\renewcommand{\bra}[1]{\langle #1|}

\usepackage[capitalise]{cleveref} 

\newtheorem{theorem}{Theorem}
\newtheorem{definition}[theorem]{Definition}

\newtheorem{lemma}[theorem]{Lemma}

\newtheorem{corollary}[theorem]{Corollary}

\newcommand{\MQ}{\affiliation{%
School of Mathematical and Physical Sciences,
Macquarie University, Sydney, NSW 2109, AU} }

\newcommand{\MQS}{\affiliation{%
Molecular Quantum Solutions ApS, Blegdamsvej 17, 2100 Copenhagen, Denmark} }

\newcommand{\SQA}{\affiliation{Sydney Quantum Academy, Sydney, NSW, Australia} }

\newcommand{\UMD}{\affiliation{%
Joint Center for Quantum Information and Computer Science, University of Maryland, College Park, MD 20742, USA}}

\newcommand{\UTS}{\affiliation{%
Centre for Quantum Software and Information,
University of Technology Sydney, Sydney, NSW 2007, AU}}

\newcommand{\Diraq}{\affiliation{%
Diraq, Sydney, New South Wales, Australia}}

\newcommand{\CQ}{\affiliation{%
ContinoQuantum, Sydney, NSW 2093, AU}}

\newcommand{\PKU}{\affiliation{
Beijing International Center for Mathematical Research,
Peking University, Beijing, China}}

\makeatletter
\renewcommand{\l@subsection}[2]{}
\renewcommand{\l@subsubsection}[2]{}
\makeatother

\begin{document}
\title{Adiabatic tracking and geometric schedules for classical and quantum Gibbs-state preparation}

\author{Pedro C. S. Costa}\email[Corresponding author: ]{pcs.costa@protonmail.com}\UTS\MQ\CQ
\author{Mauro E.~S. Morales}\UMD\Diraq
\author{Dong An}\PKU
\author{Kaur Kristjuhan}\MQS\MQ\SQA

\begin{abstract}
We develop a discrete adiabatic framework for Gibbs-state preparation in classical and quantum optimisation algorithms. For finite time-inhomogeneous reversible Markov chains, we derive a tracking theorem that bounds the distance between the evolving distribution and the instantaneous stationary state. The bound combines contraction of accumulated error with the new error generated as the stationary distribution changes. For Gibbs paths, it yields explicit step-complexity bounds and an effective tracking metric determined by the spectral gap and equilibrium energy fluctuations. The associated constant-speed schedule minimizes the metric-energy functional and, in the small-gap regime, takes the form $\dot\beta\propto\delta(\beta)/V(\beta)$, where $V(\beta)=\sqrt{\operatorname{Var}_{\pi_\beta}(E)}$. We then apply the discrete quantum adiabatic theorem to temperature-dependent Szegedy quantum walks, whose eigenphase gap satisfies $\Delta_w=\Theta(\sqrt{\delta_+})$, with $\delta_{+}$ the underlying Markov kernel’s right spectral gap. Under uniform bounds on the discrete operator variation and suitable spectral and path-integral assumptions, gap-adapted schedules give a sufficient fixed-accuracy cost of $O(\delta_{+,\min}^{-1/2})$. This gives a quadratic improvement in gap dependence relative to the classical inverse-gap relaxation scale under the stated assumptions. Numerical tests on a solvable two-state model and Sherrington–Kirkpatrick instances compare tracking and final preparation errors and investigate the transfer of schedules from smaller to larger systems.
\end{abstract}

\maketitle
\setcounter{tocdepth}{1}
\tableofcontents

\section{Introduction}

Gibbs-state preparation is a central task in statistical
physics, sampling, and optimisation
~\cite{kirkpatrick1983optimization,SommaBoixoBarnumKnill2008}. During annealing, the dynamics must relax towards a
stationary distribution that itself changes with temperature. An effective schedule must therefore
balance the local relaxation rate with the variation of the instantaneous Gibbs state. We investigate this
interplay through a discrete tracking theorem for classical Markov chains and its connection to coherent
quantum-walk evolution, with particular emphasis on the role of spectral information in annealing schedules.

The connection to adiabatic evolution provides a common
framework for these two settings. In quantum mechanics,
adiabatic theorems describe how a system initially prepared
in an eigenstate of a slowly varying Hamiltonian remains
close to the corresponding instantaneous eigenspace
~\cite{BornFock1928,Kato1950,JansenRuskaiSeiler2007}.
The required evolution time depends on the spectral gap
and the variation of the Hamiltonian along the path.
Consequently, nonuniform schedules can slow the evolution
near regions of small spectral separation. Discrete
adiabatic theorems extend this framework to families of
unitary operators~\cite{Costa2022QLSP,AnCostaBerry2025},
providing a natural description of parameter-dependent
quantum circuits and quantum walks.

A closely related problem arises for a family of finite
Markov kernels $\{\mathcal W_s\}_{s\in[0,1]}$ with
stationary distributions $\{\pi_s\}_{s\in[0,1]}$,
satisfying $\mathcal W_s\pi_s=\pi_s$. The stationary
distribution corresponds to the eigenvalue one, while
deviations from stationarity contract according to the
relaxation spectrum. One may therefore ask whether
an initial distribution $\pi_{s_i}$ can be transported
along a sequence of changing kernels while remaining
close to the instantaneous stationary distribution.

For a fixed Markov kernel, convergence towards
equilibrium is conventionally characterised by the
mixing time, which estimates the number of iterations
required to approach the stationary distribution within
a prescribed accuracy
~\cite{LevinPeresWilmer2017,AldousFill2002,
MontenegroTetali2006}. For reversible chains, the
corresponding bounds depend on the relaxation spectrum
and the initial distance from equilibrium. However,
fixed-kernel mixing estimates do not prescribe how the
parameter should change along an annealing path.
Classical adiabatic results address time-dependent
Markov chains and provide guarantees for terminal
preparation and, in stable adiabatic formulations,
tracking along the evolution
~\cite{Kovchegov2010,BradfordKovchegov2011,
bradford2012stable,bradford2025stable}.
Here, our objective is to retain the local contraction
and stationary-distribution increments explicitly in
a finite-step estimate suitable for schedule design.

Simulated annealing provides a natural application.
In this setting, Markov-chain Monte Carlo dynamics
approximately prepare Gibbs distributions while
the inverse temperature is gradually increased.
The annealing schedule, and often the number of
updates performed at each temperature, are chosen
heuristically. For Gibbs distributions, the motion
of the instantaneous stationary state is determined
by equilibrium energy fluctuations, while relaxation
depends on the spectral properties of the underlying
Markov kernel. Their interplay suggests schedules
adapted simultaneously to thermodynamic variation
and local relaxation.

We derive a discrete adiabatic tracking theorem for
finite time-inhomogeneous Markov chains that separates
the contraction of previously accumulated errors from
the error introduced by each stationary-distribution
increment. For reversible chains, the bound retains
the instantaneous spectral gaps and changes of
weighted norm. Its Gibbs specialisation relates
stationary-state variation to the energy variance
and leads to a geometric description of annealing
schedules. In particular, a local approximation of
the tracking bound defines an effective metric and
a corresponding constant-speed schedule. The resulting
variational optimum concerns this effective functional
rather than the exact finite-time preparation error.

The quantum counterpart offers the possibility of
improving the classical gap dependence through
spectral amplification. Szegedy's quantisation of
reversible Markov chains yields a quantum walk with
an eigenphase gap
$\Delta_w=\Theta(\sqrt{\delta})$, where $\delta$
is the classical spectral gap~\cite{Szegedy2004}.
Somma et al.~\cite{SommaBoixoBarnumKnill2008}
exploited this mechanism to establish a quadratic
improvement in inverse-gap scaling for quantum
simulated annealing. Their construction, however,
requires additional operations based on quantum
phase estimation and quantum-Zeno-type procedures
to approximately project onto successive coherent
Gibbs states. Subsequent works developed explicit
quantum-walk implementations and investigated their
heuristic performance
~\cite{Lemieux2020,sanders2020compilation}.
Despite these theoretical developments, the extent
to which the predicted quadratic gap advantage
translates into practical improvements for
combinatorial optimisation remains an open question.

This difficulty was highlighted in the earlier
numerical investigation of classical and quantum
optimisation heuristics
~\cite{costa2025assessing}, where the anticipated
quadratic quantum advantage was not observed for
the Sherrington--Kirkpatrick (SK) model. These
observations motivate a closer examination of
whether spectral amplification alone is sufficient
to improve the complexity of quantum annealing.
In particular, while
$\Delta_w=\Theta(\sqrt{\delta})$, the discrete
adiabatic error also depends on the variation
of the quantum-walk operator and the distribution
of evolution steps along the annealing path.
Thus, a generic schedule need not inherit the
quadratic gap improvement suggested by the
eigenphase spectrum.

We investigate this question using direct sequences
of temperature-dependent Szegedy walks, without
explicit intermediate projections onto coherent
Gibbs states. Applying the discrete quantum
adiabatic theorem
~\cite{Costa2022QLSP,AnCostaBerry2025},
we analyse the finite differences of the walk
operators and their dependence on the local
eigenphase gap. We identify sufficient regularity
and gap-adaptation conditions under which the
adiabatic runtime recovers inverse-linear
eigenphase-gap dependence, corresponding to
inverse-square-root dependence on the minimum
classical spectral gap. This establishes a
schedule-dependent mechanism for retaining the
spectral amplification in the adiabatic complexity
bound, without claiming a general practical
quantum advantage.

We complement the theoretical analysis with numerical experiments on the Sherrington--Kirkpatrick (SK) model. For classical Markov dynamics, we compare annealing schedules in terms of pathwise tracking errors and terminal Gibbs-state preparation. For quantum walks, we investigate gap-adapted schedules and their performance in reaching prescribed Gibbs-preparation and optimisation thresholds, examining the corresponding quantum-walk step requirements. These finite-size experiments illustrate the role of spectral information in schedule construction but do not establish the quantum regularity conditions uniformly across an asymptotic family of SK instances.

The paper is organized as follows.
\Cref{sec:main_find,sec:related_work} summarise
the main contributions and related work.
The classical framework, tracking theorem,
and Gibbs-state schedules are developed in
\cref{sec:prel,sec:disct_track,sec:applic_Gibbs},
with numerical tests in \cref{sec:Num_valid_cl}.
The discrete quantum adiabatic framework,
Szegedy walks, and gap-adapted schedules are
presented in \cref{sec:int_disc,sec:QW_S,sec:gap_adap}.
The corresponding quantum numerical results
are discussed in
\cref{sec:numerical_quantum_scaling}.
Finally, \cref{sec:conc} presents our conclusions.

\section{Main findings}
\label{sec:main_find}

Our main contribution is an explicit discrete tracking
bound that retains local relaxation and stationary-state
variation, together with a geometric schedule obtained
from its smooth Gibbs specialisation. The classical
bound applies directly to the discrete evolution,
whereas the metric describes a leading local tracking
envelope. The quantum result establishes sufficient
gap-scaling conditions under uniform regularity and
path-integral assumptions. Numerical tests examine
the resulting schedules without identifying their
finite-size performance with an asymptotic complexity
guarantee.

For a sequence of Markov kernels
$\{\mathcal W_j\}_{j=0}^T$ with stationary
distributions $\{\pi_j\}_{j=0}^T$, we consider
$p_{j+1}=\mathcal W_{j+1}p_j$, with $p_0=\pi_0$.
Defining $e_j=p_j-\pi_j$, we obtain the exact
recursion
\begin{equation}
e_{j+1}
=
\mathcal W_{j+1}e_j
+
\mathcal W_{j+1}(\pi_j-\pi_{j+1}).
\end{equation}
The first term describes the relaxation of previously
accumulated error, while the second represents the
error introduced by the variation of the instantaneous
stationary distribution.

For reversible chains, contraction is naturally
expressed in the instantaneous weighted norm
\begin{equation}
\|q\|_{\pi_k^{-1}}^2
=
\sum_{x\in\Omega}
\frac{q(x)^2}{\pi_k(x)}.
\end{equation}
Since this norm changes along the path, the tracking
bound depends on both the local spectral gaps
$\delta_k$ and norm-comparison factors $C_k$.
For lazy reversible kernels, introducing the
effective damping rate
\begin{equation}
e^{-\Gamma_k^{\mathrm{exact}}}
=
(1-\delta_k)C_{k-1},
\end{equation}
the condition $\Gamma_k^{\mathrm{exact}}>0$
ensures contraction of the previously accumulated
error after accounting for the change of norm.
The total tracking error additionally depends
on the stationary-state increments.

For Gibbs distributions,
\begin{equation}
\pi_\beta(x)
=
\frac{e^{-\beta E(x)}}{Z(\beta)},
\end{equation}
we have
\begin{equation}
\|\partial_\beta\pi_\beta\|_{\pi_\beta^{-1}}
=
\sqrt{\operatorname{Var}_{\pi_\beta}(E)}.
\end{equation}
This relation leads to a local tracking metric
whose density is proportional to
$(1-\delta)V/\delta$, where
$V(\beta)=\sqrt{\operatorname{Var}_{\pi_\beta}(E)}$.
In the small-gap regime, neglecting norm-comparison
corrections, the corresponding constant-speed
schedule takes the form
\begin{equation}
\dot{\beta}(t)
\propto
\frac{\delta(\beta)}
{\sqrt{\operatorname{Var}_{\pi_\beta}(E)}}.
\end{equation}
Constant metric speed minimises the associated
metric-energy functional for fixed endpoints
and duration. It does not necessarily minimise
the exact maximum tracking error or the final
total-variation distance.

We next consider the quantum analogue using the
Szegedy walk construction employed in
Ref.~\cite{Lemieux2020},
\begin{equation}
U_w(\beta)
=
R W^\dagger(\beta)S W(\beta).
\end{equation}
If $\delta(\beta)=1-\lambda_1(\beta)$ denotes
the spectral gap of the underlying reversible
Markov chain, the corresponding eigenphase gap
satisfies
\begin{equation}
\Delta_w(\beta)
=
\arccos[1-\delta(\beta)]
=
\Theta\!\left(\sqrt{\delta(\beta)}\right).
\end{equation}

This square-root spectral amplification does not,
by itself, guarantee a quadratic improvement
in discrete adiabatic complexity. The discrete
quantum adiabatic theorem
~\cite{Costa2022QLSP,AnCostaBerry2025}
also depends on the first- and second-order
finite differences of the walk operator,
$D^{(1)}U_w$ and $D^{(2)}U_w$, controlled
by coefficients $c_1(s)$ and $c_2(s)$.

For the present construction, the first-order
variation satisfies
\begin{equation}
c_1(s)
\lesssim
2|\beta'(s)|
\|\partial_\beta W(\beta)\|,
\end{equation}
whereas the second-order bound additionally
involves $\beta''(s)$, $(\beta'(s))^2$,
and derivatives of $W(\beta)$.

We consider gap-adapted schedules designed
to satisfy
\begin{equation}
c_1(s)
\lesssim
C\Delta_w(s)^p,
\qquad 1<p<2,
\end{equation}
for a constant $C$ independent of the minimum
spectral gap. A corresponding schedule is
\begin{equation}
\dot{\beta}(s)
\propto
\frac{\Delta_w(\beta)^p}
{\|\partial_\beta W(\beta)\|}
=
\Theta\!\left(
\frac{\delta(\beta)^{p/2}}
{\|\partial_\beta W(\beta)\|}
\right),
\end{equation}
provided the denominator is nonzero and the
resulting schedule is well defined.

Under the uniform discrete,
spectral-comparison, quadrature, and path-integral
hypotheses of \cref{thm:quantum_gap_scaling},
including control of the second-order variation,
the sufficient fixed-accuracy number of quantum-walk
applications satisfies
\begin{equation}
T_{\rm Q}
=
O\left(\Delta_{w,\min}^{-1}\right)
=
O\left(\delta_{\min}^{-1/2}\right).
\end{equation}

This establishes a conditional quadratic
improvement in the inverse-gap dependence
relative to the classical inverse-gap
relaxation scale, rather than a general
reduction in total computational cost.
The schedule conditions are sufficient;
no necessity result for all possible
annealing schedules is claimed.

\section{Related Work}\label{sec:related_work}

\subsubsection{Classical adiabatic and tracking theorems}

Kovchegov~\cite{Kovchegov2010} and Bradford and Kovchegov~\cite{BradfordKovchegov2011} study classical adiabatic preparation for time-inhomogeneous Markov chains. They consider transition kernels gradually deformed from an initial kernel $P_{\mathrm{initial}}$ to a final kernel $P_{\mathrm{final}}$. They define the adiabatic time $T_\epsilon$ as the minimum number of steps required for the distribution produced by this inhomogeneous evolution to be within total-variation distance $\epsilon$ of the stationary distribution $\pi_f$ of $P_{\mathrm{final}}$, uniformly over the initial distribution. For a linear interpolation between the initial and final kernels, they show that choosing the total runtime $T$ such that
\begin{equation}
T \geq 
C \frac{t_{\mathrm{mix}}(\epsilon/2)^2}{\epsilon}
,    
\end{equation}
is sufficient to guarantee final total-variation error at most $\epsilon$ uniformly over the initial distribution. Here $C>0$ is an appropriate constant  and $t_{\mathrm{mix}}$ is the mixing time of the final kernel. This provides an upper bound on the minimal adiabatic time $T_{\epsilon}$. They further extend the analysis to smoother, entry-dependent interpolations and show how the scaling of the adiabatic time depends on the behavior of the interpolation near its endpoint. As an application, they study adiabatically varying Glauber dynamics for the Ising model, for which the stationary measures are Gibbs distributions associated with the instantaneous Hamiltonians. Their results provide a final-time guarantee expressed in terms of the mixing properties of the final Markov kernel. For reversible chains, inserting standard mixing-time estimates yields a sufficient runtime bound with quadratic inverse-gap dependence, up to logarithmic factors and dependence on the smallest stationary probability. The analysis prescribes a chosen interpolation between the endpoint kernels, but it does not derive a schedule adapted to the instantaneous gaps or to the local variation of the stationary distributions along the interpolation.

Stable adiabatic-time results go beyond terminal preparation. Bradford, Kovchegov, and Nguyen control tracking throughout linear interpolations~\cite{bradford2012stable}, and Bradford treats general continuous paths~\cite{bradford2025stable}. Their guarantees use global mixing-time and path-regularity quantities. Our \cref{thm:tracking_changing_norm} instead retains individual stationary increments, contraction factors, and norm changes in an instantaneous weighted-norm estimate, which also implies a total-variation bound. It applies to a prescribed discrete sequence without requiring a smooth interpolation. Reversibility is an additional assumption enabling the contraction argument; the result is not a generalization of all stable adiabatic-time theorems.

For smooth Gibbs paths, \cref{lemma:effective_metric} extracts the leading local tracking coefficient and uses its square to define a metric for schedule construction. This distinction concerns the information retained in the bound and its subsequent geometric use. A comparison of inverse-gap powers alone does not establish a uniform improvement over earlier runtime bounds: stationary-state variation, kernel-path regularity, norm comparisons, and state-space dependence must be accounted for on the same problem family.

\subsubsection{Thermodynamic geometry and minimum-dissipation protocols}

A related approach to the design of finite-time protocols is provided by thermodynamic geometry, which studies the dissipation generated when a system is driven by external controls. The analysis of such finite-time nonequilibrium processes naturally induces a Riemannian metric that quantifies the thermodynamic cost of moving between nearby equilibrium states. Early formulations introduced thermodynamic length through the Hessians of thermodynamic potentials, while later microscopic formulations connected this geometry to Fisher information and equilibrium fluctuations \cite{Crooks2007}. Subsequently in \cite{sivak2012thermodynamic} the authors derived a dynamical thermodynamic metric directly from linear-response theory.

Consider a system with energy $E(x,\boldsymbol{\lambda})$, where $x$ denotes a microstate and $\boldsymbol{\lambda}$ is a collection of externally controlled parameters. At fixed bath inverse temperature $\beta$, the equilibrium distribution is
\begin{equation}
\pi(x\mid\boldsymbol{\lambda}) =
\frac{e^{-\beta E(x,\boldsymbol{\lambda})}}{
Z(\boldsymbol{\lambda})},   
\end{equation}
The generalized force conjugate to the control parameter $\lambda_i$ is defined by
\begin{equation}
X_i = -\frac{\partial E}{\partial\lambda_i}.   
\end{equation}
When the control parameters vary in time, the system generally does not remain exactly in the instantaneous equilibrium state. The actual average generalized force therefore differs from the equilibrium average evaluated at the same instantaneous parameter value. This force lag produces an excess power beyond the power required in a quasistatic transformation.

Within linear-response theory, the force lag is determined by equilibrium time correlations of the conjugate forces. Define
\begin{equation}
\Sigma_{ij}^{(\boldsymbol{\lambda})}(t)=
\left\langle
\delta X_j(0)\delta X_i(t)
\right\rangle_{\boldsymbol{\lambda}}.    
\end{equation}
The generalized friction tensor is then
\begin{equation}
\zeta_{ij}(\boldsymbol{\lambda})
= \beta
\int_0^\infty
\Sigma_{ij}^{(\boldsymbol{\lambda})}(t)\,dt. \end{equation}
This expression is often referred to as the Kirkwood formulation of the friction tensor. It combines the magnitude of equilibrium fluctuations with the timescale over which those fluctuations decay. The friction is therefore large when the conjugate forces fluctuate strongly, when the system relaxes slowly, or both.

For protocols that vary slowly relative to the relaxation timescale, the leading excess power takes the local quadratic form
\begin{equation}
P_{\mathrm{ex}}(t)
= \dot{\boldsymbol{\lambda}}(t)^{\mathsf T}
\boldsymbol{\zeta}(\boldsymbol{\lambda}(t))
\dot{\boldsymbol{\lambda}}(t).  
\end{equation}
The corresponding excess work is
\begin{equation}
W_{\mathrm{ex}} = \int_0^{\mathcal T}
P_{\mathrm{ex}}(t)\,dt
= \int_0^{\mathcal T}
\dot{\boldsymbol{\lambda}}^{\mathsf T}
\boldsymbol{\zeta}(\boldsymbol{\lambda})
\dot{\boldsymbol{\lambda}}\,dt.    
\end{equation}

The excess work is the difference between the work performed during the finite-time protocol and the work that would be required during an infinitely slow, quasistatic transformation. If the system is allowed to equilibrate completely after the protocol, the excess work is ultimately dissipated as heat.

The friction tensor is symmetric and positive semidefinite under the equilibrium reversibility assumptions. Wherever it is positive definite, it defines a Riemannian metric on the control-parameter space; otherwise, it defines a degenerate quadratic form.
\begin{equation}
\mathcal L_{\zeta} =
\int_0^{\mathcal T} \sqrt{
\dot{\boldsymbol{\lambda}}^{\mathsf T}
\boldsymbol{\zeta}(\boldsymbol{\lambda})
\dot{\boldsymbol{\lambda}}
}\,dt.    
\end{equation}
For a fixed path and duration $\mathcal T$, the excess work obeys
\begin{equation}
W_{\mathrm{ex}} \geq
\frac{\mathcal L_{\zeta}^{\,2}}{\mathcal T}.   \end{equation}
This lower bound follows from the Cauchy–Schwarz inequality. It is saturated when the protocol is traversed at constant thermodynamic speed, or equivalently when the excess power is constant. For a single control parameter $\lambda$, the excess power reduces to
\begin{equation}
P_{\mathrm{ex}}(t) =
\zeta(\lambda(t))\dot{\lambda}(t)^2.
\end{equation}
The constant-excess-power condition then gives
\begin{equation}
 \dot{\lambda}_{\mathrm{opt}}(t) \propto
\frac{1}{\sqrt{\zeta(\lambda(t))}}. 
\end{equation}
The minimum-dissipation protocol therefore proceeds slowly in regions of large thermodynamic friction and more rapidly in regions of small friction. When the friction tensor is positive definite and the path can vary in a multidimensional control space, smooth minimum-dissipation protocols follow affinely parametrized geodesics of the friction metric.

The friction tensor can also be decomposed into static and dynamical contributions. In \cite{sivak2012thermodynamic} the authors express it in terms of the Fisher-information matrix and an integral relaxation-time matrix. For a Gibbs family, the Fisher-information matrix is
\begin{equation}
I_{ij}(\boldsymbol{\lambda}) =
\left\langle
\frac{\partial\log\pi}{\partial\lambda_i}
\frac{\partial\log\pi}{\partial\lambda_j}
\right\rangle_{\boldsymbol{\lambda}}.  
\end{equation}
When the energy is linear in the control parameters, or under other conditions discussed in \cite{sivak2012thermodynamic}, this becomes
\begin{equation}
I_{ij}(\boldsymbol{\lambda}) =
\beta^2
\left\langle
\delta X_i\delta X_j
\right\rangle_{\boldsymbol{\lambda}}.  
\end{equation}
The friction tensor may consequently be written schematically as $\boldsymbol{\zeta}\sim \boldsymbol{I}\circ\boldsymbol{\tau}
$, where $\circ$ denotes the entrywise product and $\boldsymbol{\tau}$ contains the integrated relaxation times. Fisher information measures the local statistical distinguishability of neighboring equilibrium states, whereas the relaxation-time factor measures how fast the dynamics responds to a change in the controls.

For the inverse-temperature Gibbs family $\pi_\beta(x)$, the score function is $\partial_\beta\log\pi_\beta(x) =
-\left( E(x)-\langle E\rangle_\beta
\right)$. The Fisher information in the inverse-temperature direction is therefore
\begin{equation}
I_{\beta\beta} = \sum_x \pi_\beta(x) \left( \partial_\beta\log\pi_\beta(x)
\right)^2 = \operatorname{Var}_{\pi_\beta}(E).
\end{equation}
Thus, the energy variance determines the static thermodynamic geometry of the Gibbs curve parametrized by inverse temperature. Including the dynamics introduces an integrated energy-correlation time,
\begin{equation}
\tau_E(\beta) = \frac{ \displaystyle \int_0^\infty \left\langle
\delta E(0)\delta E(t) \right\rangle_{\pi_\beta} \,dt}{
\operatorname{Var}_{\pi_\beta}(E)
}.    
\end{equation}
Up to prefactors that depend on the precise convention used to treat temperature as a control coordinate, the friction in the inverse-temperature direction has the form
\begin{equation}
\zeta_{\beta\beta}(\beta) \sim
\operatorname{Var}_{\pi_\beta}(E)\tau_E(\beta).   
\end{equation}
The corresponding constant-excess-power schedule is
\begin{equation}
\dot{\beta}_{\mathrm{opt}}(t) \propto
\frac{1}{\sqrt{ \operatorname{Var}_{\pi_{\beta(t)}}(E)
\tau_E(\beta(t))
}}. 
\end{equation}
This schedule allocates more time to regions where the Gibbs distribution changes fast or where energy fluctuations relax slowly. The construction provides a minimum-dissipation principle within the linear-response regime; its guarantee concerns excess work rather than a direct bound on the distance between the evolving state and the instantaneous equilibrium distribution.

\subsubsection{Quantum adiabatic theorems and quantized Markov chains}

The quantum analogue is based on replacing a reversible Markov kernel by a unitary quantum walk whose spectrum encodes that of the classical chain. For a reversible Markov kernel $\mathcal W$ with eigenvalues
\begin{equation}
1=\lambda_0>\lambda_1\geq\lambda_2\geq\cdots,
\end{equation}
the Szegedy construction associates a quantum walk whose nontrivial eigenvalues occur in conjugate pairs $e^{\pm i\theta_k}$, with $\cos\theta_k=\lambda_k$, up to convention-dependent factors in the definition of the walk~\cite{Szegedy2004}. If $\delta=1-\lambda_1$ denotes the spectral gap of the Markov chain, the corresponding eigenphase gap satisfies
\begin{equation}
\Delta_w = \arccos(1-\delta)= \Theta(\sqrt{\delta}),
\end{equation}
for $\delta\ll1$. This square-root relation is the basic spectral amplification associated with quantized Markov chains.

This mechanism underlies the quantum simulated-annealing algorithm in~\cite{SommaBoixoBarnumKnill2008}, which was introduced for combinatorial optimization. Their construction combines quantized Markov chains with phase-estimation and quantum-Zeno-type procedures to approximately project onto a sequence of coherent Gibbs states. Its dependence on the minimum gap of the classical stochastic matrices is proportional to
\begin{equation}
\frac{1}{\sqrt{\delta_{\min}}},
\end{equation}
compared with the $1/\delta_{\min}$ dependence associated with classical relaxation. The quadratic improvement therefore originates from the square-root eigenphase gap, while phase estimation and repeated approximate projections are used to follow the Gibbs state along the annealing path. Related work ~\cite{SommaBoixo2013} established a general spectral-gap amplification theorem requiring frustration-free Hamiltonians. In the present setting, this property is naturally satisfied by the parent Hamiltonian associated with a reversible Markov chain, even when the original combinatorial optimization Hamiltonian is frustrated.

Several subsequent works investigated the practical realization of quantum-walk-based optimization algorithms. In particular, Ref.~\cite{Lemieux2020} developed explicit circuits for the Szegedy quantization of Metropolis--Hastings dynamics and numerically studied quantum-walk heuristics for discrete optimization, finding evidence of polynomial improvements in selected heuristic settings. However, these studies did not directly demonstrate the predicted quadratic improvement in inverse-gap scaling for concrete families of combinatorial optimization problems. To our knowledge, numerical evidence establishing this quadratic scaling advantage for such problems remains lacking.

A separate line of work concerns discrete quantum adiabatic theorems for slowly varying families of unitary operators. Instead of a continuous Hamiltonian evolution, one considers a product of unitaries $U(s_j)$ sampled along a path and studies whether an initial eigenstate remains close to the corresponding instantaneous eigenspace. The resulting bounds depend both on the instantaneous eigenphase separation and on the finite differences of the unitary family~\cite{Costa2022QLSP,AnCostaBerry2025}. Such results provide a framework for understanding direct products of slowly varying quantum walks without introducing an explicit projection onto the instantaneous state after each step.

For temperature-dependent Szegedy walks, the relation
\begin{equation}
\Delta_w(\beta)
=
\Theta\!\left(\sqrt{\delta(\beta)}\right),
\end{equation}
suggests the possibility of retaining the quadratic spectral advantage during an annealing process. Unlike phase-estimation- or quantum-Zeno-based constructions, however, a direct sequence of varying unitary walks has no mixing mechanism that automatically drives the state towards the instantaneous Gibbs state. Its behaviour must instead be understood through adiabatic following, where both the eigenphase gap and the variation of the walk along the path become relevant.

\section*{Part A: Classical adiabatic tracking}

\section{Preliminaries and Technical Setup}\label{sec:prel}

Let $\Omega$ be a finite state space with $|\Omega|=N$. We represent
probability distributions as column vectors and use column-stochastic
transition matrices. Thus,
\begin{equation}
\mathcal W_{yx} = \Pr(X_{t+1}=y\mid X_t=x), \qquad
\sum_{y\in\Omega}\mathcal W_{yx}=1,
\end{equation}
and probability distributions evolve according to $p_{t+1} = \mathcal Wp_t$. To compare a probability distribution with a strictly positive reference
distribution $\pi$, it is useful to introduce an inner product weighted by
the inverse of $\pi$. Define
\begin{equation}
D_\pi\coloneqq \operatorname{diag}
\bigl(\pi(x)\bigr)_{x\in\Omega}.
\end{equation}
For $u,v\in\mathbb R^{|\Omega|}$, define
\begin{equation}
\langle u,v\rangle_{\pi^{-1}}
\coloneqq
 u^\top D_\pi^{-1}v = \sum_{x\in\Omega}
\frac{u(x)v(x)}{\pi(x)}.
\end{equation}
The associated norm is
\begin{equation}
\|u\|_{\pi^{-1}} \coloneqq
\sqrt{ \langle u,u\rangle_{\pi^{-1}}} = \left( \sum_{x\in\Omega} \frac{u(x)^2}{\pi(x)}
\right)^{1/2}.
\end{equation}

On the other hand, sampling guarantees are often stated in terms of the total
variation distance,
\begin{equation}
\label{eq:TV_er}
\|p-\pi\|_{\mathrm{TV}} = \frac{1}{2}
    \sum_{x\in\Omega}|p(x)-\pi(x)|.
\end{equation}
However, this can be linked with the weighted norm as follows,
\begin{align}
\sum_{x\in\Omega}|p(x)-\pi(x)|
&= \sum_{x\in\Omega} \sqrt{\pi(x)}
\frac{|p(x)-\pi(x)|}{\sqrt{\pi(x)}} \nonumber\\
&\leq \left( \sum_{x\in\Omega}\pi(x) \right)^{1/2}
\left(\sum_{x\in\Omega} \frac{(p(x)-\pi(x))^2}{\pi(x)}
    \right)^{1/2},
\end{align}
where we have used the Cauchy-Schwarz inequality. Since $\sum_{x\in\Omega}\pi(x)=1$, it follows that
\begin{equation}
\|p-\pi\|_{\mathrm{TV}} \leq \frac{1}{2}
\|p-\pi\|_{\pi^{-1}}.
\end{equation}
Thus, any bound in the weighted norm $\|\cdot\|_{\pi^{-1}}$ immediately implies a standard total-variation sampling guarantee.

We now introduce the main definitions used throughout the classical analysis.
       
\begin{definition}[Stationary distribution]
A probability vector $\pi$ is stationary for $\mathcal{W}$ if
\begin{equation}
 \mathcal{W} \pi=\pi .    
\end{equation}
\end{definition}

\begin{definition}[Zero-mass subspace]
Let $\Omega$ be a finite state space with $|\Omega|=N$, and let $\mathbf 1\in\mathbb R^N$ denote the all-ones vector. We define the zero-mass subspace by
\begin{equation}
\mathcal H_0 \coloneqq \left\{ q\in\mathbb R^N:
\mathbf 1^\top q=0\right\}.
\end{equation}
Equivalently, $q\in\mathcal H_0$ if and only if
\begin{equation}
    \sum_{x\in\Omega}q(x)=0.
\end{equation}
\end{definition}

The subspace $\mathcal H_0$ contains the difference between any two probability distributions on $\Omega$. Indeed, if $p$ and $q$ are probability distributions, then
\begin{align}
\mathbf 1^\top(p-q) &= \mathbf 1^\top p-\mathbf 1^\top q \nonumber\\ 
&= 1-1 \nonumber\\
&= 0.
\end{align}
Hence, $p - q\in\mathcal H_0$.

\begin{lemma}[Contraction gap for reversible chains]
\label{lemma:contractiongap}
Let $\mathcal W$ be a reversible Markov kernel with stationary distribution
$\pi$. Assume that $\mathcal W$ is irreducible and aperiodic, so that
$\pi(x)>0$ for all $x\in\Omega$ and the eigenvalue $1$ is simple. Let $ 1=\lambda_0,\lambda_1,\ldots,\lambda_{N-1},$ be the eigenvalues of $\mathcal W$, and define $\lambda_*\coloneqq \max_{j\geq 1}|\lambda_j|$. Then, for every $r\in\mathcal H_0$,
\begin{equation}
\|\mathcal W r\|_{\pi^{-1}} \leq \lambda_*\,\|r\|_{\pi^{-1}}.
\end{equation}
In particular, if $p$ is any probability distribution and $r=p-\pi$, then
$r\in\mathcal H_0$ and
\begin{equation}
\|\mathcal W(p-\pi)\|_{\pi^{-1}} \leq \lambda_*\,\|p-\pi\|_{\pi^{-1}}.
\end{equation}
Hence the contraction gap in the norm $\|\cdot\|_{\pi^{-1}}$ is
\begin{equation}
\delta = 1-\lambda_* =1-\max_{1\leq j\leq N-1}|\lambda_j|.
\end{equation}
\end{lemma}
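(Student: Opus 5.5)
The plan is to symmetrise $\mathcal W$ with respect to $\pi$ and then apply the spectral theorem. Reversibility in the column-stochastic convention reads $\mathcal W_{yx}\pi(x)=\mathcal W_{xy}\pi(y)$, that is, $\mathcal W D_\pi = D_\pi \mathcal W^\top$. From this I will show that $\mathcal W$ is self-adjoint with respect to $\langle\cdot,\cdot\rangle_{\pi^{-1}}$:
\begin{equation}
\langle u,\mathcal W v\rangle_{\pi^{-1}} = u^\top D_\pi^{-1}\mathcal W v = u^\top \mathcal W^\top D_\pi^{-1} v = \langle \mathcal W u, v\rangle_{\pi^{-1}},
\end{equation}
where the middle equality uses $D_\pi^{-1}\mathcal W = \mathcal W^\top D_\pi^{-1}$. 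This identity follows from the reversibility relation by multiplying on both sides by $D_\pi^{-1}$.

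Equivalently, the matrix $S=D_\pi^{-1/2}\mathcal W D_\pi^{1/2}$ is real symmetric and has the same eigenvalues $\lambda_j$ as $\mathcal W$. Hence $\mathcal W$ admits an eigenbasis $\{v_j\}_{j=0}^{N-1}$ that is orthonormal in $\langle\cdot,\cdot\rangle_{\pi^{-1}}$, with $\mathcal W v_j=\lambda_j v_j$. Since the eigenvalue $1$ is simple and $\mathcal W\pi=\pi$, we may take $v_0=\pi$, which is normalised because $\|\pi\|_{\pi^{-1}}^2=\sum_x\pi(x)=1$.

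The key identification is that $\mathcal H_0$ equals the $\pi^{-1}$-orthogonal complement of $\pi$:
\begin{equation}
\langle \pi, r\rangle_{\pi^{-1}} = \sum_x \frac{\pi(x) r(x)}{\pi(x)} = \mathbf 1^\top r .
\end{equation}
So $r\in\mathcal H_0$ if and only if $r\perp_{\pi^{-1}} v_0$. Such an $r$ therefore expands as $r=\sum_{j\ge1} a_j v_j$. This gives $\mathcal W r=\sum_{j\ge1}\lambda_j a_j v_j$, and by Parseval
\begin{equation}
\|\mathcal W r\|_{\pi^{-1}}^2=\sum_{j\ge1}\lambda_j^2 a_j^2 \le \lambda_*^2\sum_{j\ge1}a_j^2=\lambda_*^2\|r\|_{\pi^{-1}}^2 .
\end{equation}

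The second statement then follows at once, since $p-\pi\in\mathcal H_0$ was shown above for any two probability distributions. The formula for $\delta$ is just the definition $1-\lambda_*$. Irreducibility and aperiodicity guarantee $\lambda_*<1$, so the gap is positive. Aperiodicity rules out $-1$, and simplicity of the eigenvalue $1$ rules out $\lambda_j=1$ for $j\ge1$.

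The main point requiring care is the self-adjointness step: one must confirm that the column-stochastic reversibility convention gives $D_\pi^{-1}\mathcal W=\mathcal W^\top D_\pi^{-1}$ and not its transpose variant. Everything else is the standard spectral argument.
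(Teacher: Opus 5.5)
Your proposal is correct and follows the same route as the paper's proof: self-adjointness of $\mathcal W$ in $\langle\cdot,\cdot\rangle_{\pi^{-1}}$, an orthonormal eigenbasis containing $\pi$, the identification of $\mathcal H_0$ with the orthogonal complement of $\pi$, and Parseval. You also derive $D_\pi^{-1}\mathcal W=\mathcal W^\top D_\pi^{-1}$ from detailed balance, with the symmetrisation $D_\pi^{-1/2}\mathcal W D_\pi^{1/2}$, which fills in a step the paper only asserts; and your indexing ($v_0=\pi$, sum over $j\ge 1$) matches the statement, whereas the paper's proof uses $v_1=\pi$ and sums over $j\ge 2$.
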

Proof of the Lemma above is presented in \cref{proof:contractiongap}.

\section{A Basic Discrete Adiabatic Tracking Theorem}\label{sec:disct_track}

We now prove the simplest discrete adiabatic tracking theorem.

\begin{theorem}[Discrete adiabatic tracking with changing norms]\label{thm:tracking_changing_norm}
Let $\{\mathcal W_j\}_{j=0}^T$ be a sequence of irreducible,
aperiodic, reversible column-stochastic Markov kernels on a finite state space $\Omega$, with stationary distributions $\pi_j$ satisfying $\mathcal W_j\pi_j=\pi_j$. For each $j$, define
\begin{equation}
\|q\|_j = \left( \sum_{x\in\Omega}
\frac{q(x)^2}{\pi_j(x)}
\right)^{1/2},
\end{equation}
and let
\begin{equation}
\delta_j = 1-\max_{k\geq1}
|\lambda_k(\mathcal W_j)|
\end{equation}
denote the absolute spectral gap. Then, for every $q\in\mathcal H_0$,
\begin{equation}
\|\mathcal W_jq\|_j
\leq
(1-\delta_j)\|q\|_j.
\end{equation}
Define the norm-comparison constants
\begin{equation}
\label{eq:norm_comp}
C_j \coloneqq \sup_{q\in\mathcal H_0,\;q\neq 0}
\frac{\|q\|_{j+1}}{\|q\|_j},
    \qquad
    j=0,\ldots,T-1.    
\end{equation}

Then, when $p_0=\pi_0$ and the update probability distribution is given by $p_j=\mathcal{W}_jp_{j-1}$, for $j=1,\cdots, T$, we have
\begin{equation}
 \|p_T-\pi_T\|_T \leq \sum_{j=1}^T \left[
\left( \prod_{r=j}^T(1-\delta_r) \right)
\left( \prod_{r=j}^{T-1}C_r \right)\right]
\|\pi_j-\pi_{j-1}\|_j .    
\end{equation}
which is further bounded by
\begin{equation}
\label{eq:first_b}
\|p_T-\pi_T\|_T \leq \sum_{j=1}^T
\exp\!\left(-\sum_{r=j}^T\delta_r +
\sum_{r=j}^{T-1}\log C_r
\right) \|\pi_j-\pi_{j-1}\|_j .    
\end{equation}
\end{theorem}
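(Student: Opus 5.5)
The plan is to run the error recursion from \cref{sec:main_find}, adapted to the indexing of the statement, and unroll it while tracking how the norm changes between steps. The first claim, $\|\mathcal W_jq\|_j\le(1-\delta_j)\|q\|_j$ for $q\in\mathcal H_0$, is exactly \lem{contractiongap} applied to $\mathcal W_j$ with $\pi=\pi_j$, since $\lambda_*=1-\delta_j$.

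For the recursion, set $e_j\coloneqq p_j-\pi_j$, so that $e_0=0$. Using $\mathcal W_j\pi_j=\pi_j$ we get
\begin{equation}
e_j=\mathcal W_jp_{j-1}-\pi_j=\mathcal W_j e_{j-1}+\mathcal W_j(\pi_{j-1}-\pi_j).
\end{equation}
Both $e_{j-1}$ and $\pi_{j-1}-\pi_j$ are differences of probability vectors, so they lie in $\mathcal H_0$. Since $\mathcal W_j$ is column-stochastic, it also preserves $\mathcal H_0$, and hence every iterate stays in $\mathcal H_0$. Taking $\|\cdot\|_j$ norms and applying the contraction lemma to each term gives
\begin{equation}
\|e_j\|_j\le(1-\delta_j)\|e_{j-1}\|_j+(1-\delta_j)\|\pi_j-\pi_{j-1}\|_j .
\end{equation}
The remaining issue is that $\|e_{j-1}\|$ is naturally controlled in the norm $\|\cdot\|_{j-1}$ rather than $\|\cdot\|_j$. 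The definition of $C_{j-1}$ in \eqref{eq:norm_comp} handles this: because $e_{j-1}\in\mathcal H_0$, we have $\|e_{j-1}\|_j\le C_{j-1}\|e_{j-1}\|_{j-1}$. This yields the one-step inequality
\begin{equation}
a_j\le(1-\delta_j)C_{j-1}a_{j-1}+(1-\delta_j)b_j,
\end{equation}
where $a_j=\|e_j\|_j$ and $b_j=\|\pi_j-\pi_{j-1}\|_j$.

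Next I unroll this inequality by induction from $a_0=0$. Define $G_{j}\coloneqq\prod_{r=j}^T(1-\delta_r)\prod_{r=j}^{T-1}C_r$. The induction hypothesis is that $a_m\le\sum_{j=1}^m\prod_{r=j}^m(1-\delta_r)\prod_{r=j}^{m-1}C_r\,b_j$. Substituting it into the one-step inequality reproduces the same form at $m+1$. The new source term $b_{m+1}$ enters with factor $(1-\delta_{m+1})$ and an empty product of $C$'s, and every older term picks up one extra factor $(1-\delta_{m+1})C_m$. All factors are nonnegative, so the inequalities are preserved under multiplication. Setting $m=T$ gives the first bound.

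For \eqref{eq:first_b}, use $1-\delta_r\le e^{-\delta_r}$, which holds because $\delta_r\in(0,1]$. Write $C_r=e^{\log C_r}$, noting $C_r>0$ since the norms are equivalent on a finite space. Taking the product of these exponentials gives the stated exponential form.

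There is no deep obstacle. The points that need care are the indexing, where the $C$ product stops at $T-1$ while the $(1-\delta)$ product runs to $T$, and checking that the norm change is applied to $e_{j-1}$ but not to the source term. The source term is already measured in $\|\cdot\|_j$, which is exactly why it carries no factor $C_{j-1}$.
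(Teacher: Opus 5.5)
Your proposal is correct and follows the paper's proof essentially step for step. Both arguments use the same error recursion $e_j=\mathcal W_j e_{j-1}+\mathcal W_j(\pi_{j-1}-\pi_j)$, contraction in $\|\cdot\|_j$, the factor $C_{j-1}$ applied only to the propagated error, unrolling, and finally $1-\delta_r\le e^{-\delta_r}$. Your explicit induction, which notes that the multipliers $(1-\delta_{m+1})C_m$ are nonnegative, is a slightly cleaner version of the paper's unrolling; the paper writes out $E_1,E_2,E_3$ and reads off the general pattern.
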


Notice that the theorem above, where the proof is presented in \cref{proof:track}, has the same qualitative structure as a discrete adiabatic theorem: local changes in the instantaneous invariant state are damped by the contraction accumulated from the local gaps along the remaining path.

If all local gaps are uniformly bounded below and the norm-comparasion constants are uniformly bounded above, the preceding theorem gives a global estimate.

\begin{corollary}[Uniform-gap tracking bound with changing norms]
\label{cor:uniform_gap_changing_norm}
Assume the hypotheses of \cref{thm:tracking_changing_norm}. Suppose, in
addition, that $\delta_j\geq\delta_{\min}>0$ and $C_j\leq C_{\max}$ for all relevant indices. Then
\begin{equation}
\label{eq:uniform_gap_bound}
\|p_T-\pi_T\|_T
\leq
\sum_{j=1}^T
e^{-(T-j+1)\delta_{\min}}
C_{\max}^{T-j}
\|\pi_j-\pi_{j-1}\|_j.
\end{equation}
\end{corollary}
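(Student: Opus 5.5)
The plan is to derive the corollary directly from the first bound of \cref{thm:tracking_changing_norm}, by estimating each weight term by term. For fixed $j\in\{1,\dots,T\}$ the theorem gives the coefficient
\begin{equation}
a_j \coloneqq \left(\prod_{r=j}^T(1-\delta_r)\right)\left(\prod_{r=j}^{T-1}C_r\right)
\end{equation}
multiplying $\|\pi_j-\pi_{j-1}\|_j$. Since $\|\pi_j-\pi_{j-1}\|_j\ge 0$, it suffices to show
\begin{equation}
a_j\le e^{-(T-j+1)\delta_{\min}}C_{\max}^{T-j}
\end{equation}
for every $j$ and then sum.

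The two products are handled separately, and both are products of nonnegative factors. For the gap product, each $\delta_r$ is an absolute spectral gap, so $0\le 1-\delta_r\le 1$. The elementary inequality $1-x\le e^{-x}$, together with $\delta_r\ge\delta_{\min}$, gives $0\le 1-\delta_r\le e^{-\delta_r}\le e^{-\delta_{\min}}$. The product has $T-j+1$ factors, indexed by $r=j,\dots,T$, so it is at most $e^{-(T-j+1)\delta_{\min}}$. Equivalently, one can start from \eqref{eq:first_b} and bound $-\sum_{r=j}^T\delta_r\le-(T-j+1)\delta_{\min}$.

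For the norm-comparison product, each $C_r$ is a supremum of ratios of norms, so $C_r\ge 0$; in fact $C_r>0$ because all $\pi_j$ are strictly positive. With $C_r\le C_{\max}$, the product over the $T-j$ indices $r=j,\dots,T-1$ is at most $C_{\max}^{T-j}$. For $j=T$ this product is empty and equals $1=C_{\max}^0$, so the stated exponent is consistent.

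Multiplying the two nonnegative bounds gives the bound on $a_j$, and summing over $j$ yields \eqref{eq:uniform_gap_bound}. There is no real obstacle here. The only points needing care are the index counting, which gives $T-j+1$ gap factors against $T-j$ comparison factors, and the nonnegativity of $1-\delta_r$ and $C_r$, which is what justifies bounding the products factor by factor.
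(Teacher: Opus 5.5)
Your proof is correct and matches the paper's argument. The paper starts from the exponential form \eqref{eq:first_b} and bounds the exponent using $\sum_{r=j}^T\delta_r\ge(T-j+1)\delta_{\min}$ and $\sum_{r=j}^{T-1}\log C_r\le(T-j)\log C_{\max}$; your factor-by-factor bound on the product form, using $1-\delta_r\le e^{-\delta_{\min}}$ and nonnegativity, is the same estimate, as you yourself note.
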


The uniform-gap estimate can be further simplified by bounding the variation of the stationary distribution at each step by its maximal one-step value, leading to the following corollary.

\begin{corollary}[Bound in terms of the maximal one-step variation]
\label{cor:maximal_one_step_variation}
Under the hypotheses of
\cref{cor:uniform_gap_changing_norm}, define $\omega_\pi
\coloneqq
\max_{1\leq j\leq T}
\|\pi_j-\pi_{j-1}\|_j$, then
\begin{equation}
\label{eq:uniform_geometric_sum}
\|p_T-\pi_T\|_T
\leq
\omega_\pi e^{-\delta_{\min}}
\sum_{m=0}^{T-1}
\left(
C_{\max}e^{-\delta_{\min}}
\right)^m.
\end{equation}
If $C_{\max}e^{-\delta_{\min}}\neq1$, then
\begin{equation}
\label{eq:uniform_finite_bound}
\|p_T-\pi_T\|_T
\leq \omega_\pi e^{-\delta_{\min}}
\frac{ 1-\left(C_{\max}e^{-\delta_{\min}}\right)^T}{ 1-C_{\max}e^{-\delta_{\min}}
}.
\end{equation}
If $C_{\max}e^{-\delta_{\min}}=1$, then
\begin{equation}
\label{eq:uniform_critical_bound}
\|p_T-\pi_T\|_T
\leq
T\omega_\pi e^{-\delta_{\min}}.
\end{equation}
\end{corollary}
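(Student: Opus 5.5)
The plan is to start from \cref{cor:uniform_gap_changing_norm}, which we may assume, and bound each summand separately. In \eqref{eq:uniform_gap_bound} every increment satisfies $\|\pi_j-\pi_{j-1}\|_j\leq\omega_\pi$ by the definition of $\omega_\pi$. All other factors are nonnegative, so we can pull $\omega_\pi$ out of the sum:
\begin{equation*}
\|p_T-\pi_T\|_T\leq \omega_\pi\sum_{j=1}^T e^{-(T-j+1)\delta_{\min}}C_{\max}^{T-j}.
\end{equation*}
Nonnegativity holds because $C_{\max}\geq C_j\geq 0$, being a supremum of ratios of norms.

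Next we reindex with $m=T-j$, so $m$ runs from $0$ to $T-1$ as $j$ runs from $T$ down to $1$. The exponent becomes $-(m+1)\delta_{\min}$. Factoring out $e^{-\delta_{\min}}$, the summand is $e^{-\delta_{\min}}\bigl(C_{\max}e^{-\delta_{\min}}\bigr)^m$. This gives \eqref{eq:uniform_geometric_sum} directly.

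For the closed forms, write $x=C_{\max}e^{-\delta_{\min}}$ and use the finite geometric series identity $\sum_{m=0}^{T-1}x^m=(1-x^T)/(1-x)$ for $x\neq1$. This yields \eqref{eq:uniform_finite_bound}. When $x=1$ each term equals $1$, so the sum is $T$, which yields \eqref{eq:uniform_critical_bound}.

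No step is a real obstacle. The only points needing care are the reindexing, namely checking that the exponent of $e^{-\delta_{\min}}$ is $m+1$ while that of $C_{\max}$ is $m$, and the nonnegativity that justifies replacing each increment by $\omega_\pi$.
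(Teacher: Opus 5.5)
Your proposal is correct and follows the paper's own proof: bound each increment by $\omega_\pi$ in the uniform-gap bound of the preceding corollary, reindex with $m=T-j$ after factoring out $e^{-\delta_{\min}}$, and sum the finite geometric series in the two cases. Your explicit note that every factor is nonnegative (since $C_{\max}\geq C_j>0$), which justifies replacing each increment by $\omega_\pi$, is a detail the paper leaves implicit.
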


The proofs of the previous Corollaries can be found in \cref{app:proof_coroll5_6_7}. The geometric-series bound becomes particularly transparent when the combined effect of contraction and norm variation is strictly damping, i.e., the error generated at an earlier step is progressively suppressed as the evolution continues, which motivates the following effective-damping formulation.

\begin{corollary}[Stable effective-damping regime]
\label{cor:stable_effective_damping}
Under the hypotheses of
\cref{cor:uniform_gap_changing_norm}, suppose that $C_{\max}e^{-\delta_{\min}}<1$, or equivalently $\log C_{\max}<\delta_{\min}$. 
Define the uniform effective damping rate by
\begin{equation}
\label{eq:damp_rate}
\Gamma \coloneqq \delta_{\min}-\log C_{\max}>0.
\end{equation}
Then
\begin{equation}
\label{eq:effective_damping_finite_bound}
\|p_T-\pi_T\|_T \leq \omega_\pi e^{-\delta_{\min}} \frac{1-e^{-T\Gamma}}{1-e^{-\Gamma}}.
\end{equation}
In particular,
\begin{equation}
\label{eq:effective_damping_uniform_bound}
\|p_T-\pi_T\|_T \leq \omega_\pi \frac{e^{-\delta_{\min}}}{1-e^{-\Gamma}}.
\end{equation}
For small $\Gamma>0$, this gives
\begin{equation}
\label{eq:effective_damping_scaling}
\|p_T-\pi_T\|_T=O\left(\frac{\omega_\pi}{\Gamma}\right)=O\left(\frac{\omega_\pi}{\delta_{\min}-\log C_{\max}}\right).
\end{equation}
\end{corollary}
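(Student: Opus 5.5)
The plan is to derive \cref{cor:stable_effective_damping} directly from \cref{cor:maximal_one_step_variation}, since the hypotheses are the same apart from the extra condition $C_{\max}e^{-\delta_{\min}}<1$. Set $\rho\coloneqq C_{\max}e^{-\delta_{\min}}$. Using \eqref{eq:damp_rate} and $C_{\max}=e^{\log C_{\max}}$, one has
\begin{equation}
\rho=e^{\log C_{\max}-\delta_{\min}}=e^{-\Gamma}.
\end{equation}
The assumption $\rho<1$ is therefore the same as $\Gamma>0$, and it also gives $\rho\neq1$.

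Since $\rho\neq 1$, the non-critical case \eqref{eq:uniform_finite_bound} of \cref{cor:maximal_one_step_variation} applies. Substituting $\rho=e^{-\Gamma}$ turns it into \eqref{eq:effective_damping_finite_bound}.

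For the uniform bound, note that $0<e^{-T\Gamma}<1$, so the numerator $1-e^{-T\Gamma}$ is at most $1$. The denominator $1-e^{-\Gamma}$ is positive. Together these give \eqref{eq:effective_damping_uniform_bound}, and this bound is independent of $T$.

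For the small-$\Gamma$ scaling, I will use the elementary inequality $1-e^{-\Gamma}\ge \Gamma/(1+\Gamma)$, which follows from $e^{\Gamma}\ge 1+\Gamma$. Alternatively, $1-e^{-\Gamma}\ge \Gamma e^{-\Gamma}$ by the mean value theorem. Either way,
\begin{equation}
\frac{e^{-\delta_{\min}}}{1-e^{-\Gamma}}\le \frac{1+\Gamma}{\Gamma}.
\end{equation}
For $\Gamma\le1$ this is at most $2/\Gamma$, so the error is $O(\omega_\pi/\Gamma)$. Substituting the definition of $\Gamma$ then gives \eqref{eq:effective_damping_scaling}. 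Here $e^{-\delta_{\min}}\le1$ because $\delta_{\min}>0$; in fact $\delta_{\min}\le 1$ as well, since it is an absolute spectral gap.

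There is no real obstacle in this argument. The only point needing care is the meaning of the $O(\cdot)$ statement. I will state it as a bound holding uniformly for $\Gamma\in(0,1]$ with explicit constant $2$, rather than as a purely asymptotic statement. This way the claim does not depend on how $\omega_\pi$ or $\delta_{\min}$ vary as $\Gamma$ becomes small.
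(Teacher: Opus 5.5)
Your proof is correct and follows the paper's own argument: rewrite $C_{\max}e^{-\delta_{\min}}=e^{-\Gamma}$, substitute into the non-critical geometric bound of \cref{cor:maximal_one_step_variation}, and use $1-e^{-T\Gamma}\le 1$. The only difference is the last step, where the paper uses the expansion $1-e^{-\Gamma}=\Gamma+O(\Gamma^2)$ and your inequality $1-e^{-\Gamma}\ge\Gamma/(1+\Gamma)$ instead gives an explicit constant valid uniformly for $\Gamma\in(0,1]$; note that your mean-value alternative yields $e^{\Gamma}/\Gamma$ rather than the displayed $(1+\Gamma)/\Gamma$, though this is still $O(1/\Gamma)$.
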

\begin{remark}[Exact and conservative damping rates]
The rate $\Gamma=\delta_{\min}-\log C_{\max}$ is a conservative uniform estimate obtained using $1-\delta_k\le e^{-\delta_k}$. The exact damping rate of the scalar recurrence is
\begin{equation}
\Gamma_k^{\mathrm{exact}}
\coloneqq-\log\!\bigl[(1-\delta_k)C_{k-1}\bigr],    
\end{equation}
with $\Gamma_k^{\mathrm{exact}}=+\infty$ when $\delta_k=1$. It satisfies $\Gamma_k^{\mathrm{exact}}
\ge\delta_k-\log C_{k-1}
\ge\Gamma.$ Thus, $\Gamma>0$ guarantees uniform contraction of the propagated-error term. This condition is sufficient but not necessary.
\end{remark}

To convert the preceding bounds into an explicit dependence on the number of steps $T$, we need to control the maximal one-step variation $\omega_\pi$ along a smooth stationary path, which can be done under the conditions of our next lemma.

\begin{lemma}[One-step variation of a smooth stationary path] \label{lemma:One_step}
Let $s\mapsto\pi_s$ be a differentiable path of probability distributions on $\Omega$. Let $s_j=j/T$,  $\pi_j=\pi_{s_j}$ and $j=0,\ldots,T$. For each $j$, let $\|\cdot\|_j$ denote the instantaneous weighted norm associated with $\pi_j$. Assume there exists a constant $L_{\pi}>0$, independent of $T$, such that
\begin{equation}
\label{eq:bound_L}
\sup_{1\leq j\leq T}
\sup_{s\in[s_{j-1},s_j]}
\|\dot\pi_s\|_j
\leq
L_{\pi}.
\end{equation}
Then the maximal one-step variation satisfies
\begin{equation}
\label{eq:one_max}
\omega_{\pi}
= \max_{1\leq j\leq T}
\|\pi_j-\pi_{j-1}\|_j
\leq
\frac{L_{\pi}}{T}.
\end{equation}
\end{lemma}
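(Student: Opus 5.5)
The plan is to apply the fundamental theorem of calculus componentwise and then use the fact that, for fixed $j$, $\|\cdot\|_j$ is a genuine norm on $\mathbb R^{|\Omega|}$. It is induced by the inner product $\langle\cdot,\cdot\rangle_{\pi_j^{-1}}$, and $\pi_j(x)>0$ is implicit in its definition. Fix $1\le j\le T$. Since $s\mapsto\pi_s$ is differentiable, I would first write
\begin{equation}
\pi_j-\pi_{j-1}=\int_{s_{j-1}}^{s_j}\dot\pi_s\,ds .
\end{equation}
Strictly this requires $\dot\pi_s$ to be integrable, for instance continuous. If only differentiability is assumed, I would avoid integrals and instead use a mean-value inequality for vector-valued functions, as explained below.

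Assuming the integral representation, I would bound the norm of the integral by the integral of the norm. This is Minkowski's inequality for the fixed weighted norm $\|\cdot\|_j$, or equivalently the triangle inequality applied to Riemann sums followed by passage to the limit. It gives
\begin{equation}
\|\pi_j-\pi_{j-1}\|_j\le\int_{s_{j-1}}^{s_j}\|\dot\pi_s\|_j\,ds\le (s_j-s_{j-1})\sup_{s\in[s_{j-1},s_j]}\|\dot\pi_s\|_j\le \frac{L_\pi}{T},
\end{equation}
where the last step uses \eqref{eq:bound_L} and $s_j-s_{j-1}=1/T$. Taking the maximum over $j$ then yields \eqref{eq:one_max}. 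A key point is that the norm is frozen at index $j$ throughout the interval $[s_{j-1},s_j]$. This is exactly why hypothesis \eqref{eq:bound_L} is phrased with $\|\cdot\|_j$ rather than $\|\cdot\|_{\pi_s}$, and no norm-comparison constants enter.

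The only delicate step, and the one I expect to be the main obstacle, is justifying this estimate under mere differentiability, since $\dot\pi_s$ might not be Riemann integrable. To handle that case I would use the inner-product structure. If $\pi_j=\pi_{j-1}$ there is nothing to prove. Otherwise, set $u=(\pi_j-\pi_{j-1})/\|\pi_j-\pi_{j-1}\|_j$ and consider the scalar differentiable function $\phi(s)=\langle u,\pi_s\rangle_{\pi_j^{-1}}$. By the scalar mean value theorem there is some $\xi\in(s_{j-1},s_j)$ with
\begin{equation}
\|\pi_j-\pi_{j-1}\|_j=\phi(s_j)-\phi(s_{j-1})=\frac1T\langle u,\dot\pi_\xi\rangle_{\pi_j^{-1}}\le \frac1T\|\dot\pi_\xi\|_j\le\frac{L_\pi}{T},
\end{equation}
where the first inequality is Cauchy--Schwarz with $\|u\|_j=1$. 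This argument gives the lemma with no integrability assumption, so I would present it as the main proof and mention the integral form only as intuition.
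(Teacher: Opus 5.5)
Your integral argument is exactly the paper's proof. The paper writes $\pi_{s_j}-\pi_{s_{j-1}}=\int_{s_{j-1}}^{s_j}\dot\pi_s\,ds$, bounds the norm of the integral by the integral of $\|\dot\pi_s\|_j$ in the frozen norm, and uses $s_j-s_{j-1}=1/T$.

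The version you propose as the main proof is a genuinely different route, and it is correct. Pairing $\pi_s$ with the $\|\cdot\|_j$-unit vector $u$ in the direction of $\pi_j-\pi_{j-1}$ turns the vector estimate into a scalar one. The scalar mean value theorem applies because $\phi$ is a finite linear combination of the differentiable coordinates $\pi_s(x)$, and Cauchy--Schwarz finishes it. This is the standard duality proof of the vector-valued mean-value inequality. It uses only pointwise differentiability, so it matches the lemma's hypothesis literally.

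One correction to your motivation: the integrability worry arises only if you insist on Riemann integrals. Under the lemma's own hypothesis, $|\dot\pi_s(x)|\le\sqrt{\pi_j(x)}\,\|\dot\pi_s\|_j\le L_\pi$ for $s\in[s_{j-1},s_j]$. So each coordinate is Lipschitz on that interval, hence absolutely continuous, and the fundamental theorem of calculus holds for the Lebesgue integral. The paper's argument is therefore already rigorous in that reading. That Lipschitz step is itself a mean-value argument, so your route is the more elementary one.

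What the paper's integral route buys is the slightly sharper intermediate bound $\int_{s_{j-1}}^{s_j}\|\dot\pi_s\|_j\,ds$. It also extends directly to the Taylor-with-integral-remainder estimate used later in the proof of the uniform-gap step-complexity corollary. Both routes give the same final bound $L_\pi/T$.
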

Proof of the Lemma above is presented in \cref{proof:one_step}. 

We now combine \cref{lemma:One_step} with \cref{cor:uniform_gap_changing_norm}. Previously, we found,
\begin{equation}
\|p_T-\pi_T\|_T \leq \omega_\pi e^{-\delta_{\min}}
\frac{1-e^{-T\Gamma}}{1-e^{-\Gamma}}.
\end{equation}
Now, with \cref{lemma:One_step}, $\delta_{\pi}\leq L_{\pi}/T$, we get
\begin{align}
\|p_T-\pi_T\|_T &\leq \frac{L_{\pi}}{T} e^{-\delta_{\min}}
\frac{1-e^{-T\Gamma}}{1-e^{-\Gamma}}\nonumber\\
& \leq \frac{L_{\pi}}{T} 
\frac{e^{-\delta_{\min}}}{1-e^{-\Gamma}},
\end{align}
where we also used the fact that $1-e^{-T\Gamma} \leq 1$. Thus, to guarantee $\|p_T-\pi_T\|_T\leq\varepsilon$, it is sufficient,
in the small-$\Gamma$ regime, to choose
\begin{equation}
T \geq c\, \frac{L_{\pi}}
{\varepsilon\left(\delta_{\min}-\log C_{\max}\right)},
\end{equation}
for some constant $c>0$, provided that $\delta_{\min}>\log C_{\max}$.

\section{Application to Gibbs-State Preparation}
\label{sec:applic_Gibbs}

We now specialise the theorem to Gibbs distributions. Let
$E:\Omega\to\mathbb R$ be an energy function and define
\begin{equation}
\label{eq:gibbs_dis}
\pi_\beta(x) = \frac{e^{-\beta E(x)}}{Z(\beta)},
\qquad Z(\beta) = \sum_{y\in\Omega}e^{-\beta E(y)}.
\end{equation}
A straightforward result from \cref{thm:tracking_changing_norm} is given in the corollary below. 
\begin{corollary}[Adiabatic Gibbs-state preparation in weighted norms]
\label{cor:gibbs_weighted}

Consider $\pi_j=\pi_{\beta_j}$ and
$\mathcal W_j=\mathcal W_{\beta_j}$, where each
$\mathcal W_{\beta_j}$ is an irreducible, aperiodic Markov kernel
reversible with respect to the Gibbs distribution $\pi_{\beta_j}$.
Let $\delta_j$ denote its absolute spectral gap, so that
\begin{equation}
    \|\mathcal W_{\beta_j}q\|_j
    \leq
    (1-\delta_j)\|q\|_j,
    \qquad
    q\in\mathcal H_0.
\end{equation}

If $p_0=\pi_{\beta_0}$ and
$p_{k+1}=\mathcal W_{\beta_{k+1}}p_k$, then
\begin{equation}
\label{eq:gibbs_weighted_tracking}
\|p_T-\pi_{\beta_f}\|_T \leq  \sum_{j=1}^T \exp\!\left(
    -\sum_{r=j}^T\delta_r
 + \sum_{r=j}^{T-1}\log C_r  \right)
    \|\pi_{\beta_j}-\pi_{\beta_{j-1}}\|_j .
\end{equation}
Moreover,
\begin{equation}
\|p_T-\pi_{\beta_f}\|_{\mathrm{TV}} \leq \frac{1}{2} \|p_T-\pi_{\beta_f}\|_T,
\end{equation}
and therefore
\begin{equation}
\label{eq:gibbs_weighted_tv_tracking}
\|p_T-\pi_{\beta_f}\|_{\mathrm{TV}}
    \leq
    \frac{1}{2}
    \sum_{j=1}^T
    \exp\!\left(
    -\sum_{r=j}^T\delta_r
    +
    \sum_{r=j}^{T-1}\log C_r
    \right)
    \|\pi_{\beta_j}-\pi_{\beta_{j-1}}\|_j .
\end{equation}
\end{corollary}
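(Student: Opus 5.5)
The corollary is a direct specialisation of \cref{thm:tracking_changing_norm}, followed by the weighted-norm-to-total-variation comparison from \cref{sec:prel}. The plan is to check that the Gibbs sequence meets the theorem's hypotheses, apply the theorem verbatim, and then convert norms.

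\emph{Step 1: hypotheses.} We set $\pi_j=\pi_{\beta_j}$ and $\mathcal W_j=\mathcal W_{\beta_j}$. Each kernel is irreducible, aperiodic, and reversible with respect to $\pi_{\beta_j}$. Reversibility (detailed balance) implies $\mathcal W_{\beta_j}\pi_{\beta_j}=\pi_{\beta_j}$, so $\pi_{\beta_j}$ is the stationary distribution, and by irreducibility it is the unique one. The Gibbs weights $e^{-\beta E(x)}/Z(\beta)$ are strictly positive on the finite space $\Omega$. Hence the weighted norms $\|\cdot\|_j$ are well defined and every $C_j$ in \eqref{eq:norm_comp} is finite. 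Indeed, for $q\neq0$,
\begin{equation}
\frac{\|q\|_{j+1}^2}{\|q\|_j^2}\le\max_x\frac{\pi_j(x)}{\pi_{j+1}(x)}<\infty.
\end{equation}
The contraction inequality $\|\mathcal W_{\beta_j}q\|_j\le(1-\delta_j)\|q\|_j$ on $\mathcal H_0$ is exactly \cref{lemma:contractiongap}.

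\emph{Step 2: apply the theorem.} We need to match the indexing. The recursion $p_{k+1}=\mathcal W_{\beta_{k+1}}p_k$ with $p_0=\pi_{\beta_0}$ is the same as $p_j=\mathcal W_jp_{j-1}$ for $j=1,\dots,T$. We identify $\beta_T=\beta_f$, so that $\pi_T=\pi_{\beta_f}$. The second bound \eqref{eq:first_b} of \cref{thm:tracking_changing_norm} is then precisely \eqref{eq:gibbs_weighted_tracking}. If one wants to pass from the product form to the exponential form explicitly, use $1-\delta_r\le e^{-\delta_r}$, which holds because $0\le\delta_r\le1$, and write $C_r=e^{\log C_r}$.

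\emph{Step 3: total variation.} The distribution $p_T$ is a probability vector, since it is a product of column-stochastic matrices applied to a probability vector. Therefore $p_T-\pi_{\beta_f}\in\mathcal H_0$. The Cauchy--Schwarz estimate of \cref{sec:prel}, applied with reference distribution $\pi=\pi_{\beta_f}$, gives
\begin{equation}
\|p_T-\pi_{\beta_f}\|_{\mathrm{TV}}\le\tfrac12\|p_T-\pi_{\beta_f}\|_T.
\end{equation}
Combining this with Step 2 yields \eqref{eq:gibbs_weighted_tv_tracking}.

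There is no genuine obstacle here. The only points needing care are the index bookkeeping ($\beta_T=\beta_f$, and the fact that the step-$j$ increment is measured in the norm $\|\cdot\|_j$) and the confirmation that reversibility with respect to the Gibbs measure makes it the stationary vector required by the theorem.
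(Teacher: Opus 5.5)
Your proposal is correct and follows the paper's route: the paper obtains this corollary by specialising \cref{thm:tracking_changing_norm} to $\pi_j=\pi_{\beta_j}$, $\mathcal W_j=\mathcal W_{\beta_j}$ (with $\beta_T=\beta_f$) and then applying the Cauchy--Schwarz comparison $\|p-\pi\|_{\mathrm{TV}}\le\tfrac12\|p-\pi\|_{\pi^{-1}}$ from \cref{sec:prel}. Your added checks are hypotheses the paper leaves implicit: stationarity of the Gibbs measure from detailed balance, and strict positivity of the Gibbs weights, which makes the norms and the $C_j$ finite.
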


We now consider \cref{cor:uniform_gap_changing_norm} for the Gibbs state preparation when $\beta\in[\beta_0,\beta_f]$ is treated as a continuous inverse-temperature parameter, so then the result from \cref{lemma:One_step} can be applied. Ultimately, we are looking for $L_{\pi}$ when we have the Gibbs state preparation, which we will analyse next. 

In this scenario, we now that the discrete evolution is obtained by sampling this path at the points
\begin{equation}
    \beta_j=\beta_0+j\eta,
    \qquad
    \eta=\frac{\beta_f-\beta_0}{T},
    \qquad
    j=0,\ldots,T.
\end{equation}

The variation of the Gibbs distribution with respect to $\beta$ can be computed
explicitly. We have
\begin{align}
\partial_\beta \pi_\beta(x) &= \partial_\beta \left( \frac{e^{-\beta E(x)}}{Z(\beta)} \right) \nonumber\\
&= -E(x)\frac{e^{-\beta E(x)}}{Z(\beta)} - \frac{e^{-\beta E(x)}}{Z(\beta)^2}
\partial_\beta Z(\beta).
\end{align}
Since
\begin{equation}
\partial_\beta Z(\beta) = -\sum_{y\in\Omega}E(y)e^{-\beta E(y)}
    = -Z(\beta)\langle E\rangle_\beta,
\end{equation}
where
\begin{equation}
\langle E\rangle_\beta \coloneqq 
    \sum_{y\in\Omega}E(y)\pi_\beta(y),
\end{equation}
it follows that
\begin{align}
\partial_\beta \pi_\beta(x) &= -E(x)\pi_\beta(x) + \langle E\rangle_\beta \pi_\beta(x)
\nonumber\\
&= -\big(E(x)-\langle E\rangle_\beta\big)\pi_\beta(x).
\end{align}
Therefore, in the instantaneous weighted norm, 
\begin{align}
\|\partial_\beta\pi_\beta\|^2_{\pi^{-1}_{\beta}}
&= \sum_{x\in\Omega} \frac{(\partial_\beta\pi_\beta(x))^2}{\pi_{\beta}(x)} 
\nonumber\\
&= \sum_{x\in\Omega} (E(x)-\langle E\rangle_\beta)^2\pi_\beta(x) \nonumber\\
&= \operatorname{Var}_{\pi_\beta}(E).
\end{align}
Hence, 
\begin{equation}
\|\partial_\beta\pi_\beta\|_{\pi^{-1}_{\beta}} =
\sqrt{\operatorname{Var}_{\pi_\beta}(E)}.
\end{equation}

Now, using the Taylor approximation on $\pi_{\beta}$
\begin{equation}
\pi_{\beta_j} - \pi_{\beta_{j-1}} = \eta\partial_\beta\pi_{\beta} + O(\eta^2),   
\end{equation}
so
\begin{equation}
\|\pi_{\beta_j} - \pi_{\beta_{j-1}}\|_j \leq \eta\sqrt{\operatorname{Var}_{\pi_{\beta_j}}(E)} + O(\eta^2).  
\end{equation}

We can then plug  these results in \cref{cor:gibbs_weighted}, which immediately give us
\begin{equation}
\|p_T-\pi_{\beta_f}\|_{\mathrm{TV}}
\leq \frac{1}{2} \sum_{j=1}^T \exp\!\left( -\sum_{r=j}^T\delta_r + \sum_{r=j}^{T-1}\log C_r \right)
\left(\eta\sqrt{\operatorname{Var}_{\pi_{\beta_j}}(E)} + O(\eta^2)\right) .    
\end{equation}

Thus, the energy variance determines how finely the inverse-temperature path must be discretized: regions with large energy fluctuations require smaller steps in $\beta$. 

A more useful consequence is obtained by combining the Gibbs-state variation
bound with \cref{cor:uniform_gap_changing_norm}. This yields an explicit
step-complexity estimate for Gibbs-state preparation.

\begin{corollary}[Uniform-gap step complexity for Gibbs-state preparation]
\label{cor:uniform_gap_complexity}
Let $\beta:[0,1]\rightarrow [\beta_0,\beta_f]$ be a fixed $C^2$ schedule, and set $\pi(s)$, $\pi_j=\pi(j/T)$. Assume the hypotheses of \cref{cor:gibbs_weighted}, with $\delta_j \geq \delta_{\min} > 0$ and $C_j\leq C_{\max}$, for all relevant indices. Define
$j=1,\dots, T$. Define
\begin{equation}
\label{eq:A_M}
A
\coloneqq \sup_{s\in[0,1]} |\dot{\beta}(s)| \sqrt{\operatorname{Var}_{\pi_{\beta(s)}}(E)},
\quad M_2\coloneqq \sup_{s,u \in [0,1]}\|\ddot{\pi}(s)\|_{\pi(u)^{-1}}.
\end{equation}
Then
\begin{equation}
\omega_{\pi} \leq \frac{A}{T} + \frac{M_2}{2T^2}.    
\end{equation}
 If $\Gamma = \delta_{\min} - \log C_{\max}>0$, then the total-variation error satisfies
\begin{equation}
\label{eq:Bound_A_M}
\|p_T-\pi_{\beta_f}\|_{\mathrm{TV}}
\leq \frac{e^{-\delta_{\min}}}{2(1-e^{-\Gamma})}\left(\frac{A}{T} + \frac{M_2}{2T^2}\right).
\end{equation}
Consequently, writing
\begin{equation}
H\coloneqq \frac{e^{-\delta_{\min}}}{2(1-e^{-\Gamma})},  
\end{equation}
the target error $\|p_T-\pi_{\beta_f}\|_{\mathrm{TV}}\leq \epsilon$ is guaranteed whenever
\begin{equation}
T \geq \frac{HA + \sqrt{H^2A^2 + 2\epsilon H M_2}}{2\epsilon}.   
\end{equation}
For fixed path constants and fixed $\Gamma > 0$, the leading error bound is $HA/T$; in the weak-damping regime, $H=O(\Gamma^{-1})$.
\end{corollary}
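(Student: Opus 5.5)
The proof has three steps: a Taylor bound on the one-step increments, an appeal to \cref{cor:stable_effective_damping}, and the solution of a quadratic inequality in $T$.

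\textbf{Step 1: the one-step variation.} Fix $j\in\{1,\dots,T\}$ and write $s_j=j/T$. Taylor's theorem with integral remainder, applied at $s_{j}$ and looking backward, gives
\begin{equation}
\pi_j-\pi_{j-1}=\frac{1}{T}\dot\pi(s_j)-\int_{s_{j-1}}^{s_j}(s-s_{j-1})\,\ddot\pi(s)\,ds .
\end{equation}
Expanding at $s_j$ rather than $s_{j-1}$ is deliberate, because the norm $\|\cdot\|_j$ is the instantaneous norm at $\pi_j$. By the chain rule, $\dot\pi(s)=\dot\beta(s)\,\partial_\beta\pi_\beta|_{\beta(s)}$. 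The Gibbs computation above then gives
\begin{equation}
\|\dot\pi(s_j)\|_j=|\dot\beta(s_j)|\sqrt{\operatorname{Var}_{\pi_{\beta(s_j)}}(E)}\le A .
\end{equation}
For the remainder, Minkowski's inequality moves the norm inside the integral. The supremum defining $M_2$ is taken over both arguments $s$ and $u$, so $\|\ddot\pi(s)\|_j\le M_2$ for every $s$ in the interval. The integral is therefore at most $M_2\int_0^{1/T}t\,dt=M_2/(2T^2)$. Taking the maximum over $j$ yields $\omega_\pi\le A/T+M_2/(2T^2)$.

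\textbf{Step 2: the error bound.} The hypotheses of \cref{cor:uniform_gap_changing_norm} hold, so \cref{cor:stable_effective_damping} applies with $\Gamma>0$. It gives $\|p_T-\pi_{\beta_f}\|_T\le \omega_\pi e^{-\delta_{\min}}/(1-e^{-\Gamma})$. The relation $\|\cdot\|_{\mathrm{TV}}\le\tfrac12\|\cdot\|_T$, combined with Step 1, then gives \eqref{eq:Bound_A_M} directly.

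\textbf{Step 3: the complexity threshold.} It suffices to have $H(A/T+M_2/(2T^2))\le\epsilon$. Multiplying by $T^2>0$ turns this into
\begin{equation}
\epsilon T^2-HAT-\tfrac12 HM_2\ge 0 .
\end{equation}
The quadratic has leading coefficient $\epsilon>0$, and its constant term is nonpositive. Hence it is nonnegative exactly when $T$ is at least the positive root,
\begin{equation}
T\ge\frac{HA+\sqrt{H^2A^2+2\epsilon HM_2}}{2\epsilon}.
\end{equation}
Finally, in the weak-damping regime, $1-e^{-\Gamma}\ge\Gamma/2$ for small $\Gamma$ and $e^{-\delta_{\min}}\le1$, so $H=O(\Gamma^{-1})$. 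The $M_2$ term is $O(T^{-2})$, so the leading error is $HA/T$.

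\textbf{Main obstacle.} The only delicate point is the mismatch between norms in Step 1. The remainder must be measured in the fixed norm $\|\cdot\|_j$ while $s$ ranges over $[s_{j-1},s_j]$, and this is exactly why $M_2$ is defined as a supremum over both $s$ and $u$. The paper's looser heuristic $O(\eta^2)$ expansion avoids this issue only informally. Everything else is routine bookkeeping.
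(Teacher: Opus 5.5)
Your proposal is correct and follows the paper's proof essentially step for step. The paper also expands backward at $s_j$ with an integral remainder and measures that remainder in the fixed norm $\|\cdot\|_j$, using the two-argument supremum in $M_2$ to obtain $\omega_\pi\le A/T+M_2/(2T^2)$; it then applies the stable effective-damping corollary together with $\|\cdot\|_{\mathrm{TV}}\le\tfrac12\|\cdot\|_T$ and solves the same quadratic in $T$, so the norm-mismatch point you flag is handled there exactly as you propose rather than being a departure from it.
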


The proof of the corollary is presented in
\cref{proof:uniform_gap_complexity}. In the special case in which the weighted norms do not change, $C_{\max}=1$, and hence $\Gamma=\delta_{\min}$. For a linear schedule,
$\dot{\beta}(s)=\beta_f-\beta_0$, the condition 
\begin{equation}
T \geq \max \left\{
\frac{\beta_f-\beta_0}
{\epsilon\delta_{\min}}
V_{\max},\sqrt{\frac{M_2}{2\epsilon\delta_{\min}}}\right\},
\end{equation}
guarantees  $\|p_T-\pi_{\beta_f}\|_{\mathrm{TV}}\leq \epsilon$,  where
\begin{equation}
V_{\max} = \max_{\beta\in[\beta_0,\beta_f]}
\sqrt{\operatorname{Var}_{\pi_\beta}(E)}.   
\end{equation}
The quantity $|\dot{\beta}(s)|
\sqrt{\operatorname{Var}_{\pi_{\beta(s)}}(E)}$ has a natural geometric interpretation: it is the instantaneous speed of
the Gibbs path with respect to the weighted statistical or Fisher-information metric .

\subsection{One-Dimensional Tracking Geometry and the Gibbs Schedule}

Our goal is now to extract the Gibbs-preparation schedule that follows from the discrete adiabatic tracking theorem in instantaneous weighted norms. We do this from a geometric point of view. We first introduce the one-dimensional Gibbs manifold and then derive the corresponding geodesic equation using the parallel-transport formulation. The schedule is obtained by analysing how the tracking error accumulates under a sequence of Markov operators for the Gibbs path. This error accumulation naturally induces a metric on the Gibbs manifold. Putting these ingredients together, we obtain the affine, constant-metric-speed parametrization of the Gibbs path. This parametrization minimizes the metric-energy functional induced by the local approximation to the tracking bound among monotone schedules with fixed endpoints and duration.

We can view the problem geometrically as involving the Gibbs manifold, $\mathcal{M}_\beta=[\beta_0,\beta_f],$ which is the one-dimensional parameter manifold. We next introduce a local, slowly varying approximation to the discrete tracking bound, where its proof is presented in \cref{proof:metric}.  The resulting metric is therefore an effective metric induced by the leading local form of the bound.

\begin{lemma}[Local tracking metric for Gibbs-state preparation]
\label{lemma:effective_metric}
Let $\beta[0,1]\to[\beta_0,\beta_f]$ be a fixed smooth monotone schedule, with $t_j=j/T$ $\beta_j=\beta(t_j)$, and $V(\beta)=\sqrt{\operatorname{Var}_{\pi_\beta}(E)}$. Assume the tracking inequality, 
\begin{equation}
e_j \leq (1-\delta_j)C_{j-1}e_{j-1}
+ (1-\delta_j)|h_j|V_j
+ O(h_j^2),
\end{equation}
$p_0=\pi_0$, and that $\delta,V,\dot\beta$ are continuously differentiable, with
\begin{equation}
\delta(\beta)\geq \delta_{*} > 0, \quad  C_j=1+O\left(T^{-1}\right).
\label{eq:C_local_metric}
\end{equation}
Define the scalar tracking envelope
\begin{equation}
\label{eq:B_env}
B_0=0,\qquad
B_j=(1-\delta_j)C_{j-1}B_{j-1}
 +(1-\delta_j)\|\pi_j-\pi_{j-1}\|_j. 
 \end{equation}
Then $e_j\le B_j$, and, uniformly on every fixed interval $t_j\in[t_0,1]$, $t_0>0$,
\begin{equation}
B_j=\frac1T
\frac{[1-\delta(\beta_j)]|\dot\beta(t_j)|V(\beta_j)}
{\delta(\beta_j)} + O(T^{-2}).    
\end{equation}
The remainder constants may depend on the fixed path and $\delta_*$. This expansion describes the upper bound, not necessarily the actual error. Squaring the leading local coefficient defines, wherever $V>0$ and $0<\delta<1$, the effective tracking metric
\begin{equation}
\label{eq:main_metricOd}
g^{\mathrm{track}}
=\frac{(1-\delta(\beta))^2\operatorname{Var}_{\pi_\beta}(E)}
{\delta(\beta)^2}\,d\beta\otimes d\beta.
\end{equation}
\end{lemma}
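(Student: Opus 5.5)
The proof has two parts. The inequality $e_j\le B_j$ comes from a comparison argument on the scalar recursion. The asymptotic form of $B_j$ comes from a discrete quasi-stationary analysis of a slowly varying linear recurrence. Throughout, write $h_j=\beta_j-\beta_{j-1}$, $a_j=(1-\delta_j)C_{j-1}$ and $b_j=(1-\delta_j)\|\pi_j-\pi_{j-1}\|_j$.

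\textbf{Comparison.} From the Gibbs computation preceding \cref{cor:uniform_gap_complexity},
\begin{equation}
\|\pi_j-\pi_{j-1}\|_j=|h_j|V_j+O(h_j^2).
\end{equation}
So the source term in the assumed tracking inequality and $b_j$ agree up to $O(h_j^2)$. I interpret the hypothesis with the exact increment in place of $|h_j|V_j+O(h_j^2)$, which is what the proof of \cref{thm:tracking_changing_norm} actually delivers. Then $e_0=B_0=0$, and since $a_j\ge 0$, induction gives
\begin{equation}
e_j\le a_je_{j-1}+b_j\le a_jB_{j-1}+b_j=B_j.
\end{equation}

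\textbf{Asymptotics of $B_j$.} Unrolling the recursion gives
\begin{equation}
B_j=\sum_{i=1}^j\Big(\prod_{r=i+1}^j a_r\Big)b_i .
\end{equation}
Set $\phi(t)=[1-\delta(\beta(t))]|\dot\beta(t)|V(\beta(t))$, so that $b_i=\phi(t_i)/T+O(T^{-2})$ because $h_i=\dot\beta(t_i)/T+O(T^{-2})$. The candidate is the frozen fixed point $B^*_j=T^{-1}\phi(t_j)/\delta(\beta_j)$, i.e.\ the solution of $B=(1-\delta)B+\phi/T$.

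Let $D_j=B_j-B^*_j$ and subtract the two recursions. Since $C_{j-1}=1+O(T^{-1})$ and $\phi/\delta$ is $C^1$, we have $B^*_j-B^*_{j-1}=O(T^{-2})$. This yields
\begin{equation}
D_j=a_jD_{j-1}+\rho_j,\qquad |\rho_j|\le K T^{-2},
\end{equation}
where the $O(T^{-1})$ part of $C_{j-1}$ multiplying $B^*_{j-1}=O(T^{-1})$ also contributes only $O(T^{-2})$.

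The contraction bound needs a little care. We have $a_j\le(1-\delta_*)(1+c/T)$, so for $T$ large enough that $(1-\delta_*)(1+c/T)\le 1-\delta_*/2$, all the $a_j$ are bounded by a fixed $q<1$. Unrolling the $D$ recursion then gives
\begin{equation}
|D_j|\le q^j|D_0|+\frac{KT^{-2}}{1-q}.
\end{equation}
Here $D_0=-B_0^*=-\phi(0)/(T\delta(\beta_0))=O(T^{-1})$. On $t_j\ge t_0>0$ we have $j\ge t_0T$, so the transient term $q^{t_0T}\,O(T^{-1})$ is exponentially small and in particular $O(T^{-2})$. This gives the stated expansion uniformly on $[t_0,1]$, with constants depending on $\delta_*$ and the $C^1$ norms along the path. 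The metric \eqref{eq:main_metricOd} is then defined by squaring the leading coefficient $(1-\delta)V/\delta$ per unit $d\beta$, using $|\dot\beta|\,dt=|d\beta|$.

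\textbf{Main obstacle.} The delicate step is the uniform treatment of the remainder $\rho_j$. It must be $O(T^{-2})$ with constants independent of $j$, which requires $\phi/\delta$ to be Lipschitz in $t$; this follows from the assumed continuous differentiability and $\delta\ge\delta_*$. It also requires the $O(T^{-1})$ norm-change correction in $C_{j-1}$, acting on $B^*=O(T^{-1})$, to contribute only at second order.
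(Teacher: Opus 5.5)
Your proposal is correct and follows essentially the same route as the paper's proof. The paper also compares $B_j$ with the frozen fixed point $z_j=D(t_j)/T$ (your $B^*_j$), shows the one-step defect $\rho_j z_{j-1}+b_j-z_j$ is $O(T^{-2})$ using Lipschitz continuity of $D$ and $C_{j-1}=1+O(T^{-1})$, bounds $\rho_j\le q<1$ for large $T$, and iterates to get $|B_j-z_j|\le q^j|D(0)|/T+L/(T^2(1-q))$. Your use of the exact increment $\|\pi_j-\pi_{j-1}\|_j$ to justify $e_j\le B_j$ matches the exact recurrence the paper derives at the start of its proof.
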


The metric in \cref{eq:main_metricOd} is obtained by expressing the square of the leading local tracking cost as a quadratic form in the schedule velocity. For a smooth and strictly positive stationary path, the norm-comparison condition in \cref{eq:C_local_metric} is naturally satisfied, since consecutive weighted norms differ by $O(T^{-1})$. The norm-comparison factor therefore introduces an $O(T^{-1})$ correction to the local damping rate, while the tracking dynamics are governed by the competition between error injection, proportional to $|\dot{\beta}|V$, and relaxation at rate $\delta$. Finally, the geodesic equation $\nabla_{\dot{\gamma}}\dot{\gamma}=0$ is equivalent to the Euler--Lagrange equations of the metric-energy functional $\mathcal{L}(\beta,\dot{\beta})$. Applying this variational principle to the effective tracking metric yields the constant-speed Gibbs schedule stated in the following corollary.

\begin{corollary}[Constant-speed schedule for the local tracking metric]
\label{lemma:ched}
Let $\gamma(t)=\pi_{\beta(t)}$ be a monotone curve on the
one-dimensional Gibbs manifold equipped with the local tracking metric
\begin{equation}
g_{\beta\beta}^{\mathrm{track}}(\beta)
=
\frac{
(1-\delta(\beta))^2 V(\beta)^2
}{
\delta(\beta)^2
}.
\end{equation}
Then $\gamma(t)$ is an affinely parametrized geodesic if and only if it has
constant speed with respect to the tracking metric,
\begin{equation}
\|\dot{\gamma}(t)\|_{\mathrm{track}}
=
\sqrt{
g_{\beta\beta}^{\mathrm{track}}(\beta(t))
}
\,|\dot{\beta}(t)|
=
\alpha,
\end{equation}
where $\alpha>0$ is fixed by the boundary conditions. Consequently, for a
monotonically increasing schedule,
\begin{equation}
\label{eq:classical_sched_Trac}
\dot{\beta}(t)
=
\alpha
\frac{
\delta(\beta(t))
}{
(1-\delta(\beta(t)))V(\beta(t))
}
=
\alpha
\frac{
\delta(\beta(t))
}{
(1-\delta(\beta(t)))
\sqrt{\operatorname{Var}_{\pi_{\beta(t)}}(E)}
}.
\end{equation}
\end{corollary}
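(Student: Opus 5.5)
The plan is to treat the one-dimensional Gibbs manifold as a Riemannian manifold with the single metric coefficient $g(\beta)\coloneqq g^{\mathrm{track}}_{\beta\beta}(\beta)$, which is smooth and strictly positive wherever $V>0$ and $0<\delta<1$. The geodesic equation will be read off from the Euler--Lagrange equation of the metric-energy functional
\begin{equation}
\mathcal E[\beta]=\frac12\int_0^{1} g(\beta(t))\,\dot\beta(t)^2\,dt ,
\end{equation}
whose Lagrangian is $\mathcal L(\beta,\dot\beta)=\tfrac12 g(\beta)\dot\beta^2$. As stated just before the corollary, $\nabla_{\dot\gamma}\dot\gamma=0$ is equivalent to these Euler--Lagrange equations. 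In one dimension the only Christoffel symbol is $\Gamma^\beta_{\beta\beta}=g'(\beta)/(2g(\beta))$, so the geodesic equation takes the explicit form
\begin{equation}
\ddot\beta+\frac{g'(\beta)}{2g(\beta)}\dot\beta^2=0 .
\end{equation}

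For the forward direction, assume $\beta$ solves this equation. I will compute
\begin{equation}
\frac{d}{dt}\bigl(g(\beta)\dot\beta^2\bigr)=g'(\beta)\dot\beta^3+2g(\beta)\dot\beta\ddot\beta=2g(\beta)\dot\beta\Bigl(\ddot\beta+\tfrac{g'}{2g}\dot\beta^2\Bigr)=0 .
\end{equation}
This shows that $\sqrt{g}\,|\dot\beta|$ is constant; equivalently, it is the conserved energy of a Lagrangian with no explicit time dependence.

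For the converse, assume $\sqrt{g(\beta(t))}\,|\dot\beta(t)|=\alpha>0$. Monotonicity fixes the sign of $\dot\beta$, and $\alpha>0$ gives $\dot\beta\neq0$. The same identity then shows that $2g\dot\beta$ times the geodesic expression vanishes. Since $g>0$ and $\dot\beta\neq0$, the geodesic expression itself vanishes.

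The constant $\alpha$ is determined by integrating $\sqrt{g(\beta)}\,d\beta=\alpha\,dt$ from $\beta_0$ to $\beta_f$. This gives $\alpha=\int_{\beta_0}^{\beta_f}\sqrt{g}\,d\beta$, the tracking length of the path, divided by the duration. The length is finite because $g$ is continuous on the compact interval. Such a schedule exists and is unique, because the map $\beta\mapsto\int_{\beta_0}^{\beta}\sqrt g$ is strictly increasing and can be inverted.

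Solving $\sqrt g\,\dot\beta=\alpha$ for $\dot\beta>0$ and substituting $\sqrt g=(1-\delta)V/\delta$ then yields \cref{eq:classical_sched_Trac}.

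I expect the main obstacle to be the regularity and degeneracy bookkeeping, not the calculus. One issue is that $g$ must be $C^1$ along the path, which requires $\delta,V\in C^1$ as assumed in \cref{lemma:effective_metric}. The other is that $g$ must not vanish, so we must have $V>0$ and $\delta<1$. If $V$ vanishes at isolated points, I would state the equivalence on the open set where $g>0$, and extend the explicit schedule by continuity of the reparametrization.
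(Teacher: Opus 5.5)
Your argument is correct and reaches the same schedule and normalization $\alpha=\int_{\beta_0}^{\beta_f}\sqrt{g^{\mathrm{track}}_{\beta\beta}}\,d\beta$. It integrates the one-dimensional geodesic equation by a different mechanism from the paper's.

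The paper uses monotonicity from the outset to write $\ddot\beta=\dot\beta\,d\dot\beta/d\beta$ and treat $\dot\beta$ as a function of $\beta$. It then integrates $\partial_\beta\log\dot\beta=-\tfrac12\partial_\beta\log g^{\mathrm{track}}_{\beta\beta}$. This gives $\dot\beta=\alpha/\sqrt{g^{\mathrm{track}}_{\beta\beta}}$ in one pass. However, it needs $\dot\beta>0$ to take logarithms, and it only proves geodesic $\Rightarrow$ constant speed, leaving the converse implicit.

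Your first-integral identity $\frac{d}{dt}(g\dot\beta^2)=2g\dot\beta\,(\ddot\beta+\tfrac{g'}{2g}\dot\beta^2)$ gives both directions of the equivalence at once. It needs no monotonicity for the forward direction, and it shows exactly where $g>0$ and $\dot\beta\neq0$ enter the converse.

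Your regularity caveats are also more careful than the paper's proof. The degenerate case does occur: in the two-state model $g$ vanishes at $\beta_0=0$, because $\delta(0)=1$ there. The paper handles this endpoint separately in the appendix by continuity of the reparametrization, as you suggest.
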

The proof of the Corollary above is presented in \cref{proof:ched}.

The preceding result has a direct geometric interpretation. Since the Gibbs manifold is one-dimensional, the path connecting $\pi_{\beta_0}$ and
$\pi_{\beta_f}$ is fixed, and the freedom in choosing the schedule lies only in its time parametrization. The effective tracking metric stretches regions
where the Gibbs distribution changes fast or where the Markov dynamics relaxes slowly. The affine geodesic compensates for this variation by
traversing the Gibbs path at constant metric speed. Consequently, the schedule proceeds more slowly in regions of large energy fluctuations or weak effective damping, and faster in regions where the Gibbs state
changes slowly and tracking errors contract efficiently.

The metric and the resulting schedule are obtained from a local slow-variation approximation to the discrete tracking bound. The affine geodesic should therefore be understood as the distinguished constant-speed parametrization selected by the induced tracking geometry, rather than as an
exact optimizer of the complete nonlocal tracking bound.

The classical tracking theorem provides a natural route towards constructing adaptive simulated-annealing algorithms for combinatorial optimization. Rather than prescribing a fixed cooling schedule, the inverse temperature $\beta$ can be updated according to the balance between the contraction of the Markov dynamics and the variation of its instantaneous Gibbs distribution. This suggests an iterative procedure in which the Markov chain is evolved at the current temperature, the local spectral gap and energy fluctuations are estimated from sampled trajectories, and the next temperature increment is determined by the tracking condition. Its practical implementation and numerical assessment are left for future work.

\section{Numerical validation of the Adiabatic Tracking and schedule design}
\label{sec:Num_valid_cl}

\subsection{Analytically solvable two-state model}\label{sec:two-state_m}

We start by looking at a solvable two-state model. There are two main reasons to consider this simple case. First, it provides a simple setting where all quantities entering in the discrete adiabatic theorem can be evaluated exactly, allowing us to test the tightness of the tracking bound independently of sampling or numerical estimation errors. Second, it provides a controlled benchmark for comparing different inverse temperature schedules.

Consider the configuration space $\Omega =\{0,1\}$, with energies $E(0)=0$ and $E(1)=1$. In this case the gibbs distribution \cref{eq:gibbs_dis} has the following partition function
\begin{equation}
Z(\beta) = 1+ e^{-\beta},    
\end{equation}
so 
\begin{equation}
\pi_{\beta}(0) =\frac{e^{\beta}}{1+e^\beta},\quad     \pi_{\beta}(1) =\frac{1}{1+e^\beta}.
\end{equation}
Thus, for this two-state problem, the energy variance is
\begin{align}
\label{eq:explic_var}
 \text{Var}_{\pi_\beta}(E) &= (0-\langle E \rangle_{\beta})^2\pi_{\beta}(0) +  (1-\langle E \rangle_{\beta})^2\pi_{\beta}(1)\nn
 &=\pi_{\beta}(1)\pi_{\beta}(0)\nn
 &=\pi_{\beta}(1)(1-\pi_{\beta}(1))\nn
 &=\frac{e^{\beta}}{(1+e^{\beta})^2}\nn
 &=\frac{1}{4\cosh^2(\beta/2)},
\end{align}
where we used that $\langle E \rangle_{\beta}=\pi_{\beta}(1)$ and $\pi_{\beta}(0)=1-\pi_{\beta}(1)$. Notice that the quantity of interest is $V(\beta)=\sqrt{\text{Var}_{\pi_\beta}(E)}$.

Now, to isolate the effect of the annealing schedule, we consider a reversible Markov chain, whose the stationary distribution is exactly $\pi_\beta$, we define
\begin{equation}
\label{eq:Markov_chain_analy}
    \mathcal{W}_\beta =(1-\delta(\beta))I + \delta(\beta)\Pi_{\beta},
\end{equation}
where $\Pi_\beta$ denotes the rank-one stochastic operator that maps every normalized distribution to $\pi_\beta$, $\Pi_\beta p=\pi_\beta$, so $\mathcal{W}_\beta\pi_\beta=\pi_\beta.$ From the constructed Markov operator \cref{eq:Markov_chain_analy}, notice that because $\Pi_\beta$ has eigenvalues 0 and 1, the two eigenvalues of the Markovian operator are 
\begin{equation}
\lambda_1 = 1,\quad \lambda_2=1-\delta(\beta),    
\end{equation}
so $\delta(\beta)$ is exactly the gap of the $\mathcal{W}_\beta$, and for this problem we consider,
\begin{equation}
\delta(\beta) = e^{-\beta}.    
\end{equation}
For the discrete inverse-temperature schedule, $\beta_0,\dots \beta_T$, we then have the following recursively dynamics,
\begin{equation}
p_j =\mathcal{W}_{\beta_j}p_{j-1} = (1-\delta_j)p_{j-1} + \delta_j\pi_j, \quad p_0=\pi_{\beta_0}.   
\end{equation}
with $\delta_j =\delta(\beta_j)$ and $\pi_j=\pi_{\beta_j}$. 

We can now compute explicitly the error $e_j=p_j-\pi_j$ for each step $j$ using \cref{eq:fund_theor}. Following the steps for the proof of \cref{eq:fund_theor}, we have $ \mathcal{W}_{\beta_j}\left(e_j + \pi_{j-1} + \pi_j\right)$, however, from \cref{eq:Markov_chain_analy}, we see that
\begin{equation}
\label{eq:err_two_m}
 e_j = (1-\delta_j)(e_{j-1} + \pi_{j-1} - \pi_{j}).  
\end{equation}
For a monotonically increasing inverse-temperature schedule, $\pi_j(1)$ decreases with $j$. Starting from $e_0=0$, \cref{eq:err_two_m} implies that $e_{j-1}(1)\geq 0$, while $\pi_{j-1}-\pi_j \geq 0$. Since both vectors have zero total mass, they are aligned. The triangle inequality is therefore saturated, and tacking the instantaneous weighted norm gives
\begin{equation}
 \|e_j\|_{j} = (1-\delta_j)\|e_{j-1}\|_j + (1-\delta_j)\|\pi_{j-1} - \pi_{j}\|_j,  
\end{equation}
i.e., there is no inequality at this point. However, we still have to include the norm-comparison factor. For the two-dimensional configuration space we now that every zero-mass vector $q\in H_0$ can be written as $(a,-a)$, so the instantaneous weighted norm satisfies  
\begin{align}
\|q\|^2_j &= a^2 \left(\frac{1}{\pi_j(0)} + \frac{1}{\pi_j(1)}\right) \nn
&= \frac{a^2}{\pi_j(0)\pi_j(1)} \nn
&= \frac{a^2}{V(\beta_j)^2},  
\end{align}
Thus,
\begin{equation}
C_{j-1} = \frac{\|e_{j-1}\|_j}{\|e_{j-1}\|_{j-1}} =  \frac{V(\beta_{j-1})}{V(\beta_j)},   
\end{equation}
where $V(\beta_{j})$ is given in \cref{eq:explic_var}.
We then have an equality expression for the error, 
\begin{equation}
 \|e_j\|_{j} = (1-\delta_j)C_{j-1}\|e_{j-1}\|_{j-1} + (1-\delta_j)\|\pi_{j-1} - \pi_{j}\|_j,  
\end{equation}
where all the quantities have a closed form.

We now investigate the tracking error under different schedules $\beta(t)$, with
$t\in[0,1]$, $\beta(0)=\beta_0$, and $\beta(1)=\beta_f$. Throughout this section we take $\beta_0=0$ and $\beta_f=4$. In addition to standard interpolation schedules, we consider schedules constructed from the Gibbs-state variance, from the spectral gap, from the thermodynamic friction tensor \cite{sivak2012thermodynamic}, and from the local tracking metric derived above, all the schedules are given in \cref{app:sched_construc}.

\begin{figure*}[t]
\centering

% ---------- first row ----------
\begin{minipage}[t]{0.40\textwidth}
    \centering
    \includegraphics[width=\linewidth]{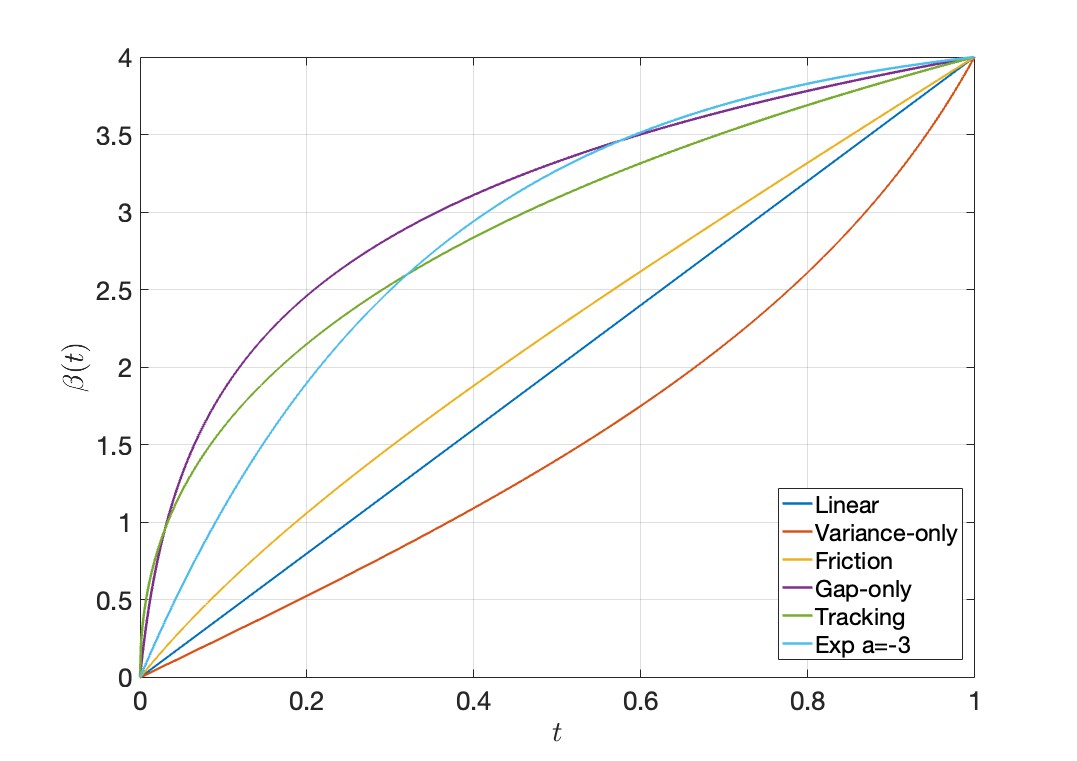}\\[-1mm]
    \textbf{(a)}
\end{minipage}
\hfill
\begin{minipage}[t]{0.40\textwidth}
    \centering
    \includegraphics[width=\linewidth]{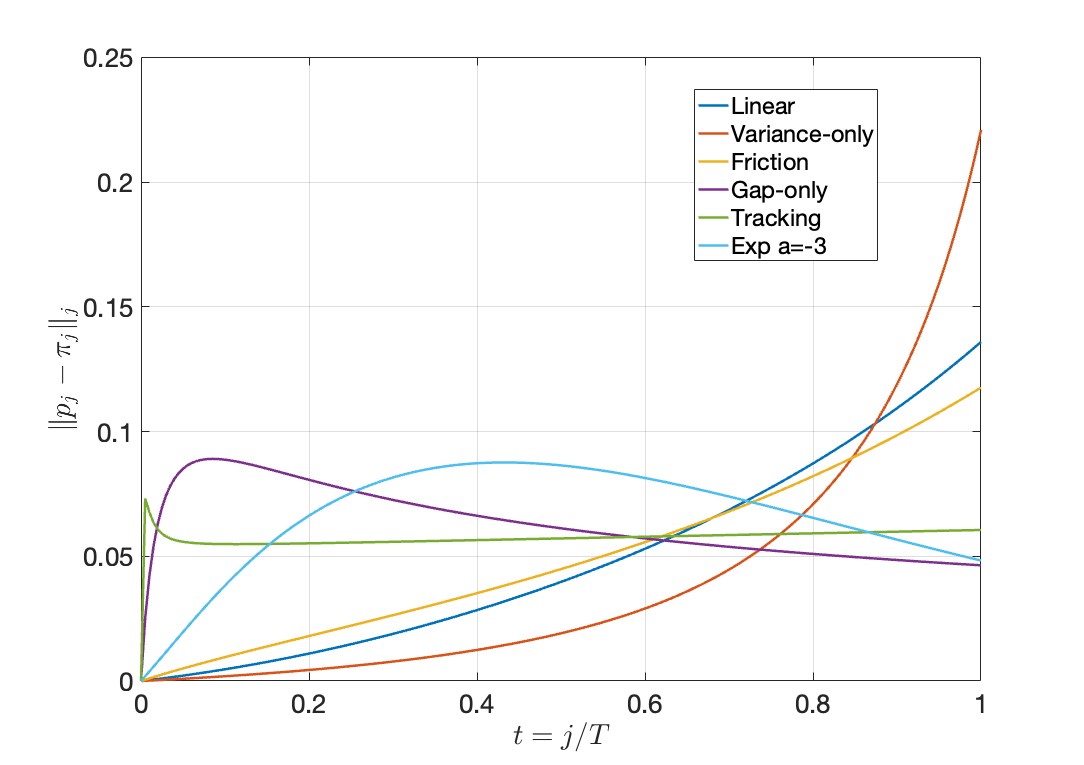}\\[-1mm]
    \textbf{(b)}
\end{minipage}

\vspace{3mm}

% ---------- second row ----------
\begin{minipage}[t]{0.45\textwidth}
    \centering
    \includegraphics[width=\linewidth]{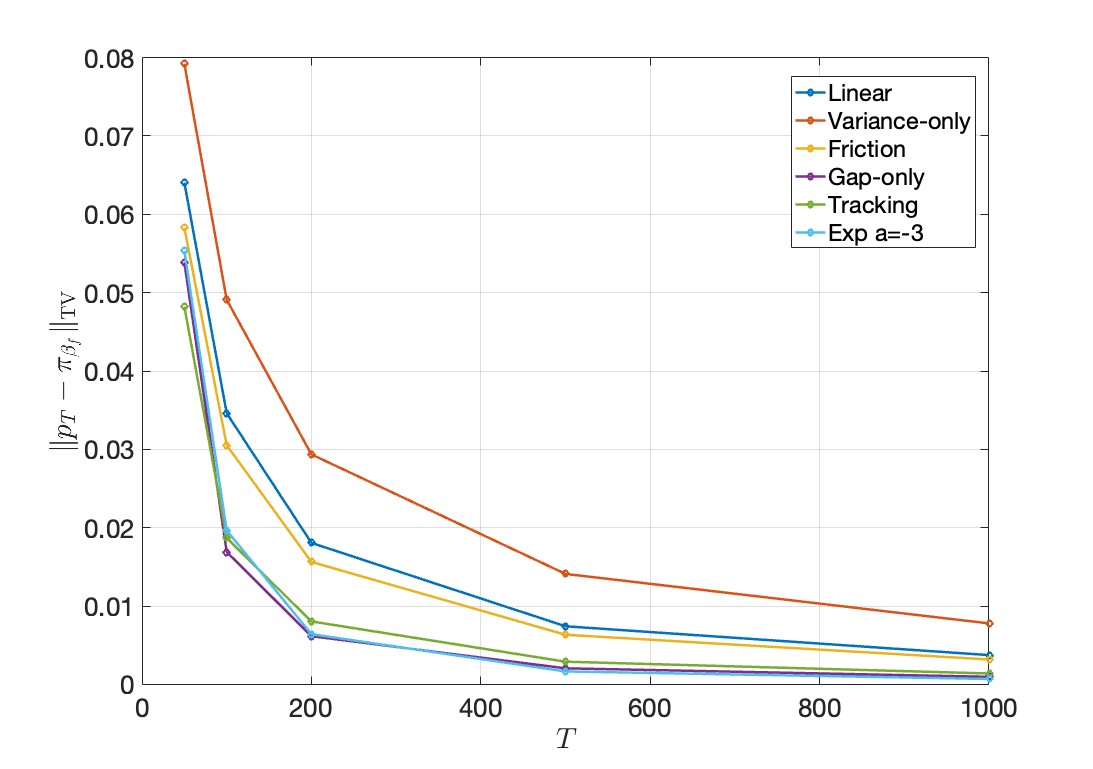}\\[-1mm]
    \textbf{(c)}
\end{minipage}

\caption{
Comparison of schedule design and tracking performance for the analytically solvable two-state model with $\beta_0=0$ and $\beta_f=4$. (a) Inverse-temperature schedules $\beta(t)$ for the main theoretically
motivated protocols. The variance-only schedule depends only on the Gibbs-state variation, the friction schedule follows the constant-excess-power prescription, the gap-only schedule adapts to the relaxation gap, and the local tracking schedule combines the gap and
Gibbs-state variation according to the tracking metric.
(b) Instantaneous weighted tracking error
$e_j=\|p_j-\pi_j\|_{j}$ for $T=200$ Markov steps.
Although $e_j$ contains contributions generated at previous steps, it need not increase monotonically because subsequent Markov steps contract
the accumulated deviation. The local tracking schedule gives the smallest maximum instantaneous error $e_{\max}$ among the theoretically motivated schedules considered.
(c) Final total-variation error $\|p_T-\pi_{\beta_f}\|_{\mathrm{TV}}$ as a function of the number of Markov steps $T$. The gap-only schedule gives the smallest terminal error,
 showing that minimizing the maximum instantaneous tracking error and minimizing the final preparation error are distinct optimization objectives.
}
\label{fig:two_state_summary}

\end{figure*}

The results in \cref{fig:two_state_summary} compare the different schedules
through both the terminal preparation error and the maximum instantaneous weighted tracking error,
\begin{equation}
\label{eq:e_max}
e_{\max} = \max_{0\leq j\leq T}\|e_j\|_j .
\end{equation}
A clear distinction appears between pathwise tracking and terminal state
preparation. The local Tracking schedule gives the smallest maximum instantaneous error, $e_{\max}=7.31\times10^{-2}$,
showing the best control of the deviation from the instantaneous Gibbs state throughout the evolution. However, errors generated at earlier steps can be contracted by subsequent applications of the Markov chain through the factor
$1-\delta_j$, so minimizing $e_{\max}$ does not necessarily minimize the final
error.

Indeed, the Gap-only schedule gives the smallest final total-variation error,
$6.17\times10^{-3}$, followed closely by the front-loaded exponential schedule
with $a=-3$, for which the final error is $6.42\times10^{-3}$. The Tracking
schedule gives a slightly larger terminal error, $8.05\times10^{-3}$, despite
having the smallest $e_{\max}$. The Friction schedule shows an intermediate
behaviour, improving over both Linear and Variance-only but not outperforming
Tracking in pathwise control. This illustrates that constant-excess-power
optimization~\cite{sivak2012thermodynamic}, local adiabatic tracking, and
terminal state preparation correspond to distinct finite-time objectives. Detailed numerical values for all schedules at $T=200$ are reported in
\cref{tab:two_state_T200} in \cref{sec:supp_mate}.

\subsection{Dynamical tracking versus endpoint performance in the
Sherrington--Kirkpatrick model}

We next investigate the tracking geometry using the Sherrington--Kirkpatrick (SK) spin-glass model~\cite{kirkpatrick1983optimization}. This provides a nontrivial many-body setting in which, for the system sizes considered here, the complete Gibbs distribution and the full time-inhomogeneous Markov evolution can be evaluated exactly, without Monte Carlo sampling of the evolving probability distribution. The model also allows us to separate two distinct finite-time objectives: maintaining the evolving distribution close to the instantaneous Gibbs state throughout the annealing path, and efficiently preparing a high-quality approximation to the Gibbs state at the final inverse temperature.

For a disorder realization $r$, the configuration space is
$\Omega=\{-1,+1\}^{n}$, with $|\Omega|=2^n$. A soin configuration is denoted by $\sigma=\{\sigma_1,\cdots,\sigma_n\}$, where $\sigma_{i}\in\{-1,+1\}$. The couplings $J_{ij}^{(r)}$, $i<j$ are independently sampled from $\{-1,+1\}$, with equal probability, with $J^{(r)}_{ji}=J^{(r)}_{ij}$ and $J_{ii}^{(r)}=0$. In the classical schedule comparison, the energy is
\begin{equation}
 E_r(\sigma) =  -2\sum_{1\leq i<j\leq n} J_{ij}^{(r)}\sigma_i\sigma_j ,
\end{equation}
Here the factor of two is part of the energy convention adopted for this classical schedule benchmark. The sum is written over unordered pairs $i<j$; the factor does not arise from double counting within that sum. 

We consider system sizes $n=\{6,8,10,12\}$ and anneal from
$\beta_0=0$ to $\beta_f=0.5$. The exact Gibbs distribution at each inverse temperature is computed directly from the energy spectrum. The Markov evolution is evaluated for
\begin{equation}
\label{eq:samp_va}
T\in \{2^k\times 10^2\}^7_{k=0}.
\end{equation}
For each system size we generate independent disorder realizations, with the same realization used for every schedule being compared so that all comparisons are paired at the instance level. The schedules considered are Linear, Variance-only, Friction, Gap-only, Tracking, and exponential schedules with $a=-3$ and $a=3$. Their explicit construction is given in \cref{app:sched_construc}.

For each realization and schedule, the actual distribution $p_j$ generated after $j$ Markov steps is compared directly with the instantaneous Gibbs distribution $\pi_{\beta_j}$. 
For the SK instances considered, we evaluate the final total-variation error $e_{\mathrm{TV}}^{\mathrm{final}}$, given in \cref{eq:TV_er}
at the final evolution step $j=T$, so that
$e_{\mathrm{TV}}^{\mathrm{final}}=e_{\mathrm{TV}}^T$.

To characterize optimization-relevant endpoint performance, we additionally use the final ground-state probability
\begin{equation}
P_{\mathrm{gs}}(T) = \sum_{x:E(x)=E_{\mathrm{gs}}} p_T(x),
\end{equation}
normalized by its exact Gibbs value,
\begin{equation}
q_{\mathrm{succ}}(T) = \frac{ P_{\mathrm{gs}}(T) }{ P_{\mathrm{gs}}^{\mathrm{Gibbs}}
}.
\label{eq:sk_qsucc}
\end{equation}
Thus, $q_{\mathrm{succ}}=1$ corresponds to reproducing the Gibbs ground-state probability at $\beta_f$. We also use the approximation-energy ratio
\begin{equation}
q_{\mathrm{AEAR}}(T)
= \frac{ \langle E\rangle_T-\langle E\rangle_{\beta_0}
}{ \langle E\rangle_{\beta_f}^{\mathrm{Gibbs}}
- \langle E\rangle_{\beta_0} },
\label{eq:sk_qaear}
\end{equation}
where $\langle E\rangle_T = \sum_x p_T(x)E(x)$. This quantity measures the fraction of the total Gibbs energy improvement, relative to the initial high-temperature state, that has been achieved after $T$ steps, as introduced in \cite{costa2025assessing}. Again, $q_{\mathrm{AEAR}}=1$ corresponds to the exact Gibbs expectation value at $\beta_f$.

We then tested the different schedules by determining the first sampled number of Markov steps required to reach the target value $q_\star=0.99$. In particular,
\begin{equation}
\label{eq:target_acc}
T_{0.99}^{P_{\mathrm{gs}}}
= \min\left\{T \in \mathcal{T} :q_{\mathrm{succ}}(T)\geq0.99\right\},
\qquad
T_{0.99}^{\mathrm{AEAR}}
= \min\left\{T \in \mathcal{T}:q_{\mathrm{AEAR}}(T)\geq0.99\right\}.
\end{equation}
The minimum is taken over the sampled values given in \cref{eq:samp_va}. Instances that do not reach the target within this range are treated as right-censored. Mean hitting times are computed only over instances that reach the target and may therefore involve different subsets of realizations for different schedules. We also report the fraction of  realizations reaching each target. 

For the SK model, we consider the lazy Metropolis transition
matrix $\mathcal W(\beta)$, whose elements are given by
\begin{equation}
\label{eq:SK_lazy_transition}
\mathcal W_{yx}(\beta)
=
\begin{cases}
(1-\alpha)T_{yx}(\beta),
& y\in\mathcal N(x),\\[4pt]
\displaystyle
1-(1-\alpha)\sum_{z\neq x}T_{zx}(\beta),
& y=x,\\[8pt]
0,
& \text{otherwise},
\end{cases}
\end{equation}
where
\begin{equation}
\label{eq:SK_acceptance}
T_{yx}(\beta)
=
\frac{1}{M}
\min\left\{
1,e^{\beta[E(x)-E(y)]}
\right\},
\qquad x\neq y.
\end{equation}
Here, $\mathcal N(x)$ denotes the set of configurations
reachable from $x$ by a single-spin flip, $M=|\mathcal N(x)|=n$,
and $\alpha\in[0,1)$ is the laziness parameter. In our numerical simulations, we set $\alpha=1/2$. Further details of this matrix construction and implementation  are provided in \cref{app:schedu_sk}.

In~\cref{fig:sk_TV_final}, we show the mean final total-variation error $e_{\mathrm{TV}}^{\mathrm{final}}$. We observe that the gap-weighted schedules, particularly Tracking and Gap-only, achieve smaller terminal errors, while Friction exhibits intermediate performance and Variance-only does not consistently outperform Linear.

We then compare the schedules in \cref{fig:sk_success_summary} using the two target criteria defined in \cref{eq:target_acc}. We find that schedules incorporating spectral information reach the targets $q_{\mathrm{succ}}\geq0.99$ and $q_{\mathrm{AEAR}}\geq0.99$ with fewer Markov steps than Linear or Variance-only annealing over the system sizes considered. Both $\overline{T}_{0.99}^{P_{\mathrm{gs}}}$ and $\overline{T}_{0.99}^{\mathrm{AEAR}}$ increase with $n$, reflecting the increasing difficulty of preparing the low-temperature Gibbs state, while Tracking and Gap-only remain systematically favorable across the investigated finite-size range. The log-linear fits
\begin{equation}
\log \overline{T}_{0.99}=a+\alpha n
\end{equation}
serve only as empirical guides to the finite-size data and are not interpreted as asymptotic scaling laws. Detailed results, including those for the remaining schedules, are provided in \cref{sec:supp_mate}.

\begin{figure}[t]
    \centering
\includegraphics[width=0.45\linewidth]{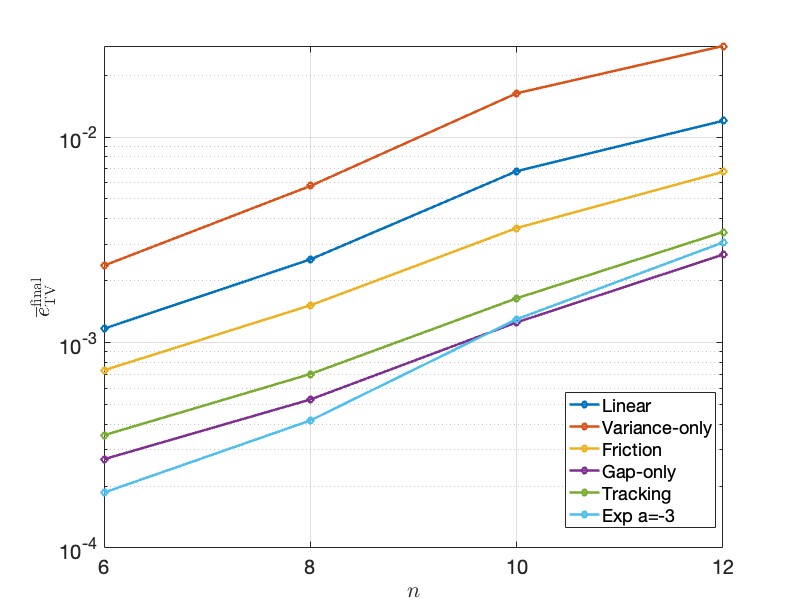}
    \caption{Mean final total-variation error
$e_{\mathrm{TV}}^{\mathrm{final}}$ for the lazy SK chain
at $T=6.4\times 10^3$, using the same disorder realizations
for every schedule at each system size $n$.}
\label{fig:sk_TV_final}
\end{figure}

\begin{figure*}[t]
\centering
\begin{minipage}[t]{0.46\textwidth}
    \centering

    \includegraphics[width=\linewidth]{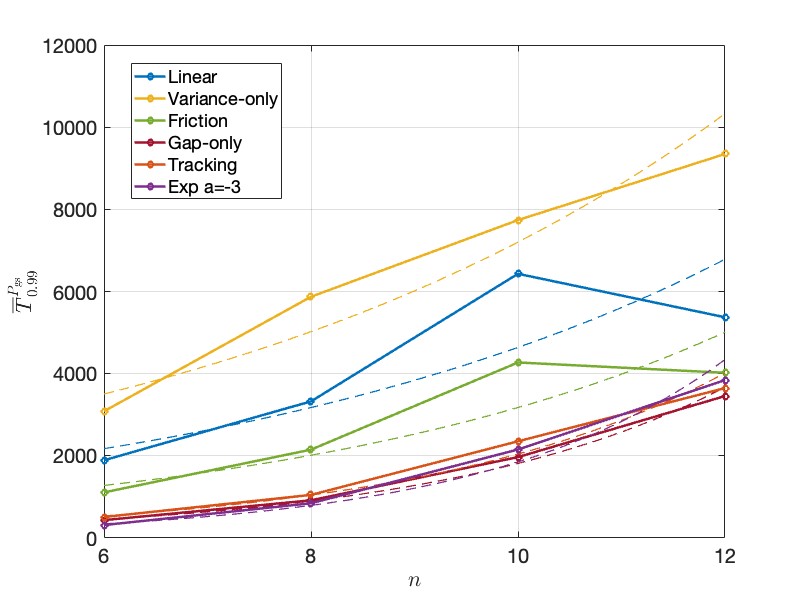}\\[-1mm]
    \textbf{(a)}
\end{minipage}
\hfill
\begin{minipage}[t]{0.46\textwidth}
    \centering
    \includegraphics[width=\linewidth]{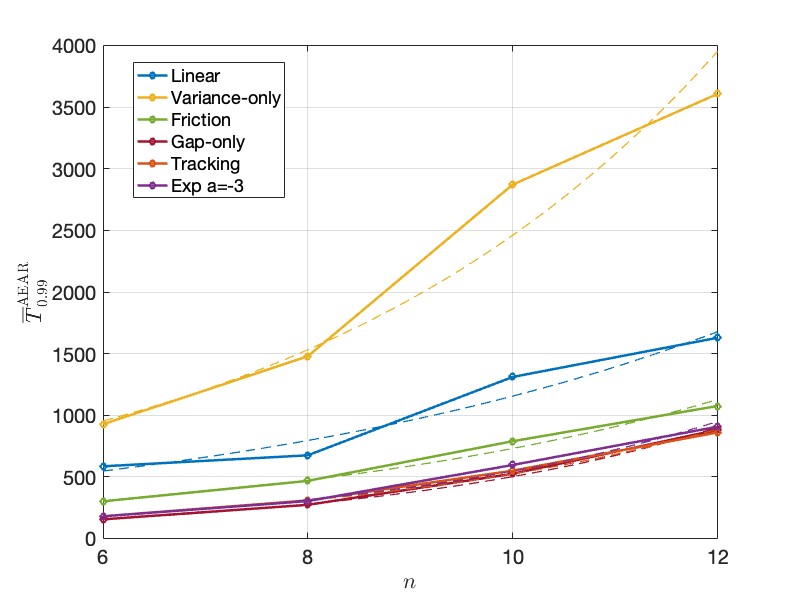}\\[-1mm]
    \textbf{(b)}
\end{minipage}

\caption{
Endpoint race performance for the lazy SK chain using the same disorder realizations for every schedule at each system size $n$. We have in (a) the mean first sampled number of Markov steps $\overline{T}_{0.99}^{P_{\mathrm{gs}}}$ required to reach $q_{\mathrm{succ}}(T) \geq0.99$. The mean is evaluated over realizations that reach the target within the number of steps given in \cref{eq:samp_va}. Different schedules may contribute different subsets of realizations to these conditional means. In (b) we have the race statistic for the normalized energy target $q_{\mathrm{AEAR}}(T)\geq0.99$. The vertical axes are logarithmic and the dashed curves denote finite-size log-linear fits. These fits are shown only as guides to the observed finite-size behaviour and are not interpreted as asymptotic scaling laws.}
\label{fig:sk_success_summary}
\end{figure*}

We also investigated whether the schedule construction can be transferred to system sizes beyond those for which the instantaneous Markov spectrum is computed directly. In this case, rather than diagonalizing the full Markov operator along the annealing path, we extrapolate the spectral-gap behavior from exact results obtained for $n=6,8,10,12$. We then analyze the extrapolation and test the predicted schedule for $n=18$, estimating the energy fluctuations directly from samples of the corresponding Metropolis dynamics. Details of the extrapolation procedure are provided in \cref{app:SK_extap}.

We first show the predicted schedules in \cref{fig:sk_n18_schedule}. Both gap-informed schedules clearly differ from linear interpolation, evolving faster at low inverse temperatures and allocating more Markov steps to the high inverse-temperature region. The Gap-only schedule exhibits the strongest adjustment, while Tracking also incorporates the sampled energy fluctuations, resulting in a smoother variation.

\begin{figure}[t]
    \centering
    \includegraphics[width=0.50\linewidth]{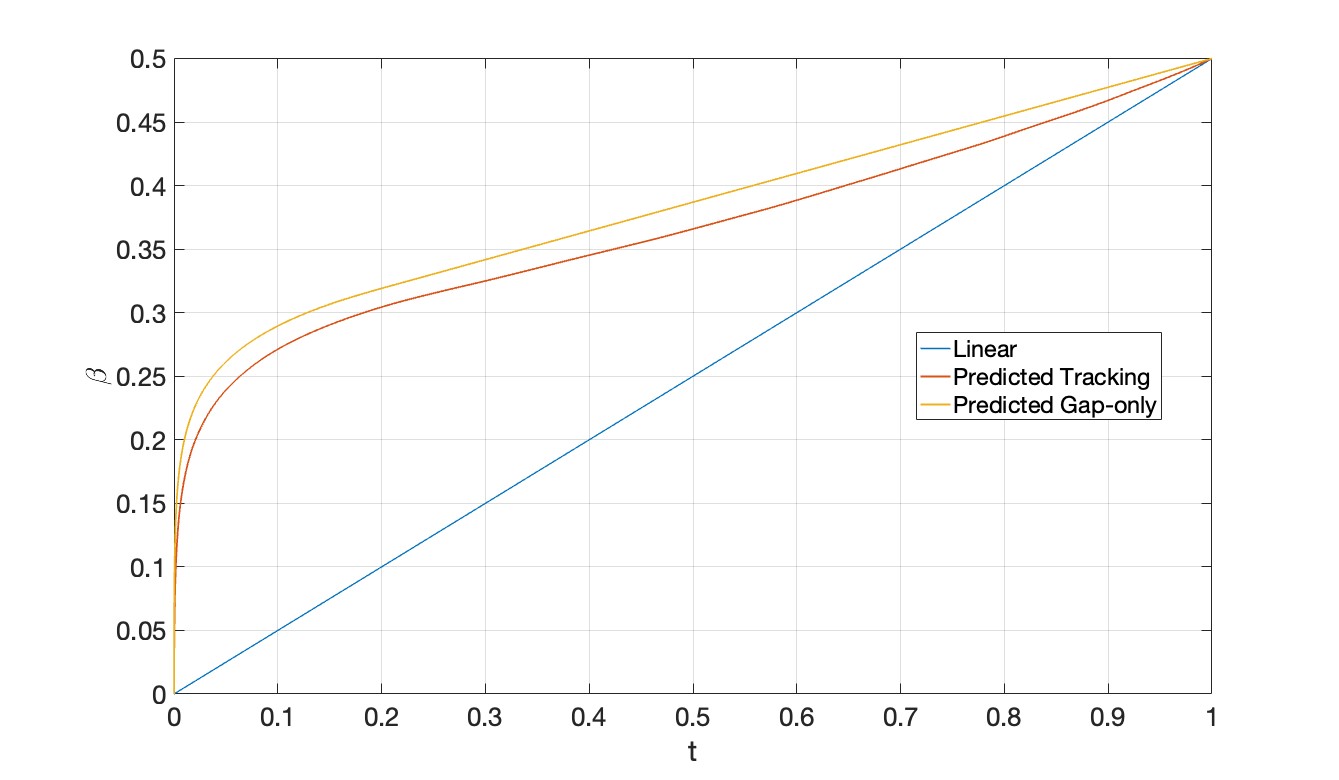}
    \caption{Inverse-temperature schedules for the $n=18$ SK transfer experiment. The predicted Gap-only schedule is constructed from the extrapolated spectral-gap  $\widehat{\delta}_{18}(\beta)$ obtained from the
    $n=6,8,10,12$ results. The predicted Tracking schedule combines the same extrapolated gap with the energy standard deviation $\widehat V(\beta)$ sampled directly from the $n=18$ Metropolis dynamics.
    Both predicted schedules allocate a larger fraction of the evolution to the low-temperature region than the linear interpolation.
    }
    \label{fig:sk_n18_schedule}
\end{figure}

The corresponding terminal-performance results are shown in \cref{fig:sk_n18_performance}. The left panel reports the mean normalized approximation-energy ratio, $\overline q_{\rm AEAR}(T)$, while the right panel shows the mean normalized ground-state probability, $\overline q_{\rm succ}(T)$. The predicted schedules are constructed using spectral gaps extrapolated from averages over 56 disorder realizations at each smaller system size.

For the approximation-energy objective, both predicted schedules outperform Linear over the entire range of Markov steps considered, with Tracking achieving the highest mean $\overline q_{\rm AEAR}$. The differences decrease as $T$ increases and the evolution approaches the quasistatic regime. The improvement is more pronounced for the ground-state probability, where both gap-adapted schedules achieve higher $\overline q_{\rm succ}$ than Linear. This suggests that allocating more steps to the low-temperature region, where relaxation is slower, improves the preparation of the target Gibbs distribution.

\begin{figure*}[t]
\centering
\begin{minipage}[t]{0.48\textwidth}
    \centering
    \includegraphics[width=\linewidth]{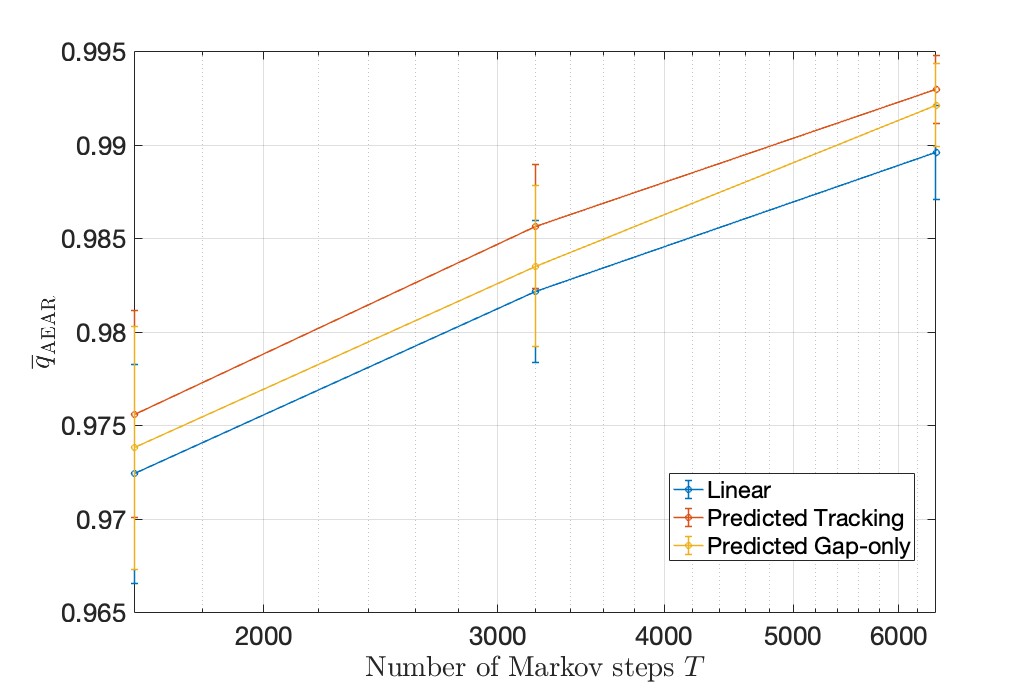}\\[-1mm]
    \textbf{(a)}
\end{minipage}
\hfill
\begin{minipage}[t]{0.48\textwidth}
    \centering
    \includegraphics[width=\linewidth]{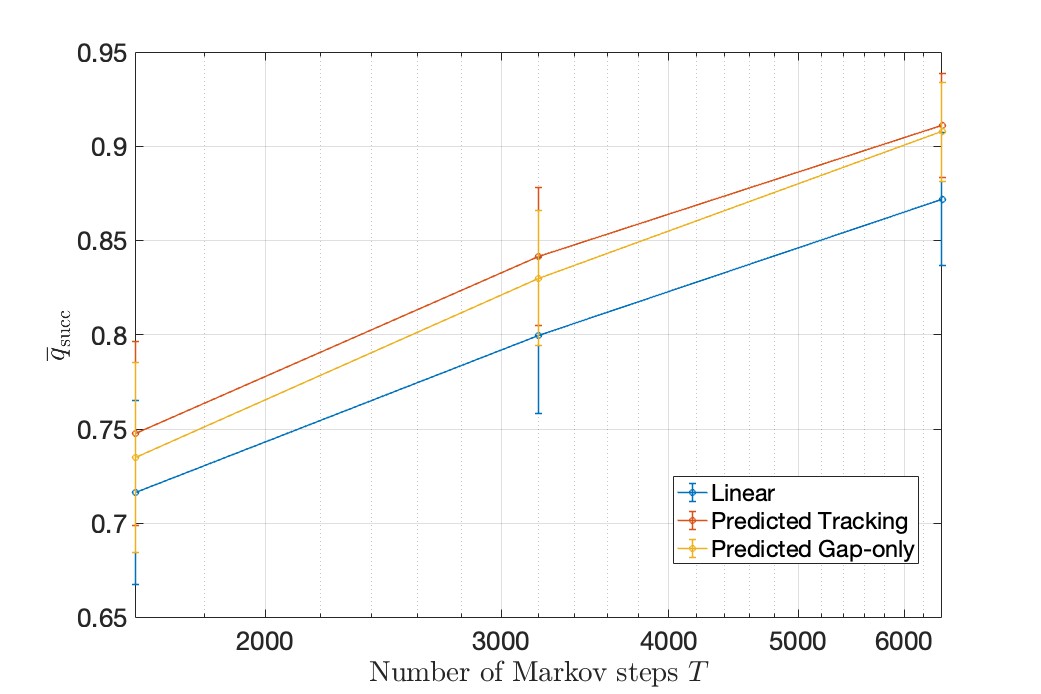}\\[-1mm]
    \textbf{(b)}
\end{minipage}

\caption{Terminal performance for the $n=18$ SK extrapolation study using the linear, predicted Tracking, and predicted Gap-only schedules. (a) Mean normalized approximation-energy ratio $\overline q_{\rm AEAR}$. (b) Mean normalized ground-state probability $\overline q_{\rm succ}$.
Error bars show the variation across disorder realizations. Both gap-informed schedules outperform the linear schedule at finite $T$, more significantly for the ground-state probability. The $n=18$ gap behavior is extrapolated from the
exactly analyzed systems $n=6,8,10,12$ and should therefore be interpreted as a prediction rather than a rigorous gap estimate. }
\label{fig:sk_n18_performance}
\end{figure*}

These results provide a practical test of the proposed schedule construction beyond the sizes that can be treated exactly. The full $n=18$ Markov spectrum is not required to design the annealing path: the slow-relaxation behaviour is inferred from the finite-size dependence of smaller exactly tractable systems, while the thermodynamic contribution $\widehat V(\beta)$ is estimated directly from samples of the target system. The improvement over the Linear schedule therefore suggests that spectral information from smaller systems can be combined with directly sampled thermodynamic quantities to construct effective schedules at larger sizes. Since the extrapolated gap is not a certified lower bound, however, these results should be interpreted as numerical evidence rather than as a rigorous application of the tracking theorem.

\section*{Part B: Quantum adiabatic tracking}

\section{Discrete Quantum adiabatic theorems}
\label{sec:int_disc}

We apply the discrete quantum adiabatic theorem of
Refs.~\cite{Costa2022QLSP,AnCostaBerry2025} to a temperature-dependent family
of Szegedy quantum walks. Let $U_w(s)$, with $ s\in[0,1]$,  be a family of unitary operators sampled at $s_j= j/T_d$, $j=0,\ldots,T_d-1$. The physical evolution after $n$ steps is
\begin{equation}
\mathcal{U}\!\left(\frac{n}{T_d}\right)
= \prod_{j=0}^{n-1} U_w\!\left(\frac{j}{T_d}\right),
\end{equation}
while $\mathcal{U}_A(s)$ denotes the corresponding ideal adiabatic evolution, which
transports the instantaneous eigenspace of interest without leakage. The theorem quantifies how close $\mathcal{U}(s)$ remains to $\mathcal{U}_A(s)$ in terms of the
variation of the walk and the local eigenphase gap. The first and second
finite differences are bounded as
\begin{equation}
\label{eq:dif_op}
\left\| D^{(k)}U_w(s) \right\| \leq \frac{c_k(s)}{T_d^k},
\qquad
k=1,2,
\end{equation}
where $c_1(s)$ and $c_2(s)$ measure the local first- and second-order
variation of the unitary family. We denote by $\hat c_k(s)$ the corresponding
local maxima over neighboring steps.

Let $\check{\Delta}_2(s)$ denote the minimum eigenphase separation, over the
relevant neighboring walk operators, between the instantaneous eigenspace
being followed and the remainder of the spectrum. Provided that the spectral-separation assumptions hold and
\begin{equation}
\label{eq:run_cond}
T_d\ge\max_s
\frac{4\widehat c_1(s)}{\check\Delta_2(s)},    
\end{equation}
a simplified form of the discrete adiabatic theorem gives
\begin{align}
\left\| \mathcal{U}(s)-\mathcal{U}_A(s) \right\|
\leq {}& \frac{12\hat c_1(0)}{T_d\check{\Delta}_2(0)^2} +
\frac{ 12\hat c_1(s)}{T_d\check{\Delta}_2(s)^2}
+  \frac{ 6\hat c_1(s)}{T_d\check{\Delta}_2(s)}
    \nonumber\\
&+ 305\sum_{j=1}^{n-1} \frac{\hat c_1(j/T_d)^2 }{T_d^2 \check{\Delta}_2(j/T_d)^3}
+ 44 \sum_{j=0}^{n-1} \frac{ \hat c_1(j/T_d)^2}{T_d^2 \check{\Delta}_2(j/T_d)^2} + 32 \sum_{j=1}^{n-1} \frac{\hat c_2(j/T_d)}{T_d^2 \check{\Delta}_2(j/T_d)^2}.
\label{eq:discrete_adiabatic_bound}
\end{align}

The theorem bounds adiabatic error through both spectral separation and the first and second discrete variations of the walk. Below we give sufficient conditions under which these contributions retain inverse-linear eigenphase-gap scaling. The eigenphase amplification alone does not supply such a runtime guarantee.

\section{Szegedy quantum walks}
\label{sec:QW_S}

The quantum walk operator $U_w$ used for Gibbs-state preparation is the one introduced in \cite{Lemieux2020},
\begin{equation}
\label{eq:S_walk}
U_w(\beta) = R W(\beta)^{\dagger} S W(\beta),
\end{equation}
where $W(\beta)$ is the state-preparation unitary
\begin{equation}
\label{eq:prep_W}
W(\beta)\ket{x}\ket{0} = \sum_y \sqrt{\mathcal W_{yx}(\beta)} \ket{y}\ket{x}.
\end{equation}
Here, $\mathcal W(\beta)$ is the reversible Metropolis transition matrix at inverse temperature $\beta$, $S$ is the swap operator between the two registers, and
\begin{equation}
R=2\Pi_0-I,
\end{equation}
with $\Pi_0$ the projector onto the subspace spanned by $\{\ket{x}\ket{0}\}$.

There is a simple way to relate the spectrum of the quantum walk to that of
the underlying reversible Markov chain. We introduce the operator
\begin{equation}
X(\beta) = \Pi_0 W(\beta)^\dagger S W(\beta)\Pi_0,
\label{eq:discriminant_operator}
\end{equation}
usually referred to as the discriminant operator. Using the definition of
the state-preparation unitary in \cref{eq:prep_W}, we obtain
\begin{align}
X(\beta) &= \sum_{x,y}
\sqrt{ \mathcal W_{xy}(\beta)\mathcal W_{yx}(\beta)}\ket{y}\bra{x}\otimes
\ket{0}\bra{0}.
\end{align}
For a reversible Markov chain, detailed balance gives $\mathcal W_{yx}(\beta)\pi_\beta(x)  = \mathcal W_{xy} (\beta)\pi_\beta(y)$
Therefore,
\begin{equation}
\label{eq:X_mat}
X(\beta) = \sum_{x,y} \sqrt{ \frac{\pi_\beta(x)} {\pi_\beta(y)}}  \mathcal W_{yx}(\beta)\ket{y}\bra{x}\otimes
    \ket{0}\bra{0}.
\end{equation}
Restricting to the subspace $\mathcal E_0$, this can be written as $ X(\beta) = D_{\pi_\beta}^{-1/2} \mathcal W(\beta)  D_{\pi_\beta}^{1/2}$, where $D_{\pi_\beta} =
\operatorname{diag}(\pi_\beta)$. Thus, $X(\beta)$ is related to $\mathcal W(\beta)$ by a similarity
transformation and therefore has exactly the same eigenvalues.

Let
\begin{equation}
X(\beta)\ket{\phi_k(\beta)} = \lambda_k(\beta)
    \ket{\phi_k(\beta)}.
\end{equation}
For a reversible and ergodic chain, $1=\lambda_0(\beta) > \lambda_1(\beta) \geq \lambda_2(\beta) \geq\cdots$, 
where $\lambda_0=1$ corresponds to the stationary Gibbs distribution. The corresponding invariant subspace of the quantum walk
$U_w(\beta)$ has eigenvalues
\begin{equation}
e^{\pm i\theta_k(\beta)}, \qquad
\cos\theta_k(\beta) = \lambda_k(\beta),
\label{eq:QW_phase_mapping}
\end{equation}
up to the convention used for the definition of the Szegedy walk. For the stationary state, $\lambda_0(\beta)=1$,  and therefore $\theta_0(\beta)=0$. The corresponding eigenstate is the coherent encoding of the Gibbs distribution,
\begin{equation}
\ket{\pi_\beta} = \sum_x \sqrt{\pi_\beta(x)} \ket{x}.
\end{equation}

Notice that the walk $U_w(\beta)$ leaves the coherent Gibbs state $|\psi_\beta\rangle=|\pi_\beta\rangle\otimes|0\rangle$ unchanged. However, other states may also have eigenvalue $+1$. We, therefore, apply the adiabatic theorem to the space containing all these states, with projector
\begin{equation}
P_+(\beta)
=|\psi_\beta\rangle\langle\psi_\beta|
+P_{\mathrm{extra}}(\beta), \quad P_{\mathrm{extra}}(\beta)\Pi_0=0.   
\end{equation}
As proved in \cref{app:statio_eig}, the extra states lie outside the preparation subspace, and the ideal adiabatic evolution does not mix the Gibbs states with them. Thus, starting from the initial Gibbs state, the ideal evolution reaches the final Gibbs state. The adiabatic theorem bounds the actual evolution’s deviation from this ideal evolution.

Let $\delta(\beta) = 1-\lambda_1(\beta)$
denote the spectral gap of the underlying Markov chain. The eigenphase gap separating the entire eigenvalue $+1$ eigenspace from the remaining spectrum is
\begin{equation}
\Delta_w(\beta) = \arccos\!\left[ 1-\delta(\beta)\right],
\label{eq:QW_gap_exact}
\end{equation}
which is  the spectral separation used in the adiabatic theorem.
For a small Markov-chain gap, $\delta(\beta)\ll1$ it follows that
\begin{equation} 
\Delta_w(\beta) = \Theta\!\left( \sqrt{\delta(\beta)} \right).
\label{eq:QW_spectral_amplification}
\end{equation}

This is the spectral amplification produced by the quantized Markov chain:
a classical gap $\delta$ is mapped to a quantum-walk eigenphase gap
$\Delta_w=\Theta(\sqrt{\delta})$. However, this does not by itself imply a
quadratic speedup. Since $U_w(\beta)$ is unitary, repeated application does not
mix to the Gibbs state, and $1/\Delta_w$ is not a mixing time. Likewise, for a
temperature-dependent product $ U_w(\beta_T)\cdots U_w(\beta_1)$,  the square-root gap advantage is preserved only if the walk varies slowly enough along the path. In the discrete adiabatic theorem this variation is controlled by the finite-difference coefficients $c_1$ and $c_2$, which therefore determine whether the quadratic gap improvement survives. We then, next, show under with conditions the finite-difference coefficients allow us to have the spectrum amplification to prepare the coherent Gibbs state.

\section{Gap-adapted quantum schedules}
\label{sec:gap_adap}

To determine the conditions under which the quantum walk preserves the spectral amplification, we evaluate the first- and second-order difference operators, $k=1,2$, defined in \cref{eq:dif_op} for the Szegedy quantum walk. We begin by deriving bounds on the corresponding coefficients $c_k$, $k=1,2$, assuming bounds are available for the finite differences of the state-preparation unitary $W$, where the proof is given in \cref{app:proof_lemqw_finiy}.

\begin{lemma}[Finite-difference bounds for the Szegedy walk] \label{lem:qw_finite_differences_a}
Let $U_w(s) = R W(s)^\dagger S W(s)$ where $R$ and $S$ are independent of $s$. Suppose that
\begin{equation}
\left\| D^{(1)}W(s) \right\| \leq\frac{a_1(s)}{T_d}, \quad  \left\| D^{(2)}W(s) \right\| \leq
    \frac{a_2(s)}{T_d^2}.
\end{equation}
Then
\begin{equation}
 \left\| D^{(1)}U_w(s)\right\| \leq \frac{c_1(s)}{T_d},\quad \left\|
    D^{(2)}U_w(s)
    \right\|
    \leq
    \frac{c_2(s)}{T_d^2},
\end{equation}
with
\begin{equation}
\label{eq:c1_c2}
c_1(s) \leq 2a_1(s),\quad c_2(s) \leq 2a_2(s) + 2a_1(s)
    a_1\!\left(s+\frac{1}{T_d}\right).
\end{equation}
\end{lemma}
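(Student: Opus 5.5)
The plan is to work directly with the definitions of the forward finite differences. Take step $h=1/T_d$ and write $W_k\coloneqq W(s+kh)$ for $k=0,1,2$. Then
\begin{equation*}
D^{(1)}W(s)=W_1-W_0,\qquad D^{(2)}W(s)=W_2-2W_1+W_0,
\end{equation*}
and analogously for $U_w$. Since $R$ and $S$ are fixed and unitary, I would factor $U_w(s)=R\,G(s)H(s)$ with $G(s)=W(s)^\dagger$ and $H(s)=SW(s)$. The fixed factor $R$ never affects norms, so it suffices to bound the differences of the product $GH$.

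Four facts will be used repeatedly. First, $\|G(s)\|=\|H(s)\|=1$ by unitarity. Second, $\|D^{(k)}G(s)\|=\|(D^{(k)}W(s))^\dagger\|=\|D^{(k)}W(s)\|$, because the adjoint preserves the operator norm. Third, $\|D^{(k)}H(s)\|=\|S\,D^{(k)}W(s)\|=\|D^{(k)}W(s)\|$, because $S$ is unitary. Fourth, for any operator $A$, $\|RA\|=\|A\|$, again because $R$ is unitary.

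For the first difference I would use the discrete product rule
\begin{equation*}
G_1H_1-G_0H_0=(G_1-G_0)H_1+G_0(H_1-H_0).
\end{equation*}
Taking norms term by term gives $\|D^{(1)}U_w(s)\|\le 2a_1(s)/T_d$. Hence $c_1(s)\le 2a_1(s)$.

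For the second difference I would verify the discrete Leibniz identity
\begin{equation*}
G_2H_2-2G_1H_1+G_0H_0=(D^{(2)}G)\,H_2+G_0\,(D^{(2)}H)+2(G_1-G_0)(H_2-H_1).
\end{equation*}
This follows by expanding the right-hand side and cancelling terms. In the last term, $G_1-G_0=D^{(1)}G(s)$ is bounded by $a_1(s)/T_d$. The factor $H_2-H_1=S\,D^{(1)}W(s+h)$ is bounded by $a_1(s+1/T_d)/T_d$. The two second-difference terms are each bounded by $a_2(s)/T_d^2$, since the unitary factors $H_2$ and $G_0$ have norm one. Together these give
\begin{equation*}
\|D^{(2)}U_w(s)\|\le \frac{2a_2(s)+2a_1(s)\,a_1(s+1/T_d)}{T_d^2},
\end{equation*}
which is the claimed bound on $c_2(s)$.

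The only delicate step is getting the second-order Leibniz identity exactly right, with the correct index shifts. In particular, the mixed term must involve $a_1(s)$ and $a_1(s+1/T_d)$, rather than, say, $a_1(s)^2$. The rest is routine use of the triangle inequality together with the unitary invariance of the operator norm.
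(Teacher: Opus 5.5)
Your proof is correct and follows essentially the same route as the paper. You use the same first-order product rule, and your second-order Leibniz identity coincides term by term with the paper's decomposition $D^{(2)}W^\dagger(s)\,S\,W(s+2h)+2\,D^{(1)}W^\dagger(s)\,S\,D^{(1)}W(s+h)+W^\dagger(s)\,S\,D^{(2)}W(s)$, followed by the triangle inequality and unitary invariance of the operator norm; the only difference is that you state the identity and check it by expansion, whereas the paper derives it by adding and subtracting cross terms.
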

The finite-difference bounds above are fully discrete; when the walk is generated by a smooth path $W(\beta(s))$, they can be related to derivatives of $W$ and of the schedule $\beta(s)$ through a Taylor expansion, as stated in the following lemma, where the proof is given in \cref{app:proof_finit_difc}.

\begin{lemma}[Smooth-path bounds]
\label{lem:qw_finite_differences_L}
Let $W(s)=W(\beta(s))$ and define
\begin{equation}
L_1(\beta) = \left\| \partial_\beta W(\beta) \right\|, \qquad L_2(\beta) = \left\| \partial_\beta^2 W(\beta) \right\|.
\end{equation}
For a smooth schedule $\beta(s)$,
\begin{equation}
a_1(s) = |\dot\beta(s)| L_1(\beta(s)) + O(T_d^{-1}),
\end{equation}
and
\begin{equation}
a_2(s) \leq |\ddot\beta(s)| L_1(\beta(s)) + |\dot\beta(s)|^2 L_2(\beta(s)) + O(T_d^{-1}).
\end{equation}
Consequently,
\begin{equation}
c_1(s) \leq 2|\dot\beta(s)| L_1(\beta(s)) + O(T_d^{-1}),
    \label{eq:c1_L}
\end{equation}
and
\begin{equation} 
c_2(s) \leq 2|\ddot\beta(s)|L_1(\beta(s))+2|\dot\beta(s)|^2\left[L_2(\beta(s))+L_1(\beta(s))^2\right]+O(T_d^{-1}).
\label{eq:c2_L}
\end{equation}
\end{lemma}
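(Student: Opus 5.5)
The plan is to Taylor-expand the discrete differences of $W(\beta(s))$ in the step $h=1/T_d$ and then feed the resulting $a_1,a_2$ into \cref{lem:qw_finite_differences_a}. I will use the conventions $D^{(1)}W(s)=W(s+h)-W(s)$ and $D^{(2)}W(s)=W(s+2h)-2W(s+h)+W(s)$. I assume $\beta\in C^3$ on $[0,1]$ and $W\in C^3$ in $\beta$ on $[\beta_0,\beta_f]$, so that all third derivatives are uniformly bounded. This gives uniform $O(\cdot)$ constants depending only on the fixed path.

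\textbf{Step 1 (first differences).} Set $F(s)=W(\beta(s))$. The chain rule gives
\begin{equation}
\dot F=\dot\beta\,\partial_\beta W,\qquad \ddot F=\ddot\beta\,\partial_\beta W+\dot\beta^2\,\partial_\beta^2W.
\end{equation}
By Taylor's theorem with integral remainder, $D^{(1)}F(s)=h\dot F(s)+R_1$, where $\|R_1\|\le \tfrac{h^2}{2}\sup\|\ddot F\|$. Taking norms and using the triangle inequality,
\begin{equation}
\|D^{(1)}F(s)\|\le h\,|\dot\beta(s)|L_1(\beta(s))+O(h^2).
\end{equation}
Hence I can take $a_1(s)=|\dot\beta(s)|L_1(\beta(s))+O(T_d^{-1})$. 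This is an upper bound, with equality to leading order because the norm of $h\dot F$ is exactly the leading term.

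\textbf{Step 2 (second differences).} The second difference is an integral of the second derivative:
\begin{equation}
D^{(2)}F(s)=\int_0^h\!\int_0^h \ddot F(s+u+v)\,du\,dv=h^2\ddot F(s)+O(h^3\sup\|\dddot F\|).
\end{equation}
The norm of the leading term satisfies $\|\ddot F\|\le|\ddot\beta|L_1+\dot\beta^2L_2$. Therefore
\begin{equation}
a_2(s)\le|\ddot\beta(s)|L_1(\beta(s))+|\dot\beta(s)|^2L_2(\beta(s))+O(T_d^{-1}).
\end{equation}

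\textbf{Step 3 (transfer to $U_w$).} \Cref{lem:qw_finite_differences_a} gives $c_1\le 2a_1$, which yields \cref{eq:c1_L} directly. For $c_2$ I need the product $a_1(s)a_1(s+h)$. The function $s\mapsto|\dot\beta(s)|L_1(\beta(s))$ is Lipschitz, since it is a composition of $C^1$ functions with the absolute value and the norm, both of which are Lipschitz. Hence $a_1(s+h)=a_1(s)+O(h)$, and
\begin{equation}
a_1(s)a_1(s+h)=|\dot\beta|^2L_1^2+O(T_d^{-1}).
\end{equation}
Substituting this into $c_2\le 2a_2+2a_1a_1(\cdot+h)$ gives \cref{eq:c2_L}.

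\textbf{Main obstacle.} The only delicate point is the uniformity of the $O(T_d^{-1})$ remainders. These require bounds on $\partial_\beta^3W$ and $\dddot\beta$, and so implicitly on derivatives of $\sqrt{\mathcal W_{yx}(\beta)}$. I would state this smoothness as a standing assumption. For Metropolis kernels the factor $\min\{1,e^{\beta\Delta E}\}$ is smooth in $\beta$ for fixed $x,y$, since the sign of $\Delta E$ is fixed. The diagonal entries stay bounded away from zero by laziness, so the square roots remain smooth. The completion of $W$ to a unitary must also be chosen smoothly in $\beta$, for example via a fixed Householder or rotation construction. I would note this explicitly rather than prove it in generality.
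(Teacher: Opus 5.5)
Your proposal is correct and follows essentially the same route as the paper. You Taylor-expand $D^{(1)}W$ and $D^{(2)}W$ in $h=1/T_d$, bound $\ddot F=\ddot\beta\,\partial_\beta W+\dot\beta^2\partial_\beta^2W$ by the triangle inequality, and substitute into $c_1\le 2a_1$ and $c_2\le 2a_2+2a_1(s)a_1(s+h)$ with $a_1(s+h)=a_1(s)+O(h)$. The only difference is presentational: your integral-remainder forms and explicit $C^3$ hypothesis make the uniformity of the $O(T_d^{-1})$ remainders explicit, which the paper leaves implicit in its assumption that $W(\beta(s))$ is ``sufficiently smooth.''
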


The previous lemmas determine how the variation of the Szegedy walk depends on the temperature schedule. We now choose the schedule so that this variation is suppressed in regions where the eigenphase gap becomes small and show that the square-root spectral amplification is then preserved in
the discrete adiabatic evolution.

For $1<p<2$, define the gap-adapted schedule by
\begin{equation}
\label{eq:quantum_gap_schedule}
\dot{\beta}(s)
= \mathcal N_p
\frac{\Delta_w(\beta(s))^p}{L_1(\beta(s))},
\end{equation}
where the normalization constant $\mathcal N_p$ is chosen such that
$\beta(0)=\beta_0$ and $\beta(1)=\beta_f$. Explicitly,
\begin{equation}
\label{eq:Np}
\mathcal N_p
= \int_{\beta_0}^{\beta_f} \frac{L_1(\beta)}
{\Delta_w(\beta)^p} \,d\beta.
\end{equation}
Substitution into the leading term of \cref{eq:c1_L} suggests the scaling below. Applying this scaling uniformly to the discrete coefficients requires additional control of the finite-difference remainders, which we impose explicitly in \cref{thm:quantum_gap_scaling}.
\begin{equation}
\label{eq:c1_gap_schedule}
c_1(s) = O\left( \mathcal N_p \Delta_w(\beta(s))^p \right).
\end{equation}

Now in order to get the second order coefficient, we first differentiate \cref{eq:quantum_gap_schedule} with respect to $s$, that gives
\begin{equation}
\ddot{\beta}(s) =
\mathcal N_p^2 p\,
\frac{ \Delta_w(\beta)^{2p-1}
\partial_\beta\Delta_w(\beta)}{ L_1(\beta)^2} -
\mathcal N_p^2 \frac{ \Delta_w(\beta)^{2p}
\partial_\beta L_1(\beta)}{L_1(\beta)^3
}.
\label{eq:beta_ddot_schedule}
\end{equation}
We, then, assume that along the interpolation the walk satisfies the regularity conditions
\begin{equation}
\label{eq:quantum_regular_conditions}
\frac{|\partial_\beta\Delta_w|}{L_1}=O(1),
\qquad \frac{|\partial_\beta L_1|}{L_1^2}=O(1), \qquad
\Delta_w\frac{L_2}{L_1^2}=O(1).
\end{equation}
Combining \cref{eq:beta_ddot_schedule} with the regularity conditions in \cref{eq:quantum_regular_conditions} gives the following scaling for the leading derivative contributions in \cref{eq:c2_L}. Uniform bounds on the full discrete coefficients are assumed separately in \cref{thm:quantum_gap_scaling}.
\begin{equation}
\label{eq:c2_gap_schedule}
c_2(s) = O\left( \mathcal N_p^2 \Delta_w(\beta(s))^{2p-1} +
\mathcal N_p^2 \Delta_w(\beta(s))^{2p} \right).
\end{equation}

\begin{theorem}[Gap scaling of the discrete quantum adiabatic evolution]\label{thm:quantum_gap_scaling}

Consider the temperature-dependent Szegedy walk $U_w(\beta(s))$ and assume
the hypotheses of Lemmas~\ref{lem:qw_finite_differences_a} and
\ref{lem:qw_finite_differences_L}. Let the schedule be chosen according to
\cref{eq:quantum_gap_schedule}, with $1<p<2$, and suppose that the regularity conditions in \cref{eq:quantum_regular_conditions} hold.
Assume furthermore that the local multistep gap entering the discrete adiabatic theorem is uniformly comparable to the instantaneous Szegedy gap,
$\check{\Delta}_2(s)=\Theta(\Delta_w(s))$, and define
$\Delta_{w,\min}=\min_s\Delta_w(s)$. Let $s_j=j/T_d$. Assume that the neighboring-step maxima of the discrete coefficients satisfy
\begin{equation}
\label{eq:c1c2_M}
\hat c_1(s_j)
\le A_1\mathcal N_p\Delta_w(s_j)^p, \quad    \hat c_2(s_j)
\le A_2\mathcal N_p^2
\left[ \Delta_w(s_j)^{2p-1} +\Delta_w(s_j)^{2p}
\right],
\end{equation}
at all indices where the corresponding coefficients enter the adiabatic bound. Assume also the uniform quadrature estimate
\begin{equation}
\frac1{T_d}
\sum_{j=0}^{T_d-1}\Delta_w(s_j)^{2p-3}
\le
C_{\mathrm q}\int_0^1\Delta_w(s)^{2p-3}\,ds.    
\end{equation}
The constants $A_1,A_2,C_{\mathrm q}$, the constants in the gap comparison, and those in the integral assumptions below are independent of $T_d$ and $\Delta_{w,\min}$ over the family under consideration. The runtime is assumed to satisfy \cref{eq:run_cond}. If the schedule integrals satisfy
\begin{equation}
\label{eq:schedule_integral_condition1}
\mathcal N_p
= O\left( \Delta_{w,\min}^{\,1-p}\right),
\end{equation}
and
\begin{equation}
\label{eq:schedule_integral_condition2}
\mathcal N_p
\int_{\beta_0}^{\beta_f}
\frac{L_1(\beta)}
{\Delta_w(\beta)^{3-p}}
\,d\beta
= O\left(
\Delta_{w,\min}^{-1}
\right),
\end{equation}
then the discrete adiabatic error satisfies
\begin{equation}
\label{eq:quantum_error_scaling}
\left\|
\mathcal U(1)-\mathcal U_A(1)
\right\|
=
O\left( \frac{1}
{T_d\Delta_{w,\min}}
\right).
\end{equation}
Consequently, for fixed target adiabatic error,
\begin{equation}
\label{eq:quantum_runtime_phasegap}
T_d
=
O\left(
\Delta_{w,\min}^{-1}
\right).
\end{equation}
Since
$\Delta_{w,\min}=\Theta(\sqrt{\delta_{\min}})$,
the corresponding dependence on the minimum Markov-chain gap is
\begin{equation}
\label{eq:quantum_runtime_markovgap}
T_d = O\left(\delta_{\min}^{-1/2}\right).
\end{equation}
\end{theorem}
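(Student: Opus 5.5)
The plan is to substitute the assumed coefficient bounds \cref{eq:c1c2_M} and the gap comparison $\check\Delta_2(s)=\Theta(\Delta_w(s))$ directly into the simplified discrete adiabatic bound \cref{eq:discrete_adiabatic_bound}, which is valid under the runtime condition \cref{eq:run_cond}. Each of its six terms will be bounded separately by $O(1/(T_d\Delta_{w,\min}))$. The sums are converted to integrals using the quadrature estimate, and the integrals are then controlled using \cref{eq:schedule_integral_condition1,eq:schedule_integral_condition2}. Throughout, $1<p<2$ is used so that the exponents have the right signs, and $\Delta_w\le\pi$ is used so that higher powers of $\Delta_w$ can be absorbed into lower ones.

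\textbf{Boundary terms.} With $\hat c_1\le A_1\mathcal N_p\Delta_w^p$, the first two terms become $O(\mathcal N_p\Delta_w^{p-2}/T_d)$. Since $p<2$, this is at most $O(\mathcal N_p\Delta_{w,\min}^{p-2}/T_d)$. Condition \cref{eq:schedule_integral_condition1} then reduces it to $O(\Delta_{w,\min}^{-1}/T_d)$. The third term is $O(\mathcal N_p\Delta_w^{p-1}/T_d)\le O(\mathcal N_p/T_d)$, which is even smaller.

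\textbf{The three sums.} The dominant sum is
\[
\frac{1}{T_d^2}\sum_j \frac{\hat c_1^2}{\check\Delta_2^3}
\lesssim \frac{\mathcal N_p^2}{T_d}\cdot\frac1{T_d}\sum_j\Delta_w(s_j)^{2p-3}
\le \frac{C_{\mathrm q}\mathcal N_p^2}{T_d}\int_0^1\Delta_w^{2p-3}\,ds .
\]
Changing variables with $ds=d\beta/\dot\beta=L_1\,d\beta/(\mathcal N_p\Delta_w^p)$ gives
\[
\mathcal N_p^2\int_0^1\Delta_w^{2p-3}\,ds=\mathcal N_p\int_{\beta_0}^{\beta_f}\frac{L_1}{\Delta_w^{3-p}}\,d\beta ,
\]
so \cref{eq:schedule_integral_condition2} makes this sum $O(\Delta_{w,\min}^{-1}/T_d)$.

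The $\hat c_1^2/\check\Delta_2^2$ sum carries one extra power of $\Delta_w$, so it is bounded by a constant multiple of the previous one.

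The $\hat c_2$ sum contributes $\mathcal N_p^2[\Delta_w^{2p-3}+\Delta_w^{2p-2}]$ per step divided by $T_d^2$. This is again dominated by the $\Delta_w^{2p-3}$ quadrature term and is handled in the same way.

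\textbf{Conclusion.} Summing the six contributions gives \cref{eq:quantum_error_scaling}. Setting this bound equal to a fixed target error $\epsilon$ yields $T_d=O(\Delta_{w,\min}^{-1})$. One must check compatibility with \cref{eq:run_cond}: there $\hat c_1/\check\Delta_2\lesssim\mathcal N_p\Delta_w^{p-1}\le\mathcal N_p\pi^{p-1}$, which is $O(\Delta_{w,\min}^{1-p})\le O(\Delta_{w,\min}^{-1})$, so the runtime condition is satisfied by the same scaling. Finally, $\Delta_{w,\min}=\arccos(1-\delta_{\min})=\Theta(\sqrt{\delta_{\min}})$, from \cref{eq:QW_spectral_amplification} and the monotonicity of $\arccos$, gives \cref{eq:quantum_runtime_markovgap}.

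The main obstacle is the dominant $\check\Delta_2^{-3}$ sum. It is the only place where the sum-to-integral passage and the change of variables through the schedule must be exact to produce the $\mathcal N_p\int L_1\Delta_w^{p-3}$ structure. Care is also needed to keep the neighbouring-step maxima $\hat c_k$ consistent with pointwise values; I would rely on the stated uniform hypotheses for this rather than derive it.
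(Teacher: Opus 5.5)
Your proposal is correct and follows the paper's proof essentially step for step. You substitute the coefficient and gap-comparison hypotheses into \cref{eq:discrete_adiabatic_bound}, handle the boundary terms with \cref{eq:schedule_integral_condition1}, reduce all three sums to $\mathcal N_p^2\int_0^1\Delta_w^{2p-3}\,ds$ via the quadrature estimate, the change of variables $ds=L_1\,d\beta/(\mathcal N_p\Delta_w^p)$ and $\Delta_w\le\pi$, and check that \cref{eq:run_cond} needs only $T_d\gtrsim\Delta_{w,\min}^{1-p}\le\Delta_{w,\min}^{-1}$ (the paper's only extra detail is extending the sums that start at $j=1$ to $j=0$ by nonnegativity before applying the quadrature estimate).
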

The bounds on the first and second finite differences, including the term $2|\dot\beta|^2 L_1^2$ in Eq.~\eqref{eq:c2_L}, are smooth-path estimates. The normalization $\mathcal N_p$ and the uniform discrete, quadrature, and path-integral conditions remain separate assumptions: the numerical schedules starting at $\beta=0$ are not certified by this theorem without an endpoint and uniformity analysis. The smooth-path estimates motivate these discrete hypotheses but do not, by themselves, establish their uniformity as the minimum gap decreases. The theorem, where the proof is given in \cref{app:proof_gap}, therefore identifies sufficient family of asymptotic schedules to retain the square-root spectral amplification of the Szegedy walk,
\begin{equation}
\label{eq:quantum_schedule_asymptotic_final}
\dot{\beta}(s) \propto \frac{\Delta_w(\beta)^p}{L_1(\beta)}
=\Theta\!\left(\frac{\delta(\beta)^{p/2}}{L_1(\beta)} \right),
\qquad
1<p<2.
\end{equation}
Here, $L_1(\beta)=\|\partial_\beta W(\beta)\|$, measures how quickly the state-preparation unitary changes with $\beta$. The quantum schedule therefore slows down both when the eigenphase gap becomes small and when the walk itself changes fast.

For the Metropolis construction in \cref{eq:prep_W}, the transition probabilities determine the variation restricted to the preparation subspace. On the state-preparation subspace,
\begin{equation}
\left\|
\partial_\beta W(\beta)\Pi_0
\right\|
= \frac{1}{2}
\max_x \left[ \sum_y \frac{ \left( \partial_\beta\mathcal W_{yx}(\beta)
\right)^2}{\mathcal W_{yx}(\beta)}
\right]^{1/2}.
\end{equation}
Notice that the equation above determines the derivative norm restricted to the preparation subspace $\|(\partial_{\beta}W)\Pi_0\|\leq \|\partial_{\beta}W\|$. However, determining the full-unitary norm requires specifying the unitary implementation. We will do it later for the LHPST construction used in the numerical experiments.

In the limit $p\rightarrow1^+$, the quantum schedule approaches
\begin{equation}
\label{eq:quantum_schedule_p1}
\dot{\beta}(s)
\sim \frac{\sqrt{\delta(\beta)}}{L_1(\beta)},
\end{equation}
up to an arbitrarily small additional power of the gap. In the small-gap regime, this can be compared with the classical Tracking schedule,
\begin{equation}
\dot{\beta}_{\rm C}(s)
\propto \frac{\delta(\beta)}
{\sqrt{\operatorname{Var}_{\pi_\beta}(E)}}.
\end{equation}
Both schedules slow the evolution where the local tracking cost is large. For classical Gibbs dynamics, that cost depends on stationary-state fluctuations and relaxation; for the quantum walk, it also depends on the full preparation-unitary variation. The eigenphase amplification motivates the different gap dependence, but the quantum runtime conclusion requires all the uniform hypotheses of \cref{thm:quantum_gap_scaling}, not only the derivative conditions.

\section{Numerical results: scaling and schedule transfer}
\label{sec:numerical_quantum_scaling}

We first compare the classical Tracking evolution with the gap-adapted quantum evolution for SK instances with system sizes $n=6,\ldots,13$. For each size, we generate 40 independent disorder realizations, using the same Hamiltonian instances for both classical and quantum calculations. To ensure a consistent comparison across system sizes, the SK couplings are divided by $\sqrt{n}$. Specifically, the implemented energy is
\begin{equation}
 E_r(\sigma) =  \frac{2}{\sqrt{n}}\sum_{1\leq i<j\leq n} J_{ij}^{(r)}\sigma_i\sigma_j ,
\end{equation}
where the independent coupling take values $\pm1$ with equal probability. Both algorithms use the same Hamiltonian and target Gibbs distribution, with the inverse-temperature interval fixed to $\beta\in[0,0.5]$. This normalization and sign convention are specific to the paired classical--quantum scaling experiment, and should not be identified with the Hamiltonian convention used in the separate classical schedule benchmark in Sec.~\ref{sec:Num_valid_cl}. Comparisons of schedules within either experiment use a common energy convention.

For the classical dynamics, the temperature path follows the Tracking schedule derived in \cref{eq:classical_sched_Trac}. For the quantum evolution, we use the LHPST walk as given in \cite{Lemieux2020,costa2025assessing}
\begin{equation}
\widetilde{U}_w(\beta) = RV^{\dagger}B(\beta)^{\dagger}F B(\beta)V,    
\end{equation}
where $V$ prepares the proposed moves, $B(\beta)$ encodes their Metropolis acceptance probabilities, $F$ performs the move conditioned on acceptance, and $R$ reflects about the initial auxiliary state. Only $B(\beta)$ depends on temperature. Defining the full preparation unitary as $W_{\mathrm{LHPST}}(\beta)=B(\beta)V$, we construct the schedule from its discrete variation. On a preprocessing grid $\{\beta_i\}$, the local schedule weights are proportional to
\begin{equation}
\label{eq:Wvari}
\frac{\|W_{\mathrm{LHPST}}(\beta_{i+1})-W_{\mathrm{LHPST}}(\beta_i)\|}
{\Delta_{w,i}^{p}} =\frac{\|B(\beta_{i+1})-B(\beta_i)\|}
{\Delta_{w,i}^{p}},
\qquad p=1.1,
\end{equation}
where $\Delta_{w,i}$ denotes the local eigenphase-gap estimate. This discrete construction is particularly useful near $\beta=0$, where the derivative of the state-preparation operator need not remain regular.

For the normalized binary SK model, the variation of the full preparation unitary satisfies
\begin{equation}
D_{n,i} \coloneqq \|W_{\mathrm{LHPST}}(\beta_{i+1}) -W_{\mathrm{LHPST}}(\beta_i)\| =\|B(\beta_{i+1})-B(\beta_i)\|,    
\end{equation}
where the equality in the equation above is because, as we have in  \cref{eq:Wvari}, $V$ is  temperature independent and unitary. The acceptance unitary is block diagonal, so its difference norm is the maximum over the individual acceptance blocks. As derived in \cref{app:Analy_Diff}.
\begin{equation}
\label{eq:SK_discrete_L1}
D_{n,i}
= \max_{m\in\mathcal M_n}\sqrt{
\left(\sqrt{1-e^{-\beta_{i+1}\Delta E_m}}
- \sqrt{1-e^{-\beta_i\Delta E_m}}\right)^2+
\left(e^{-\beta_{i+1}\Delta E_m/2}
- e^{-\beta_i\Delta E_m/2}\right)^2
},
\end{equation}
where $\Delta E_m=4m/\sqrt n$ and $\mathcal M_n$ contains the possible positive magnitudes of the local field
$h_k(x)=\sum_{j\neq k}J_{kj}x_j$.
For this implementation and $\beta_i>0$,
\begin{equation}
L_1(\beta_i)
= \|\partial_\beta W_{\mathrm{LHPST}}(\beta_i)\| = \|\partial_\beta B(\beta_i)\|
\approx
\frac{D_{n,i}}{\beta_{i+1}-\beta_i}.
\end{equation}
where the approximation holds as the temperature increment tends to zero. We use $D_{n,i}$ directly to construct the discrete schedule. This finite difference remains well defined at $\beta_i=0$, although the derivative norm is singular there. Applying the adiabatic runtime theorem at this endpoint requires a separate regularity argument.

\paragraph{Classical--quantum scaling.}

To compare the two evolutions under the same success-probability and approximation-ratio thresholds, we define the race cost $T_\star$ as the minimum number of evolution steps required to satisfy both $q_{\rm succ}\geq0.99$ and $q_{\rm AEAR}\geq0.99$, using the criteria introduced in \cref{eq:target_acc}. At each system size, we report the median race cost over the disorder realizations. The results, together with the corresponding classical and quantum schedules, are shown in \cref{fig:classical_quantum_race}.

The finite-size dependence is well described by exponential fits, $T_\star=Ae^{bn}$. For classical Tracking, we obtain
\begin{equation}
T_C\simeq161.8\,2^{0.3686n},
\qquad
T_Q\simeq47.5\,2^{0.1828n},
\end{equation}
for the quantum evolution. The fitted quantum growth exponent is therefore approximately half the classical exponent,
$b_Q/b_C\simeq0.496$. This behavior is consistent with the square-root eigenphase amplification of the Szegedy walk, although the fits characterize only the investigated finite-size regime and do not establish an asymptotic quadratic speedup.

\begin{figure}[t]
    \centering
    \includegraphics[width=0.45\linewidth]{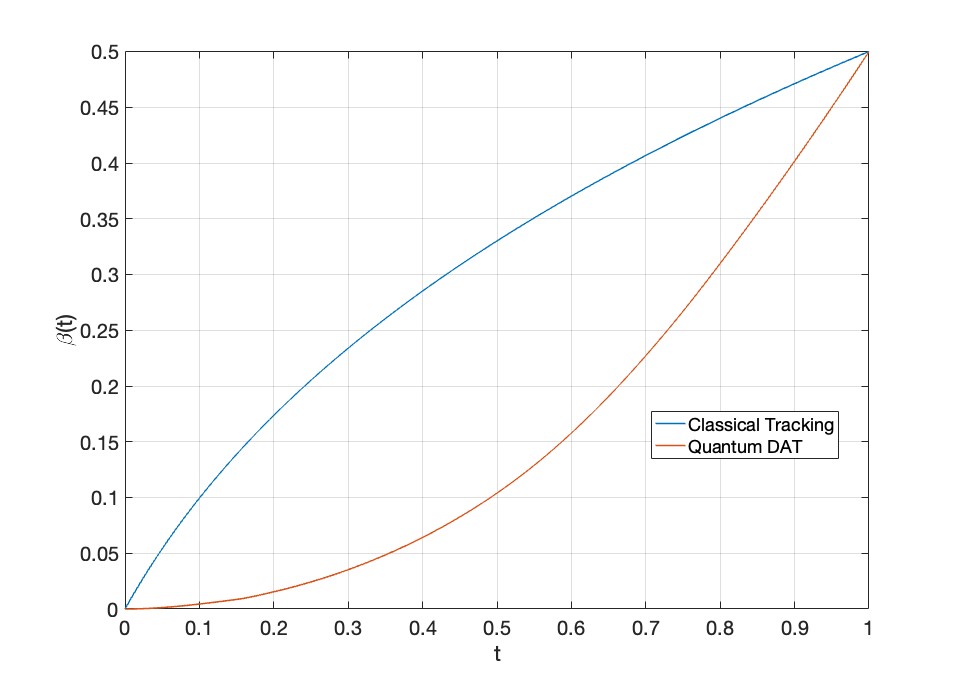}
    \includegraphics[width=0.45\linewidth]{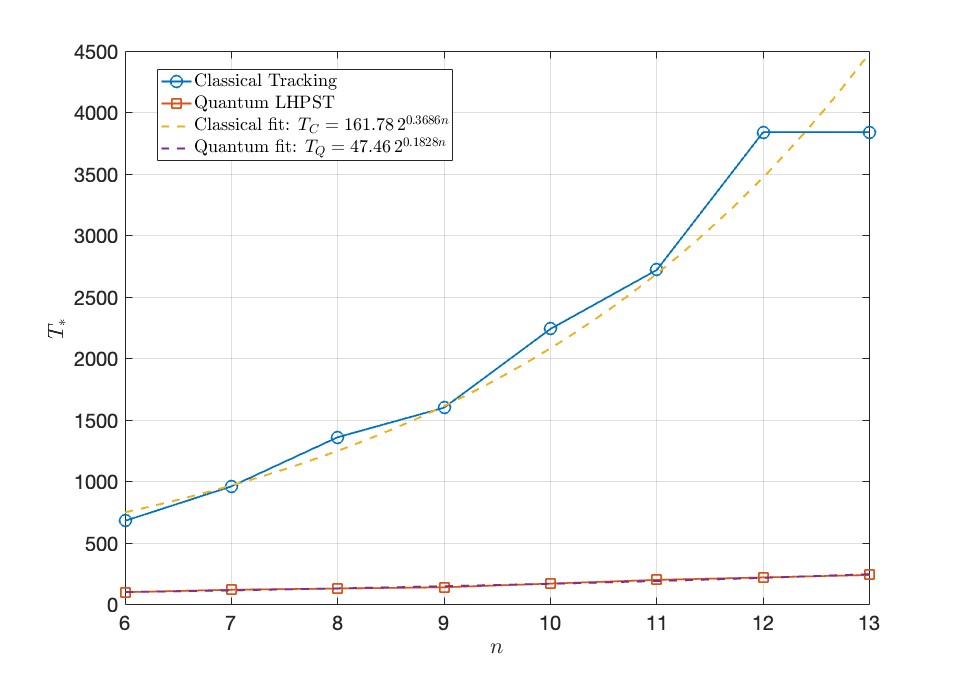}
    \caption{Classical and quantum schedules (left) and race-to-target costs for classical Tracking and gap-adapted LHPST evolution (right). Points show the median cost over disorder realizations, and dashed lines denote exponential finite-size fits. The ratio of fitted growth exponents is $b_Q/b_C\simeq0.496$.}
    \label{fig:classical_quantum_race}
\end{figure}

\paragraph{Transferability to larger SK instances.}

We next investigate whether the quantum schedule can be constructed for larger systems without computing their instantaneous Markov spectra. Following the classical extrapolation procedure, we estimate the spectral-gap dependence on system size from smaller SK instances and use the predicted gaps to construct the quantum schedules for $n=12,14,16$. The corresponding eigenphase gaps are obtained through
\begin{equation}
\widehat{\Delta}_{w,n}(\beta)
=
\arccos\left[1-2\widehat{\delta}_{L,n}(\beta)\right].
\end{equation}
The variation $D_{n,i}$ is calculated analytically using \cref{eq:SK_discrete_L1}, so no instance-specific diagonalization is required for constructing the larger-system quantum schedules.

We compare the predicted LHPST evolution with classical Linear and Tracking schedules over 20 independent disorder realizations at each system size. At $T=1600$, the mean terminal total-variation error of LHPST remains approximately $3\times10^{-4}$, compared with errors of order $10^{-3}$ for the classical methods. These results indicate that spectral information extracted from smaller systems can produce effective quantum schedules for larger SK instances. Details of the extrapolation and additional numerical results are provided in \cref{app:SK_extap}.

\begin{figure}[t]
    \centering
    \includegraphics[width=0.5\linewidth]{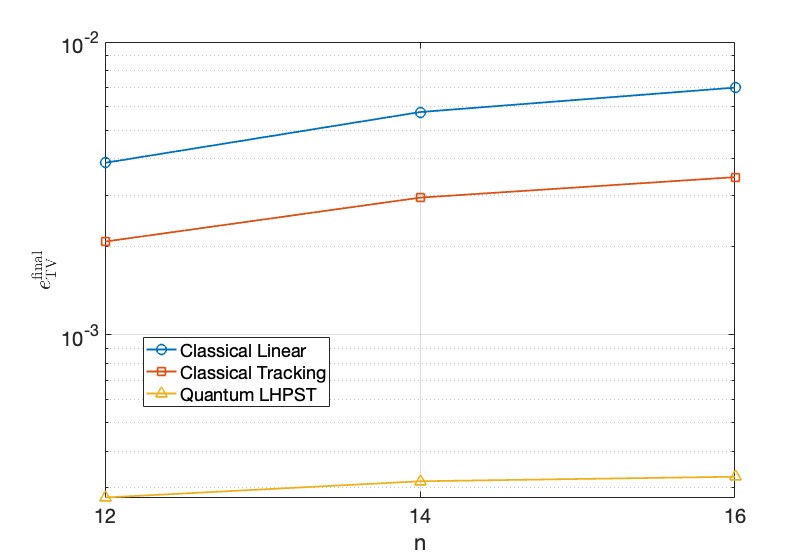}
    \caption{
    Mean terminal total-variation error,
    $e_{\mathrm{TV}}^{\mathrm{final}}$, as a function of the
    SK system size $n$ for classical Linear, classical Tracking,
    and quantum LHPST evolution.
    The spectral-gap model is constructed using smaller systems
    ($n=6,8,10$) and subsequently transferred to
    $n=12,14,16$ to determine the adiabatic schedules.
    Results are averaged over 20 independent disorder
    realizations for each system size, with $T=1600$
    evolution steps and $\beta_F=0.5$.
    The predicted quantum schedules yield substantially smaller
    preparation errors without requiring instance-specific
    spectral diagonalization.
    }
    \label{fig:SK_quantum_transfer}
\end{figure}

\section{Conclusion}
\label{sec:conc}

We have developed a discrete adiabatic framework connecting classical Markov-chain tracking, geometric annealing schedules, and coherent quantum Gibbs-state preparation. For reversible time-inhomogeneous Markov chains, the tracking theorem separates contraction of accumulated deviations from the variation of the instantaneous stationary distribution, retaining the local spectral gaps and changes in weighted norm. Along a smooth Gibbs path, equilibrium energy fluctuations determine the stationary-state variation and lead to an effective tracking metric. Its constant-speed schedule minimizes the associated metric-energy functional, not necessarily the exact finite-time tracking or terminal preparation error.

The solvable two-state example validates the product-form tracking bound, and the SK numerical tests distinguish maximum pathwise error from terminal Gibbs-state accuracy. In the investigated classical instances, schedules incorporating gap information can improve terminal preparation, although the Tracking schedule is not uniformly optimal among the tested prescriptions. For $n=18$, schedules based on gaps extrapolated from smaller SK systems and energy fluctuations sampled on the target system also improve the observed endpoint metrics relative to linear cooling. These inferred gaps are empirical estimates, not certified lower bounds.

For the quantum dynamics, we analyze direct sequences of temperature-dependent Szegedy walks without intermediate phase-estimation or quantum-Zeno projections. Spectral amplification alone does not ensure improved discrete adiabatic complexity, because first- and second-order operator variations also enter the error estimate. Under the explicit uniform discrete, spectral-comparison, quadrature, and path-integral hypotheses, gap-adapted schedules give a sufficient fixed-accuracy walk count $T_Q=O(\Delta_{w,\min}^{-1})=O(\delta_{+,\min}^{-1/2})$. This is a conditional result about inverse-gap dependence, not an unconditional total-cost quantum speedup.

The quantum SK experiments provide complementary finite-size observations. Over the sizes studied, the fitted growth exponent for the walk-step race to prescribed optimisation thresholds is approximately half that of classical Tracking; these empirical fits neither establish asymptotic scaling nor certify a Gibbs-distance bound at the race threshold. Separately, gap-informed quantum schedules transferred from smaller systems exhibit low terminal preparation errors on larger SK instances without instance-specific spectral diagonalization during schedule construction. The tested walk implementation is not yet covered uniformly by the theorem, notably because its acceptance-unitary derivative is singular at $\beta=0$.

The results help clarify one schedule-dependent mechanism relevant to earlier comparisons of classical and quantum optimization heuristics~\cite{costa2025assessing}. Important next steps are to establish uniform estimates for concrete problem families, handle the initial endpoint rigorously, and quantify the robustness and computational cost of predicting the local gaps. More broadly, the framework links finite-time Gibbs-state preparation to the balance between equilibrium fluctuations, spectral relaxation, and variation along an annealing path.

\section*{Acknowledgments}
The authors thank Stephen Jordan and Dominic W. Berry for helpful discussions. DA acknowledges funding from Innovation Program for Quantum Science and Technology via Project 2024ZD0301900, the support by The Fundamental Research Funds for the Central Universities, Peking University. KK worked on this project with funding from Molecular Quantum Solutions ApS and Sydney Quantum Academy.
%

%\bibliographystyle{apsrev4-1}	
%\bibliography{ref}

\appendix

\section{A bound for the norm-comparison constants}

We now expand with the definition of the norm-comparison constants, \cref{eq:norm_comp} 
\begin{align}
\|q\|_{j+1}&=\sqrt{\sum_x\frac{q(x)^2}{\pi_{k+1}(x)}}\nonumber\\
&=\sqrt{\sum_x\frac{q(x)^2}{\pi_{j}(x)}\frac{\pi_{j}(x)}{\pi_{j+1}(x)}}\nonumber\\
&=\sqrt{\sum_xw_j(x)\frac{\pi_{j}(x)}{\pi_{j+1}(x)}},    
\end{align}
where we have defined $w_{k}(x)\coloneqq q(x)^2/\pi_{k}(x)$. Using this, we can also write
\begin{equation}
\|q\|_{k}=\sqrt{\sum_xw_k(x)}.    
\end{equation}
Therefore
\begin{equation}
C_k \coloneqq \frac{\|q\|_{k+1}}{\|q\|_{k}}=\sup_{\substack{q\in H_0}\\q\neq 0}\sqrt{\frac{\sum_xw_{k}(x)\frac{\pi_{k}(x)}{\pi_{k+1}(x)}}{\sum_xw_k(x)}},    
\end{equation}

This is a weighted average of $\pi_{k}(x)/\pi_{k+1}(x)$ with weights $w(x)$. The weighted average of any sequence must be larger than its smallest value and smaller than its largest value, regardless of the weights. Therefore
\begin{equation}
\sqrt{\min_x\frac{\pi_{k}(x)}{\pi_{k+1}(x)}}\leq C_k\leq\sqrt{\max_x\frac{\pi_{k}(x)}{\pi_{k+1}(x)}}    
\end{equation}
These bounds rely only on the stationary distributions and are valid for any $q(x)$.

In particular, for the Gibbs distribution, we have
\begin{equation}
\frac{\pi_{\beta_j}(x)}{\pi_{\beta_{j+1}}(x)} = e^{h_j E(x)}\frac{Z(\beta_{j+1})}{Z(\beta_j)},     
\end{equation}
where $h_j=\beta_j-\beta_{j-1}$. Next, taking the maximum over $x$,
\begin{equation}
 C^2_j \leq e^{h_j E_{\text{max}}}\frac{Z(\beta_{j+1})}{Z(\beta_j)}.
\end{equation}
We can now take the log,
\begin{equation}
2\log C_j \leq h_j E_{\text{max}} + \log Z(\beta_{j+1}) - \log Z (\beta_j).   
\end{equation}
Next we can use
\begin{equation}
\frac{d}{d \beta} \log Z(\beta)  = -\langle E \rangle_\beta,   \end{equation}
so
\begin{equation}
\log Z(\beta_{j+1}) - \log Z(\beta_{j}) = - \int^{\beta_{j+1}}_{\beta_j}db\langle E \rangle_b.     
\end{equation}
Therefore,
\begin{equation}
\log C_k \leq \frac{1}{2} \int^{\beta_{j+1}}_{\beta_j}db\left( E_{\text{max}} - \langle  E\right) )_b.   
\end{equation}

\section{Proofs of the Classical Tracking Results}

\subsection{Proof of \cref{lemma:contractiongap}}

\begin{proof}
\label{proof:contractiongap}
Since $\mathcal W$ is reversible with respect to $\pi$, it is self-adjoint
with respect to the weighted inner product
\begin{equation}
\langle u,v\rangle_{\pi^{-1}} \coloneqq
    \sum_{x\in\Omega}\frac{u(x)v(x)}{\pi(x)}.
\end{equation}
Consequently, $\mathcal W$ admits an orthonormal eigenbasis $\{v_j\}_{j=1}^N$ with respect to this inner product, with real eigenvalues. We choose $v_1=\pi$, noting that
\begin{equation}
\|\pi\|^2_{\pi^{-1}} =1.    
\end{equation}

Now let $r\in\mathcal H_0$, since
\begin{equation}
\langle r,\pi\rangle_{\pi^{-1}}
= \sum_{x\in\Omega}r(x) = 0,
\end{equation}
the vector $r$ is orthogonal to $\pi$, so it has no component along the stationary eigenvector. Hence it can
be expanded only in the remaining eigenspaces of $\mathcal{W}$:
\begin{equation}
\label{eq:eigvec_W}
r=\sum_{j\geq 2}c_j v_j,    
\end{equation}
where $v_1=\pi$. Applying $\mathcal W$ gives
\begin{equation}
\mathcal W r = \sum_{j\geq 2}c_j\lambda_j v_j.
\end{equation}
Thus,
\begin{align}
\|\mathcal W r\|_{\pi^{-1}}^2 &=
    \sum_{j\geq 2}|c_j|^2|\lambda_j|^2\nonumber\\
    &\leq \lambda_*^2 \sum_{j\geq 2}|c_j|^2 \nonumber\\
    &= \lambda_*^2\|r\|_{\pi^{-1}}^2.
\end{align}
Taking square roots yields
\begin{equation}
 \|\mathcal W r\|_{\pi^{-1}}
    \leq
    \lambda_*\|r\|_{\pi^{-1}}.    
\end{equation}
\end{proof}

\subsection{Proof of \cref{thm:tracking_changing_norm}}\label{proof:track}

\begin{proof}
Define the tracking error
\begin{equation}
    e_k\coloneqq p_k-\pi_k .
\end{equation}
Since both $p_k$ and $\pi_k$ are probability distributions, we have
$e_k\in\mathcal H_0$ for every $k$. Also, since $\pi_{k-1}$ and $\pi_k$ are
probability distributions,
\begin{equation}
    \pi_{k-1}-\pi_k\in\mathcal H_0 .
\end{equation}

Using the evolution equation and stationarity of $\pi_k$ under $\mathcal W_k$,
we obtain, for $k=1,\ldots,T$,
\begin{align}
e_k &= p_k-\pi_k \nonumber\\
&= \mathcal W_k p_{k-1}-\pi_k \nonumber\\
&= \mathcal W_k p_{k-1}-\mathcal W_k\pi_k \nonumber\\
&= \mathcal W_k(p_{k-1}-\pi_k) \nonumber\\
&= \mathcal W_k(p_{k-1}-\pi_{k-1}) + \mathcal W_k(\pi_{k-1}-\pi_k) \nonumber\\
&= \mathcal W_k e_{k-1} + \mathcal W_k(\pi_{k-1}-\pi_k).
\end{align}
Taking the instantaneous norm $\|\cdot\|_k$ and using the triangle inequality,
\begin{align}
\|e_k\|_k &\leq \|\mathcal W_k e_{k-1}\|_k + \|\mathcal W_k(\pi_{k-1}-\pi_k)\|_k .
\end{align}
Both $e_{k-1}$ and $\pi_{k-1}-\pi_k$ lie in $\mathcal H_0$, so by the contraction assumption in the norm $\|\cdot\|_k$,
\begin{equation}
\|e_k\|_k \leq (1-\delta_k)\|e_{k-1}\|_k + (1-\delta_k)\|\pi_{k-1}-\pi_k\|_k .
\end{equation}
The first term is measured in the norm $\|\cdot\|_k$, whereas the previous
error is naturally measured in the norm $\|\cdot\|_{k-1}$. By the definition of
the norm-comparison constant $C_{k-1}$,
\begin{equation}
\|e_{k-1}\|_k \leq C_{k-1}\|e_{k-1}\|_{k-1}.
\end{equation}
Hence
\begin{equation}
\label{eq:error_sched}
\|e_k\|_k \leq (1-\delta_k)C_{k-1}\|e_{k-1}\|_{k-1} +
(1-\delta_k)\|\pi_k-\pi_{k-1}\|_k .
\end{equation}
Define
\begin{equation}
E_k\coloneqq\|e_k\|_k, \qquad d_k\coloneqq \|\pi_k-\pi_{k-1}\|_k, \qquad a_k\coloneqq1-\delta_k .
\end{equation}
Then the recursion becomes
\begin{equation}
E_k \leq a_k C_{k-1}E_{k-1} + a_k d_k .
\end{equation}
Since $p_0=\pi_0$, we have $E_0=0$, so
\begin{equation}
E_1 \leq a_1d_1,    
\end{equation}
now for $E_2$
\begin{align}
 E_2 &\leq a_2C_1 E_1 + a_2d_2\nonumber\\
 &\leq a_2a_1C_1d_1  + a_2d_2,
\end{align}
and for $E_3$,
\begin{align}
 E_3 &\leq a_3C_2 E_2 + a_3d_3\nonumber\\
 &\leq a_3C_2\left(a_2a_1C_1d_1  + a_2d_2\right)  + a_3d_3 \nonumber\\
 & = a_3a_2a_1C_2C_1d_1 + a_3a_2C_2d_2 + a_3d_3\nonumber\\
 & = \prod_{j=1}^3 a_j\prod_{j=1}^2 C_j d_1  + \prod_{j=2}^3 a_j\prod_{j=2}^2 C_j d_2 + a_3d_3\nonumber\\
 &= \sum_{j=1}^3 \left(\prod_{k=j}^3 a_k\prod_{k=j}^2 C_k\right)d_j,
\end{align}
from where we can conclude the following expression for the total $T$
\begin{equation}
e_T \leq  \sum_{j=1}^T \left[
\left( \prod_{r=j}^T a_r \right)
\left( \prod_{r=j}^{T-1} C_r \right)\right]d_j.
\end{equation}
Substituting back $a_r=1-\delta_r$ and
$d_j=\|\pi_j-\pi_{j-1}\|_j$, we obtain
\begin{equation}
\|p_T-\pi_T\|_T \leq \sum_{j=1}^T\left[
    \left(\prod_{r=j}^T(1-\delta_r) \right)
\left( \prod_{r=j}^{T-1}C_r\right) \right]
\|\pi_j-\pi_{j-1}\|_j .
\end{equation}
Finally, using $1-\delta_r\leq e^{-\delta_r}$
and
\begin{equation}
\prod_{r=j}^{T-1}C_r = \exp\left( \sum_{r=j}^{T-1}\log C_r \right),
\end{equation}
we obtain
\begin{equation}
\|p_T-\pi_T\|_T \leq \sum_{j=1}^T \exp\!\left( -\sum_{r=j}^T\delta_r + \sum_{r=j}^{T-1}\log C_r \right) \|\pi_j-\pi_{j-1}\|_j .
\end{equation}
This proves the theorem.
\end{proof}

\subsection{Proof of Corollaries \ref{cor:uniform_gap_changing_norm},  \ref{cor:maximal_one_step_variation} and  \ref{cor:stable_effective_damping} }\label{app:proof_coroll5_6_7}

We start with the proof of \cref{cor:uniform_gap_changing_norm}

\begin{proof}
By the uniform bounds on the contraction gaps and norm-comparison
factors,
\begin{align}
\exp\left(
-\sum_{r=j}^T\delta_r
+\sum_{r=j}^{T-1}\log C_r
\right)
&\leq
\exp\left(
-(T-j+1)\delta_{\min}
+(T-j)\log C_{\max}
\right)
\nonumber\\
&=
e^{-(T-j+1)\delta_{\min}}
C_{\max}^{T-j}.
\end{align}
Substituting this estimate into the tracking bound of
\cref{thm:tracking_changing_norm} proves
\cref{eq:uniform_gap_bound}.
\end{proof}

Now we present the proof of \cref{cor:maximal_one_step_variation}.

\begin{proof}
Applying the definition of $\omega_\pi$ to
\cref{eq:uniform_gap_bound} gives
\begin{align}
\|p_T-\pi_T\|_T
&\leq
\omega_\pi
\sum_{j=1}^T
e^{-(T-j+1)\delta_{\min}}
C_{\max}^{T-j}
\nonumber\\
&=
\omega_\pi e^{-\delta_{\min}}
\sum_{j=1}^T
\left(
C_{\max}e^{-\delta_{\min}}
\right)^{T-j}.
\end{align}
Introducing $m=T-j$ gives
\cref{eq:uniform_geometric_sum}. Evaluating the geometric series proves
\cref{eq:uniform_finite_bound} when
$C_{\max}e^{-\delta_{\min}}\neq1$. When this ratio equals one, the sum
contains $T$ identical terms, which proves
\cref{eq:uniform_critical_bound}.
\end{proof}

We now present the proof of \cref{cor:stable_effective_damping}
\begin{proof}
From \cref{eq:damp_rate},
\begin{equation}
    C_{\max}e^{-\delta_{\min}}
    =
    e^{-\delta_{\min}+\log C_{\max}}
    =
    e^{-\Gamma}.
\end{equation}
Substitution into \cref{eq:uniform_finite_bound} gives
\cref{eq:effective_damping_finite_bound}. Since $\Gamma>0$, we have $0<e^{-\Gamma}<1$, 
and therefore
\begin{equation}
    1-e^{-T\Gamma}\leq 1.
\end{equation}
This proves \cref{eq:effective_damping_uniform_bound}. Finally, as $\Gamma\to0^+$,
\begin{equation}
1-e^{-\Gamma} = \Gamma+O(\Gamma^2),
\end{equation}
so
\begin{equation}
\frac{1}{1-e^{-\Gamma}} =
O\left(\frac{1}{\Gamma}\right).
\end{equation}
Substitution into \cref{eq:effective_damping_uniform_bound} proves \cref{eq:effective_damping_scaling}.
\end{proof}

\subsection{Proof of \cref{lemma:One_step}}

\begin{proof}
\label{proof:one_step}
For each $j$, by the fundamental theorem of calculus,
\begin{equation}
\label{eq:fund_theor}
\pi_{s_j} - \pi_{s_{j-1}} = \int_{s_{j-1}}^{s_j}\dot\pi_s ds.    \end{equation}
taking the norm $\|\cdot\|_j$, and using the triangle inequality 
\begin{equation}
\|\pi_{s_j} - \pi_{s_{j-1}}\| \leq \int_{s_{j-1}}^{s_j} \|\dot\pi_s \|ds.   
\end{equation}
Next, we use the uniform bound on $\|\dot\pi_s\|_j$, then
\begin{equation}
\|\pi_{s_j} - \pi_{s_{j-1}}\| \leq \int_{s_{j-1}}^{s_j} L_\pi  ds = \frac{L_{\pi}}{T}.   
\end{equation}
Thus, taking the maximum over $j=1,\cdots,T$ gives

\begin{equation}
    \omega_{\pi} \leq \frac{L_{\pi}}{T}.
\end{equation}
\end{proof}

\subsection{Proof of \cref{cor:uniform_gap_complexity}}

\begin{proof}
\label{proof:uniform_gap_complexity}
Let $s_j=j/T$ and $\pi(s)=\pi_{\beta(s)}$, the Taylor's formula with integral remainder gives
\begin{equation}
\pi(s_{j-1}) = \pi(s_j) - \frac{1}{T}\dot{\pi}(s_j) + \int^{s_j}_{s_{j-1}}(u - s_{j-1})\ddot{\pi}(u)du,   
\end{equation}
Taking the instantaneous norm at $s_j$,
\begin{align}
\|\pi_j -\pi_{j-1}\| &\leq  \frac{1}{T}\|\dot{\pi}(s_j)\|_j + \int^{s_j}_{s_{j-1}}(u-s_{j-1})\|\ddot{\pi}(u)\|_j\,du\nonumber\\
&\leq \frac{1}{T}\|\dot{\pi}(s_j)\|_j + \frac{M_2}{2T^2},
\end{align}
where $M_2$ is given in \cref{eq:A_M}. For the Gibbs family,
 $\dot{\pi}(s)  = -\dot{\beta}(s)\big(E-\langle E\rangle_{\beta(s)}\big)\pi(s)$, so
\begin{equation}
\|\dot{\pi}(s_j)\|_j  = |\dot{\beta}(s_j)|\sqrt{\text{Var}_{{\pi}_{\beta(s_j)}} (E)} \leq A, 
\end{equation}
where $A$ is given in \cref{eq:A_M}. Thus, from \cref{lemma:One_step}
\begin{equation}
\omega_{\pi} \leq  \frac{A}{T} +\frac{M_2}{2T^2}.
\end{equation}

Applying \cref{cor:stable_effective_damping} and $\|p_T-\pi_T\|_{\mathrm{TV}}\leq\tfrac12\|p_T-\pi_T\|_T$ yields the error bound in \cref{cor:uniform_gap_complexity}. Solving the resulting quadratic inequality in $T$ gives its sufficient runtime condition.

\end{proof}

\subsection{Proof of \cref{lemma:effective_metric}}
\label{proof:metric}

\begin{proof}
Let $t_j=j/T$ and $\beta_j=\beta(t_j)$, where $\beta:[0,1]\rightarrow[\beta_0,\beta_f]$
is an arbitrary smooth monotone schedule. Define
\begin{equation}
\varepsilon_j\coloneqq p_j-\pi_j,
\qquad
e_j\coloneqq \|\varepsilon_j\|_j.
\end{equation}
Since
\begin{equation}
p_j=\mathcal W_j p_{j-1}, \qquad
\mathcal W_j\pi_j=\pi_j,
\end{equation}
we have
\begin{align}
\varepsilon_j
&= \mathcal W_jp_{j-1}-\pi_j \nonumber\\
&= \mathcal W_j \left(\varepsilon_{j-1}+\pi_{j-1}-\pi_j
    \right).
\end{align}
Using contraction on the zero-mass subspace in the instantaneous weighted norm,
\begin{equation}
    \|\mathcal W_j q\|_j
    \leq
    (1-\delta_j)\|q\|_j,
\end{equation}
we obtain
\begin{equation}
e_j \leq (1-\delta_j) \left(
\|\varepsilon_{j-1}\|_j + \|\pi_j-\pi_{j-1}\|_j \right).
\end{equation}
By the norm-comparison inequality,
\begin{equation}
\|\varepsilon_{j-1}\|_j \leq C_{j-1}\|\varepsilon_{j-1}\|_{j-1},
\end{equation}
and hence
\begin{equation}
\label{eq:local_tracking_recursion}
e_j \leq (1-\delta_j)C_{j-1}e_{j-1} + (1-\delta_j)
\|\pi_j-\pi_{j-1}\|_j.
\end{equation}
For the Gibbs path,
\begin{equation}
 \partial_\beta\pi_{\beta}(x) = -(E(x) - \langle E\rangle_{\beta})\pi_{\beta}(x),\quad \|\partial_\beta\pi_\beta \|_{\pi^{-1}_\beta } = V(\beta).   
\end{equation}
Thus, writing $h_j=\beta_j-\beta_{j-1}$, Taylor expansion gives
\begin{equation}
\label{eq:Vh_2}
\|\pi_j - \pi_{j-1}\|_j = |h_j|V_j + O(h^2_j).   
\end{equation}
We rewrite the exact recurrence \cref{eq:local_tracking_recursion} as
\begin{equation}
 e_j \leq \rho_j e_{j-1} + b_j,   
\end{equation}
where
\begin{equation}
\label{eq:rho_b}
 \rho_j = (1-\delta_j )C_{j-1} , \quad b_j = (1-\delta_j)\|\pi_j - \pi_{j-1}\|_j,   
 \end{equation}
 so by \cref{eq:Vh_2} $b_j=(1-\delta_j)|h_j|V_j + O(h^2_j)$. Thus, the recurrence $B_j=\rho_jB_{j-1} + b_j$ agrees with the envelope defined in \cref{eq:B_env}.

Now, let $t_j=j/T$ and define
 \begin{equation}
 a(t) = |\dot{\beta}(t)|V(\beta(t)), \quad D(t) = \frac{1 - \delta(\beta(t))}{\delta(\beta(t))}a(t).    
 \end{equation}
 Thus, from the smoothness of the schedule function and from $C_{j-1}= 1 + O(T^{-1})$, we can rewrite \cref{eq:rho_b} as
 \begin{equation}
  \rho_j =1 - \delta_j + O(T^{-1}),\quad b_j = \frac{(1-\delta_j)a(t_j)}{T} + O(T^{-2}). 
 \end{equation}

 We now set $z_j=D(t_j)/T$, since by construction $D(t)$ is continuously differentiable on the compact interval $[0,1]$, $D'$ is bounded, so $D$ is Lipschitz, so we can write
 \begin{equation}
  z_{j-1} = z_j + O(T^{-2}),   
 \end{equation}
 so
 \begin{equation}
     \rho_j z_{j-1} + b_j -z_j = \frac{-\delta_jD(t_j) + (1-\delta_j)a(t_j)}{T} + O(T^{-2}) =O(T^{-2}),
 \end{equation}
 where $\delta_jD(t_j) = (1-\delta_j)a(t_j)$. 
 
 From the previous definition and bounds, we can now estimate the following bound, $|B_j-z_{j-1}|$
\begin{align}
|B_j - z_j| & = |\rho_j B_{j-1} + b_j - z_j |   \nonumber\\
&= |\rho_j (B_{j-1} - z_{j-1}) + (\rho_jz_{j-1} + b_j- z_j) |\nonumber\\
&\leq |\rho_j (B_{j-1} - z_{j-1})| + | \rho_jz_{j-1} + b_j- z_j|\nonumber \\
&\leq  q| (B_{j-1} - z_{j-1})| + \frac{L}{T^2},
\end{align}
where $L$ is independent of $j$ and $T$. Notice that the last inequality comes from the fact that for a sufficiently large $T$ there exists a fixed $q \in (0,1)$ such that $\rho_j \leq q$. Thus, iterating the inequality above, using $z_j =D(t_j)/T$ and \cref{eq:B_env}, we find
\begin{equation}
|B_j - z_j| \leq \frac{q^j|D(0)|}{T} + \frac{L}{T^2 (1-q)}.    
\end{equation}
Therefore, uniformly  for $t_j \in [t_0,1]$, with any fixed $t_0>0$,
\begin{equation}B_j=\frac1T
\frac{[1-\delta(\beta_j)]|\dot\beta(t_j)|V(\beta_j)}
{\delta(\beta_j)} + O(T^{-2}), 
\end{equation}
where $e_j \leq B_j$. 

The preceding bound estimates
\begin{equation}
D(\beta,\dot{\beta}) =  \frac{(1-\delta(\beta))V(\beta)}{\delta(\beta)}|\dot{\beta}|,
\end{equation}
as the leading coefficient of the local tracking bound, so that $B_j =D(\beta_j,\dot{\beta}(t_j) )/T +O(T^{-2})$. To express this cost geometrically, we define a metric whose speed along the Gibbs path equals $D$. In one dimension, the squared metric speed is $g_{\beta \beta}(\beta)\dot{\beta}^2$, so 
\begin{equation}
 g_{\beta \beta}^{\text{track}}(\beta)\dot{\beta}^2 = D(\beta_j,\dot{\beta}(t_j) )^2,  
\end{equation}
which is the squared speed of the curve $\beta(t)$ under this metric. Therefore, we conclude \cref{eq:main_metricOd}.
\end{proof}

\subsection{Proof of \cref{lemma:ched}}

\begin{proof}
\label{proof:ched}

For a one-dimensional metric
\begin{equation}
ds^2 = g_{\beta\beta}^{\mathrm{track}}(\beta)\,d\beta^2,
\end{equation}
the affinely parametrized geodesic equation is
\begin{equation}
\ddot{\beta} +
\frac{1}{2}
\partial_\beta
\log g_{\beta\beta}^{\mathrm{track}}
\,\dot{\beta}^{\,2} = 0.
\end{equation}
For a monotone schedule, $\dot{\beta}\neq0$, and using
\begin{equation}
\ddot{\beta}
= \dot{\beta} \frac{d\dot{\beta}}{d\beta},
\end{equation}
we obtain
\begin{equation}
\frac{d\dot{\beta}}{d\beta}
+ \frac{1}{2} \partial_\beta
\log g_{\beta\beta}^{\mathrm{track}}
\,\dot{\beta} = 0.
\end{equation}
Dividing by $\dot{\beta}$ gives
\begin{equation}
\partial_\beta\log\dot{\beta}
= - \frac{1}{2}\partial_\beta \log g_{\beta\beta}^{\mathrm{track}}.
\end{equation}
Integrating with respect to $\beta$ yields
\begin{equation}
\log\dot{\beta} = - \frac{1}{2}
\log g_{\beta\beta}^{\mathrm{track}}
+ \log\alpha,
\end{equation}
where $\alpha>0$ is an integration constant. Hence
\begin{equation}
\dot{\beta} = \frac{\alpha}
{\sqrt{g_{\beta\beta}^{\mathrm{track}}}}.
\end{equation}
Substituting the tracking metric gives
\begin{equation}
\dot{\beta} = \alpha
\frac{\delta(\beta)}
{(1-\delta(\beta))V(\beta)}.
\end{equation}

Equivalently,
\begin{equation}
dt= \frac{1}{\alpha}
\frac{(1-\delta(\beta))V(\beta)}
{\delta(\beta)}
\,d\beta.
\end{equation}
Since $t\in[0,1]$, integration between the boundary values
$\beta(0)=\beta_0$ and $\beta(1)=\beta_f$ gives
\begin{equation}
1 = \frac{1}{\alpha}
\int_{\beta_0}^{\beta_f} \frac{ (1-\delta(\beta))V(\beta)}{
\delta(\beta)}
\,d\beta,
\end{equation}
and therefore
\begin{equation}
\alpha=
\int_{\beta_0}^{\beta_f}
\frac{ (1-\delta(\beta))V(\beta)
}{\delta(\beta)}
\,d\beta.
\end{equation}
Finally,
\begin{equation}
\sqrt{g_{\beta\beta}^{\mathrm{track}}}\,\dot{\beta}=\alpha,
\end{equation}
so the resulting parametrization has constant speed with respect to the
tracking metric.
\end{proof}

\section{Proofs of the Quantum discrete adiabatic theorem for the Coherent Gibbs state}

\subsection{Stationary eigenspace and adiabatic transport}\label{app:statio_eig}

Let's write the quantum walk operator as follows, $U_w(\beta)=RA(\beta)$, where
\begin{equation}
\label{eq:Abeta}
A(\beta) \coloneqq W(\beta)^{\dagger}SW(\beta).    
\end{equation}
Now we characterize the full eigenspace of $U_w(\beta)$ with eigenvalue $+1$. Suppose $U_w(\beta)\ket{\chi}=\ket{\chi}$, since $R^{\dagger}=R$, we find
\begin{equation}
A(\beta)\ket{\chi} = R \ket{\chi},\quad A(\beta)R\ket{\chi}=\ket{\chi}.    
\end{equation}
Next, consider the following  orthogonal decomposition
\begin{equation}
|u\rangle=\Pi_0|\chi\rangle
=\frac{I+R}{2}|\chi\rangle,
\qquad |v\rangle=(I-\Pi_0)|\chi\rangle
=\frac{I-R}{2}|\chi\rangle,    
\end{equation}
we, then, see that the operator $A(\beta)$ remain the vectors orthogonal to each other, one part associated with eigenvalues $+1$ and the other $-1$, i.e., 
\begin{equation}
A(\beta)|u\rangle=|u\rangle,
\qquad A(\beta)|v\rangle=-|v\rangle,  
\end{equation}
thus 
\begin{equation}
\ket{u} \in \operatorname{ran}\Pi_0\cap\ker(A-I),
\qquad
\ket{v}\in\ker\Pi_0\cap\ker(A+I),   
\end{equation}
where $\operatorname{ran}\Pi_0=\{\Pi_0|\phi\rangle:|\phi\rangle\in\mathcal H\}$. Consequently, any sum of vectors satisfying these conditions is fixed by $U_w(\beta)$, hence
\begin{equation}
\ker(U_w-I)
= \bigl[\operatorname{ran}\Pi_0\cap\ker(A-I)\bigr] \oplus
\bigl[\ker\Pi_0\cap\ker(A+I)\bigr].   
\end{equation}
Now from the definition of the discriminant operator, \cref{eq:discriminant_operator}, we have
\begin{equation}
X(\beta)\ket{u} = \Pi_0 A(\beta)\Pi_0\ket{u} =\ket{u},    
\end{equation}
thus every vector in the intersection $\operatorname{ran}\Pi_0\cap\ker(A-I)$ is an eigenvector of $X(\beta)$ with eigenvalue $+1$. On the other hand, we have $X(\beta)$ similar to the Markov kernel, \cref{eq:X_mat}, $W(\beta)$. Since $W(\beta)$ is irreducible,  its stationary eigenvalue is $+1$ is simple. Therefore, the eigenvalue-1 space of $X(\beta)$ is one-dimensional, spanned by  the quantum Gibbs state $\ket{\psi_\beta}=\ket{\pi_\beta} \otimes \ket{0}$. Therefore, we can conclude that 
\begin{equation}
\operatorname{ran}\Pi_0\cap\ker(A(\beta)-I) = \text{span}\{\ket{\psi_\beta}\},    
\end{equation}
which also can be verified from $\Pi_0\ket{\psi_\beta}=\ket{\psi_\beta}$ together with $(A(\beta)-I)\ket{\psi_\beta}=0$. Thus, the Gibbs state is the only independent $+1$ eigenvector in this first summand.

We next show that the canonical discrete adiabatic transport maps consecutive coherent Gibbs states into one another. Write
\begin{equation}
P_j=P_+(\beta_j),\qquad Q_j=I-P_j,    
\end{equation}
and define
\begin{equation}
M_j=P_{j+1}P_j+Q_{j+1}Q_j.    
\end{equation}
Under the step-size assumptions of the discrete adiabatic theorem, $M_j$ is invertible and its unitary polar factor is
\begin{equation}
V_j = M_j(M_j^{\dagger}M_j)^{-1/2},    
\end{equation}
which is an operator that also commutes with $\Pi_0$, since both projectors $P_j$ and $P_{j+1}$ commutes with $\Pi_0$.

Let 
\begin{equation}
c_j \coloneqq \langle\psi_{\beta_{j+1}}|\psi_{\beta_j}\rangle
=\sum_x\sqrt{\pi_{\beta_{j+1}}(x)\pi_{\beta_j}(x)}>0,   
\end{equation}
the projector decomposition gives
\begin{equation}
 M_j \ket{\psi_{\beta_j}} = c_j \ket{\psi_{\beta_{j+1}}}, \quad  M^{\dagger}_j \ket{\psi_{\beta_{j+1}}} = c_j \ket{\psi_{\beta_{j}}},   
\end{equation}
thus $M^{\dagger}_jM_j\ket{\psi_{\beta_j}}=c^2_j\ket{\psi_{\beta_j}}$. Consequently, the adiabatic step $V_jU_w(\beta_j)\ket{\psi_{\beta_j}}=\ket{\psi_{\beta_{j+1}}}$ maps consecutive Gibbs states exactly. Iterating gives $U_A(1)|\psi_{\beta_0}\rangle=|\psi_{\beta_f}\rangle$. Therefore, the operator-norm adiabatic estimate implies the state-preparation bound
\begin{equation}
\bigl\|U(1)|\psi_{\beta_0}\rangle-|\psi_{\beta_f}\rangle\bigr\|
\le\|U(1)-U_A(1)\|.    
\end{equation}
This establishes the connection between adiabatic tracking of the full $+1$ eigenspace and preparation of the coherent Gibbs state.

\subsection{Proof of \cref{lem:qw_finite_differences_a}}\label{app:proof_lemqw_finiy}

\begin{proof}
Using $ U_w(s)=R W(s)^\dagger S W(s)$, the first finite difference, $D^{(1)}U_w(s)=U_w(s+h)-U_w(s)$, where here $h=1/T_d$, can be written as
\begin{align}
D^{(1)}U_w(s)
&= R W^{\dagger}(s+h) S W(s+h) -
R W^{\dagger}(s) S W(s) \nonumber\\
&= R W^{\dagger}(s+h) S W(s+h) - R W^{\dagger}(s) S W(s+h)\nonumber\\
&\quad + R W^{\dagger}(s) S W(s+h) - R W^{\dagger}(s) S W(s) \nonumber\\
&= R\left[ W^{\dagger}(s+h)-W^{\dagger}(s) \right]
S W(s+h)\nonumber\\
&\quad +
R W^{\dagger}(s) S \left[ W(s+h)-W(s) \right] \nonumber\\
&= R\left[D^{(1)}W^{\dagger}(s) \right] S W(s+h) + R W^{\dagger}(s) S D^{(1)}W(s),
\label{eq:D1Uw_expansion}
\end{align}
Since $R$, $S$, $W(s)$, and $W(s+h)$ are unitary,  we obtain
\begin{equation}
\left\| D^{(1)}U_w(s) \right\| \leq 2 \left\| D^{(1)}W(s) \right\|.
\end{equation}
which gives $c_1(s)\leq 2a_1(s)$.

For the second finite difference, where $D^{(2)}U_{w}(s)= D^{(1)}U_w(s+h) - D^{(1)}(s)$, define
\begin{equation}
A(s) = W^\dagger(s)S W(s),
\end{equation}
so that $U_w(s)=R A(s)$. Since $R$ is independent of $s$,
\begin{equation}
D^{(2)}U_w(s) = R D^{(2)}A(s).
\end{equation}
We therefore first calculate $D^{(2)}A(s)$. By definition,
\begin{align} D^{(2)}A(s)  &= D^{(1)}A(s+h) -D^{(1)}A(s)
\nonumber\\ &= A(s+2h)-2A(s+h)+A(s).
\end{align}

From the first-order calculation,
\begin{equation}
    D^{(1)}A(s)
    =
    D^{(1)}W^\dagger(s)
    S W(s+h)
    +
    W^\dagger(s)
    S D^{(1)}W(s).
    \label{eq:D1A}
\end{equation}
Evaluating the same expression at $s+h$ gives
\begin{equation}
D^{(1)}A(s+h) = D^{(1)}W^\dagger(s+h) S W(s+2h) + W^\dagger(s+h)
    S D^{(1)}W(s+h).
    \label{eq:D1A_shift}
\end{equation}
Subtracting \cref{eq:D1A} from \cref{eq:D1A_shift},
\begin{align}
D^{(2)}A(s) &=
D^{(1)}W^\dagger(s+h)
S W(s+2h) +
W^\dagger(s+h)
S D^{(1)}W(s+h) \nonumber\\
&\quad-
D^{(1)}W^\dagger(s)
S W(s+h) -
W^\dagger(s)
S D^{(1)}W(s).
\label{eq:D2A_step1}
\end{align}

We now regroup the first and third terms by adding and subtracting $D^{(1)}W^\dagger(s)S W(s+2h)$, which gives
\begin{align}
D^{(1)}W^\dagger(s+h)S W(s+2h) - D^{(1)}W^\dagger(s)S W(s+h)&=
\left[D^{(1)}W^\dagger(s+h) - D^{(1)}W^\dagger(s) \right]\nonumber\\
&\quad\times S W(s+2h) + D^{(1)}W^\dagger(s) S \left[W(s+2h)-W(s+h)\right].
\end{align}
Using
\begin{equation}
D^{(2)}W^\dagger(s) = D^{(1)}W^\dagger(s+h) - D^{(1)}W^\dagger(s),
\end{equation}
and
\begin{equation}
W(s+2h)-W(s+h) = D^{(1)}W(s+h),
\end{equation}
this becomes
\begin{align}
& D^{(1)}W^\dagger(s+h)S W(s+2h)
-
D^{(1)}W^\dagger(s)S W(s+h)
\nonumber\\
={}&
D^{(2)}W^\dagger(s)
S W(s+2h)
+
D^{(1)}W^\dagger(s)
S D^{(1)}W(s+h).
\label{eq:D2A_pair1}
\end{align}

Similarly, regrouping the second and fourth terms in
\cref{eq:D2A_step1}, we add and subtract
\begin{equation}
    W^\dagger(s)S D^{(1)}W(s+h),
\end{equation}
to obtain
\begin{align}
W^\dagger(s+h)S D^{(1)}W(s+h) -
W^\dagger(s)S D^{(1)}W(s)
&=\left[
W^\dagger(s+h)-W^\dagger(s)
\right]
S D^{(1)}W(s+h)
\nonumber\\
&\quad+
W^\dagger(s)S
\left[
D^{(1)}W(s+h)-D^{(1)}W(s)
\right].
\end{align}
Using
\begin{equation}
    W^\dagger(s+h)-W^\dagger(s)
    =
    D^{(1)}W^\dagger(s),
\end{equation}
and
\begin{equation}
    D^{(1)}W(s+h)-D^{(1)}W(s)
    =
    D^{(2)}W(s),
\end{equation}
we obtain
\begin{align}
&
W^\dagger(s+h)S D^{(1)}W(s+h)
-
W^\dagger(s)S D^{(1)}W(s)
\nonumber\\
={}&
D^{(1)}W^\dagger(s)
S D^{(1)}W(s+h)
+
W^\dagger(s)S D^{(2)}W(s).
\label{eq:D2A_pair2}
\end{align}

Combining \cref{eq:D2A_pair1,eq:D2A_pair2} gives
\begin{align}
    D^{(2)}
    \left(
    W^\dagger S W
    \right)(s)
    ={}&
    D^{(2)}W^\dagger(s)
    S W(s+2h)
    \nonumber\\
    &+
    2D^{(1)}W^\dagger(s)
    S D^{(1)}W(s+h)
    \nonumber\\
    &+
    W^\dagger(s)
    S D^{(2)}W(s).
    \label{eq:D2_product_rule}
\end{align}

Since $U_w(s)=RA(s)$ and $R$ is unitary,
\begin{equation}
    \left\|
    D^{(2)}U_w(s)
    \right\|
    =
    \left\|
    D^{(2)}A(s)
    \right\|.
\end{equation}
Using the triangle inequality and unitary invariance of the operator norm,
\begin{align}
\left\|
D^{(2)}U_w(s)
\right\|
&\leq
\left\|
D^{(2)}W^\dagger(s)
\right\|
\nonumber\\
&\quad+
2
\left\|
D^{(1)}W^\dagger(s)
\right\|
\left\|
D^{(1)}W(s+h)
\right\| +
\left\|
D^{(2)}W(s)
\right\|
\end{align}
so we finally obtain
\begin{equation}
\left\|D^{(2)}U_w(s)\right\| \leq 2 \left\| D^{(2)}W(s) \right\| + 2 \left\| D^{(1)}W(s)\right\| \left\| D^{(1)}W(s+h) \right\|.
\end{equation}
\end{proof}

\subsection{Proof of \cref{lem:qw_finite_differences_L}}\label{app:proof_finit_difc}

\begin{proof}
Let $h=1/T_d$, since $W(s)=W(\beta(s))$ is assumed to be sufficiently smooth, we may expand
$W(s+h)$ around $s$ as
\begin{equation}
W(s+h) = W(s) + h\frac{dW}{ds}(s) + \frac{h^2}{2} \frac{d^2W}{ds^2}(s) + O(h^3).
    \label{eq:W_Taylor_1}
\end{equation}
Therefore, the first finite difference is
\begin{align}
D^{(1)}W(s) &=  W(s+h)-W(s) \nonumber\\
&= h\frac{dW}{ds}(s) + O(h^2).
\label{eq:D1W_Taylor}
\end{align}
Using the chain rule,
\begin{equation}
\frac{dW}{ds} = \dot\beta(s) \partial_\beta W(\beta(s)),
\end{equation}
so that
\begin{equation}
D^{(1)}W(s) = h\,\dot\beta(s) \partial_\beta W(\beta(s)) +
O(h^2).
\end{equation}
Taking the operator norm gives
\begin{align}
\left\| D^{(1)}W(s) \right\| &\leq h|\dot\beta(s)| \left\|
\partial_\beta W(\beta(s)) \right\| + O(h^2)
\nonumber\\
&= \frac{1}{T_d} |\dot\beta(s)| L_1(\beta(s)) + O(T_d^{-2}).
\end{align}
Comparing with
\begin{equation}
\left\| D^{(1)}W(s) \right\| \leq \frac{a_1(s)}{T_d},
\end{equation}
we may therefore choose
\begin{equation}
a_1(s) = |\dot\beta(s)| L_1(\beta(s)) + O(T_d^{-1}).
\label{eq:a1_smooth_proof}
\end{equation}

We next consider the second finite difference,
\begin{equation}
D^{(2)}W(s) =  W(s+2h)-2W(s+h)+W(s).
\end{equation}
Expanding $W(s+2h)$ around $s$ gives
\begin{equation}
 W(s+2h) = W(s) + 2h\frac{dW}{ds}(s) + 2h^2 \frac{d^2W}{ds^2}(s) + O(h^3),
\label{eq:W_Taylor_2}
\end{equation}
whereas \cref{eq:W_Taylor_1} gives
\begin{equation}
2W(s+h) = 2W(s) + 2h\frac{dW}{ds}(s) + h^2\frac{d^2W}{ds^2}(s)
+ O(h^3).
\end{equation}
Substituting these expressions into the definition of $D^{(2)}W(s)$, the
zeroth- and first-order terms cancel, yielding
\begin{align}
D^{(2)}W(s) &= h^2 \frac{d^2W}{ds^2}(s) + O(h^3)\nonumber\\
&= \frac{1}{T_d^2} \frac{d^2W}{ds^2}(s) + O(T_d^{-3}).
    \label{eq:D2W_Taylor}
\end{align}

To evaluate the second derivative, differentiate
\begin{equation}
\frac{dW}{ds} = \dot\beta(s)
\partial_\beta W(\beta(s)).
\end{equation}
Using the product and chain rules,
\begin{align}
\frac{d^2W}{ds^2} &= \frac{d}{ds} \left[ \dot\beta(s) \partial_\beta W(\beta(s)) \right] \nonumber\\
&= \ddot\beta(s) \partial_\beta W(\beta(s)) + \dot\beta(s) \frac{d}{ds}\partial_\beta W(\beta(s)) \nonumber\\
&= \ddot\beta(s) \partial_\beta W(\beta(s)) + \dot\beta(s)^2 \partial_\beta^2W(\beta(s)).
    \label{eq:d2W_chain}
\end{align}
Therefore,
\begin{align}
\left\| \frac{d^2W}{ds^2} \right\| &\leq |\ddot\beta(s)|
\left\| \partial_\beta W(\beta(s)) \right\| + |\dot\beta(s)|^2 \left\| \partial_\beta^2W(\beta(s))  \right\| \nonumber\\
 &= |\ddot\beta(s)|
    L_1(\beta(s))
    +
    |\dot\beta(s)|^2
    L_2(\beta(s)).
\end{align}
Combining this with \cref{eq:D2W_Taylor} gives
\begin{align}
    \left\|
    D^{(2)}W(s)
    \right\|
    \leq{}&
    \frac{1}{T_d^2}
    \left[
    |\ddot\beta(s)|
    L_1(\beta(s))
    +
    |\dot\beta(s)|^2
    L_2(\beta(s))
    \right] +
    O(T_d^{-3}).
\end{align}
Hence,
\begin{equation}
    a_2(s)
    \leq
    |\ddot\beta(s)|
    L_1(\beta(s))
    +
    |\dot\beta(s)|^2
    L_2(\beta(s))
    +
    O(T_d^{-1}).
    \label{eq:a2_smooth_proof}
\end{equation}

We now substitute \cref{eq:a1_smooth_proof} into
Lemma~\ref{lem:qw_finite_differences_a}. From
\begin{equation}
    c_1(s)\leq 2a_1(s),
\end{equation}
we immediately obtain
\begin{equation}
    c_1(s)
    \leq
    2|\dot\beta(s)|
    L_1(\beta(s))
    +
    O(T_d^{-1}),
\end{equation}
which proves \cref{eq:c1_L}. For $c_2$, Lemma~\ref{lem:qw_finite_differences_a} gives
\begin{equation}
c_2(s)\leq 2a_2(s) + 2a_1(s)a_1(s+h).
    \label{eq:c2_substitution}
\end{equation}
Since the schedule and $L_1(\beta)$ are smooth,
\begin{equation}
a_1(s+h) = a_1(s) + O(h),
\end{equation}
and therefore
\begin{align}
a_1(s)a_1(s+h) &= a_1(s)^2 + O(h) \nonumber\\
&= |\dot\beta(s)|^2 L_1(\beta(s))^2 + O(T_d^{-1}).
    \label{eq:a1_product}
\end{align}
Using \cref{eq:a2_smooth_proof,eq:a1_product} in
\cref{eq:c2_substitution}, we find
\begin{align}
c_2(s) &\leq 2|\ddot\beta(s)| L_1(\beta(s)) + 2|\dot\beta(s)|^2
L_2(\beta(s)) + 2|\dot\beta(s)|^2 L_1(\beta(s))^2 + O(T_d^{-1})
\nonumber\\
&= 2|\ddot\beta(s)| L_1(\beta(s)) + 2|\dot\beta(s)|^2 \left[ L_2(\beta(s))
+ L_1(\beta(s))^2 \right] + O(T_d^{-1}),
\end{align}
which proves \cref{eq:c2_L}.
\end{proof}

\subsection{Proof of \cref{thm:quantum_gap_scaling}}
\label{app:proof_gap}

\begin{proof}
Write $s_j=j/T_d$ and
$\Delta_{w,\min}=\min_{s\in[0,1]}\Delta_w(s)$.
By the uniform discrete hypotheses of
\cref{thm:quantum_gap_scaling},
\begin{equation}
\hat c_1(s_j)
\le A_1\mathcal N_p\Delta_w(s_j)^p,
\end{equation}
and
\begin{equation}
\hat c_2(s_j)
\le A_2\mathcal N_p^2
\left[
\Delta_w(s_j)^{2p-1}
+\Delta_w(s_j)^{2p}
\right],
\end{equation}
at all relevant indices. The assumed gap comparison gives
\begin{equation}
\check\Delta_2(s_j)\ge b\Delta_w(s_j),
\end{equation}
where $A_1,A_2,b>0$ are independent of $T_d$ and
$\Delta_{w,\min}$.

We first show that the runtime prerequisite in
\cref{eq:run_cond} is compatible with the claimed
inverse-gap scaling. Since
$0<\Delta_w(s)\le\pi$, we have
\begin{equation}
\max_j
\frac{4\hat c_1(s_j)}{\check\Delta_2(s_j)}
\le
\frac{4A_1}{b}\mathcal N_p
\max_j\Delta_w(s_j)^{p-1}
\le
\frac{4A_1}{b}\mathcal N_p\pi^{p-1}.
\end{equation}
By the assumed normalization bound, there exists a
constant $C_N>0$, independent of the minimum gap, such that
\begin{equation}
\mathcal N_p\le C_N\Delta_{w,\min}^{1-p}.
\end{equation}
Consequently, \cref{eq:run_cond} is satisfied whenever
\begin{equation}
T_d\ge K\Delta_{w,\min}^{1-p},
\qquad
K:=\frac{4A_1C_N\pi^{p-1}}{b}.
\end{equation}
In the small-gap regime $0<\Delta_{w,\min}\le1$,
the assumption $1<p<2$ implies
\begin{equation}
\Delta_{w,\min}^{1-p}
\le\Delta_{w,\min}^{-1}.
\end{equation}
Thus, choosing $T_d\ge K/\Delta_{w,\min}$ also satisfies
the runtime prerequisite, without increasing the claimed
inverse-gap scaling.

We now bound the contributions in
\cref{eq:discrete_adiabatic_bound}, for runtimes satisfying
\cref{eq:run_cond}. For each boundary term with a squared
gap denominator,
\begin{align}
\frac{\hat c_1(s_j)}
{T_d\check\Delta_2(s_j)^2}
&\le
\frac{A_1\mathcal N_p}
{b^2T_d\Delta_w(s_j)^{2-p}}
\nonumber\\
&\le
\frac{A_1C_N}{b^2T_d}
\frac{\Delta_{w,\min}^{1-p}}
{\Delta_{w,\min}^{2-p}}\nonumber\\
&=
\frac{A_1C_N}
{b^2T_d\Delta_{w,\min}}.
\end{align}
For the boundary term with a single gap denominator,
\begin{align}
\frac{\hat c_1(s_j)}
{T_d\check\Delta_2(s_j)}
&\le
\frac{A_1\mathcal N_p}{bT_d}
\Delta_w(s_j)^{p-1}
\nonumber\\
&\le \frac{A_1C_N\pi^{p-1}}{bT_d}
\Delta_{w,\min}^{1-p}\nonumber\\
&= O\left(\frac{1}{T_d\Delta_{w,\min}}\right).
\end{align}

For the accumulated first-order contribution with a
cubed gap denominator, the coefficient bound gives
\begin{align}
\frac{1}{T_d^2}
\sum_{j=1}^{T_d-1}
\frac{\widehat c_1(s_j)^2}
{\check\Delta_2(s_j)^3}
&\le
\frac{A_1^2\mathcal N_p^2}{b^3T_d^2}
\sum_{j=1}^{T_d-1}\Delta_w(s_j)^{2p-3}
\nonumber\\
&\le
\frac{A_1^2C_q\mathcal N_p^2}{b^3T_d}
\int_0^1\Delta_w(s)^{2p-3}\,ds,
\end{align}
where the second inequality follows from the assumed
uniform quadrature estimate. The sum can be extended to
$j=0,\ldots,T_d-1$ because all its terms are nonnegative.

Using the schedule relation
\begin{equation}
ds=
\frac{L_1(\beta)}
{\mathcal N_p\Delta_w(\beta)^p}\,d\beta,
\end{equation}
we obtain
\begin{equation}
\mathcal N_p^2
\int_0^1\Delta_w(s)^{2p-3}\,ds
=
\mathcal N_p
\int_{\beta_0}^{\beta_f}
\frac{L_1(\beta)}
{\Delta_w(\beta)^{3-p}}\,d\beta.
\end{equation}
By \cref{eq:schedule_integral_condition2}, this quantity
is $O(\Delta_{w,\min}^{-1})$. Therefore,
\begin{equation}
\frac{1}{T_d^2}
\sum_{j=1}^{T_d-1}
\frac{\hat c_1(s_j)^2}
{\check\Delta_2(s_j)^3}
=
O\left(\frac{1}{T_d\Delta_{w,\min}}\right).
\end{equation}

For the accumulated first-order contribution with a
squared gap denominator, we similarly have
\begin{align}
\frac{1}{T_d^2}
\sum_{j=0}^{T_d-1}
\frac{\hat c_1(s_j)^2}
{\check\Delta_2(s_j)^2}
&\le
\frac{A_1^2\mathcal N_p^2}{b^2T_d^2}
\sum_{j=0}^{T_d-1}\Delta_w(s_j)^{2p-2}
\nonumber\\
&\le
\frac{\pi A_1^2\mathcal N_p^2}{b^2T_d^2}
\sum_{j=0}^{T_d-1}\Delta_w(s_j)^{2p-3},
\end{align}
where we used $\Delta_w(s_j)\le\pi$.
Applying the same quadrature estimate and schedule
integral bound gives
\begin{equation}
\frac{1}{T_d^2}
\sum_{j=0}^{T_d-1}
\frac{\hat c_1(s_j)^2}
{\check\Delta_2(s_j)^2}
=
O\left(\frac{1}{T_d\Delta_{w,\min}}\right).
\end{equation}

Finally, the uniform second-difference bound implies
\begin{align}
\frac{1}{T_d^2}
\sum_{j=1}^{T_d-1}
\frac{\hat c_2(s_j)}
{\check\Delta_2(s_j)^2}
&\le
\frac{A_2\mathcal N_p^2}{b^2T_d^2}
\sum_{j=1}^{T_d-1}
\left[
\Delta_w(s_j)^{2p-3}
+\Delta_w(s_j)^{2p-2}
\right]
\nonumber\\
&\le
\frac{A_2(1+\pi)\mathcal N_p^2}{b^2T_d^2}
\sum_{j=0}^{T_d-1}\Delta_w(s_j)^{2p-3}.
\end{align}
The same quadrature estimate and change of variables
therefore give
\begin{equation}
\frac{1}{T_d^2}
\sum_{j=1}^{T_d-1}
\frac{\hat c_2(s_j)}
{\check\Delta_2(s_j)^2}
=
O\left(\frac{1}{T_d\Delta_{w,\min}}\right).
\end{equation}

All contributions in \cref{eq:discrete_adiabatic_bound}
are thus bounded by
$O((T_d\Delta_{w,\min})^{-1})$, with uniform constants
under the stated hypotheses. Consequently,
\begin{equation}
\|U(1)-U_A(1)\|
=
O\left(\frac{1}{T_d\Delta_{w,\min}}\right).
\end{equation}
For fixed target error, a sufficiently large constant
multiple of $\Delta_{w,\min}^{-1}$ therefore suffices
both to satisfy the runtime prerequisite and to achieve
the target accuracy. Using
$\Delta_{w,\min}=\Theta(\sqrt{\delta_{\min}})$ gives the
claimed sufficient scaling
$T_d=O(\delta_{\min}^{-1/2})$.
\end{proof}

\section{Spectral identification and operator variation of the LHPST walk}\label{app:Analy_Diff}

We describe the LHPST realization used in the numerical experiments and derive the operator-variation formula entering the discrete quantum schedule.

\subsection{Walk construction and discriminant}

Let $x\in\{-1,+1\}^n$ denote a spin configuration and $x^{(k)}$ the configuration obtained by flipping spin $k$. The Metropolis acceptance probability is
\begin{equation}
p_{xk}(\beta) =\min\!\left\{1,e^{-\beta[E(x^{(k)})-E(x)]}\right\}.    
\end{equation}
The corresponding transition kernel with uniform single-spin proposals is
\begin{equation}
P_\beta(x^{(k)}|x)=\frac{p_{xk}(\beta)}n,
\qquad
P_\beta(x|x)=1-\frac1n\sum_{k=1}^n p_{xk}(\beta).    
\end{equation}
Let $\Pi_0$ project onto the initial auxiliary state. The temperature-independent unitary $V$ prepares
\begin{equation}
V|x,0\rangle
=\frac1{\sqrt n}\sum_{k=1}^n|x,k,0\rangle,    \end{equation}
where the last register is the acceptance qubit. Conditioned on $x,k$, the acceptance unitary is $B_{xk}(\beta)\ket{1} = \sqrt{p}\ket{0} - \sqrt{1-p}\ket{1}$, where $p=p_{xk}(\beta)$. Thus, in the ordered basis $\{\ket{0},\ket{1}\}$, it gives 
\begin{equation}
\label{eq: Block_B}
B_{xk}(\beta)=
\begin{pmatrix}
\sqrt{1-p_{xk}(\beta)}&\sqrt{p_{xk}(\beta)}\\
\sqrt{p_{xk}(\beta)}&-\sqrt{1-p_{xk}(\beta)}
\end{pmatrix}.  
\end{equation}
These blocks are Hermitian and unitary. Any unused auxiliary blocks are chosen to be temperature independent.
The controlled move operator satisfies
\begin{equation}
F|x,k,0\rangle=|x,k,0\rangle,
\qquad F|x,k,1\rangle=|x^{(k)},k,1\rangle.    
\end{equation}
Thus $F^\dagger=F$ and $F^2=I$. Defining
\begin{equation}
W_{\mathrm{LHPST}}(\beta)=B(\beta)V,
\qquad R=2\Pi_0-I,    
\end{equation}
the walk is
\begin{equation}
\widetilde U_w(\beta)
=R\,W_{\mathrm{LHPST}}(\beta)^\dagger
F W_{\mathrm{LHPST}}(\beta)
=R\,V^\dagger B(\beta)^\dagger F B(\beta)V.    
\end{equation}
Set
\begin{equation}
A(\beta)=V^\dagger B(\beta)^\dagger F B(\beta)V,
\qquad
X_\beta=\Pi_0A(\beta)\Pi_0.    
\end{equation}
The operator $A_\beta$ is a Hermitian unitary. On the preparation subspace, direct evaluation gives
\begin{equation}
\langle y,0|X_\beta|x,0\rangle
=
\begin{cases}
\displaystyle 1-\frac1n\sum_k p_{xk}(\beta),
&y=x,\\[6pt]
\displaystyle \frac1n
\sqrt{p_{xk}(\beta)p_{yk}(\beta)},
&y=x^{(k)},\\[6pt]
0,&\text{otherwise}.
\end{cases}    
\end{equation}
Consequently,
\begin{equation}
\langle y,0|X_\beta|x,0\rangle
=\sqrt{P_\beta(y|x)P_\beta(x|y)}.   
\end{equation}
By detailed balance, this is the discriminant of $P_\beta$. With the column-stochastic convention, $X_\beta\big|_{\operatorname{ran}\Pi_0}
\cong D_{\pi_\beta}^{-1/2}P_\beta D_{\pi_\beta}^{1/2}.$

\subsection{Gibbs eigenspace and eigenphase gap}
For finite $\beta$, the single-spin Metropolis kernel is irreducible, and its stationary eigenvalue $1$ is simple. Its discriminant therefore has the unique normalized eigenvalue-$1$ vector $|\psi_\beta\rangle
=\sum_x\sqrt{\pi_\beta(x)}\,|x,0\rangle.$ Since $A(\beta)$ is a Hermitian unitary, $\|(A(\beta) - I)\ket{\psi_\beta}\|=0$, then 
\begin{equation}
A_\beta|\psi_\beta\rangle=|\psi_\beta\rangle, \qquad \widetilde U_w(\beta)|\psi_\beta\rangle=|\psi_\beta\rangle.    
\end{equation}

The full eigenvalue- $+1$ eigenspace may contain additional states. Its projector has the decomposition
\begin{equation}
P_+(\beta) =|\psi_\beta\rangle\langle\psi_\beta|
+P_{\mathrm{extra}}(\beta),
\qquad P_{\mathrm{extra}}(\beta)\Pi_0=0.   \end{equation}
The projector and transport argument in \cref{app:statio_eig} therefore applies to this realization: the canonical adiabatic transport preserves the fixed preparation sector and maps its Gibbs line between successive temperatures.

Now, if we write the eigenvalues of $P_\beta$ as $1=\lambda_0(\beta)>\lambda_1(\beta)\ge\cdots.$ The nontrivial reflection-product blocks have eigenphases $\pm\arccos \lambda_k(\beta)$. Thus the angular gap separating the entire eigenvalue-$+1$ eigenspace from the remaining spectrum is
\begin{equation}
\Delta_w(\beta) =\arccos\lambda_1(\beta)
=\arccos\!\left(1-\delta_+(\beta)\right),
\qquad \delta_+(\beta)=1-\lambda_1(\beta).   
\end{equation}
Here $\delta_+$ is the right spectral gap of the non-lazy Metropolis kernel. If the classical gap is instead computed from
\begin{equation}
P_\beta^L=\frac12(I+P_\beta), 
\end{equation}
then 
\begin{equation}
\delta_L(\beta)=\frac{\delta_+(\beta)}2,
\qquad
\Delta_w(\beta)=\arccos\!\left(1-2\delta_L(\beta)\right).   
\end{equation}

The implemented evolution uses the prepared move basis and the opposite sign for the reflection. These conventions correspond to a temperature-independent change of basis and an overall phase per step, respectively, and leave the relevant angular gap and output probabilities unchanged.

\subsection{Exact finite-difference norm}
Because $V$ is temperature independent and unitary $\|W_{\mathrm{LHPST}}(\beta')-W_{\mathrm{LHPST}}(\beta)\|
=\|B(\beta')-B(\beta)\|.$ For $\beta,\beta'\ge0$, downhill and zero-energy moves have acceptance probability $1$, so their blocks are constant.

For an uphill energy increment $d>0$, write
\begin{equation}
a_d(\beta)=\sqrt{1-e^{-\beta d}},
\qquad b_d(\beta)=e^{-\beta d/2}.    
\end{equation}
Given that the acceptance unitary has the block-form given by \cref{eq: Block_B}, the block difference is 
\begin{equation}
B_d(\beta')-B_d(\beta)
= \begin{pmatrix}
\Delta a_d&\Delta b_d\\
\Delta b_d&-\Delta a_d
\end{pmatrix},    
\end{equation}
where $\Delta a_d=a_d(\beta')-a_d(\beta)$, and similarly for $\Delta b_d$. Its square equals
\begin{equation}
\bigl[B_d(\beta')-B_d(\beta)\bigr]^2
=\left[(\Delta a_d)^2+(\Delta b_d)^2\right]I.    
\end{equation}
Therefore,
\begin{equation}
\|B_d(\beta')-B_d(\beta)\|
=\sqrt{(\Delta a_d)^2+(\Delta b_d)^2}.    
\end{equation}
Taking the maximum over the uphill increments occurring in the instance gives the full operator norm.

For the classical Hamiltonian used here, $E(x) = \frac2{\sqrt n}\sum_{i<j}J_{ij}x_ix_j$ with $J_{ij}\in\{-1,+1\}$, a spin flip produces 
\begin{equation}
E(x^{(k)})-E(x)
=-\frac4{\sqrt n}x_k h_k(x),
\qquad h_k(x)=\sum_{j\ne k}J_{kj}x_j.    
\end{equation}
Thus the positive increments are $\Delta E_m =4m/\sqrt{n}$, where $m$ ranges over the positive local-field magnitudes. For this fully connected binary model,
\begin{equation}
\mathcal{M}_n=\left\{|n-1-2r|:r=0,\ldots,n-1\right\}\setminus\{0\}.
\end{equation}
Consequently, 
\begin{align}
D_{n,i} &=\|W_{\mathrm{LHPST}}(\beta_{i+1})
-W_{\mathrm{LHPST}}(\beta_i)\|\nonumber\\
&=\max_{m\in\mathcal M_n}
\left[ \left(
\sqrt{1-e^{-\beta_{i+1}\Delta E_m}}
-\sqrt{1-e^{-\beta_i\Delta E_m}}
\right)^2
+ \left( e^{-\beta_{i+1}\Delta E_m/2}
-e^{-\beta_i\Delta E_m/2}
\right)^2 \right]^{1/2}.
\end{align}
This establishes \cref{eq:SK_discrete_L1}. The analytical maximum contains $O(n)$ distinct increments. The supplied implementation instead obtains the same maximum by enumerating configurations and proposed flips.

\subsection{Derivative norms and the initial endpoint}
For $\beta>0$, let's define $\theta_d(\beta)=\arccos (e^{-\beta d/2})$, then the block form of the acceptance unitary is
\begin{equation}
B_d(\beta)=
\begin{pmatrix}
\sin\theta_d&\cos\theta_d\\
\cos\theta_d&-\sin\theta_d
\end{pmatrix}.  
\end{equation}
Differentiating the unitary above in respect to $\beta$, then, gives $B'_d=\theta'_dC_d$, where 
\begin{equation}
C_d(\beta)=
\begin{pmatrix}
\cos\theta_d&-\sin\theta_d\\
-\sin\theta_d&-\cos\theta_d
\end{pmatrix}.  
\end{equation}
Thus, $\|B_d'(\beta)\|=|\theta_d'(\beta)|$, and also $\|B_d''(\beta)\| =\sqrt{\theta_d''(\beta)^2+\theta_d'(\beta)^4}$, with 
\begin{equation}
\theta_d'(\beta)
=\frac{d e^{-\beta d/2}}{2\sqrt{1-e^{-\beta d}}},
\qquad \theta_d''(\beta) =-\frac{d^2e^{-\beta d/2}}
{4(1-e^{-\beta d})^{3/2}}.    
\end{equation}
The full preparation-unitary norms are therefore
\begin{align}
\label{eq:L1_l2}
L_1(\beta) &=\|\partial_\beta W_{\mathrm{LHPST}}(\beta)\|
=\max_d|\theta_d'(\beta)|\nonumber\\
L_2(\beta)&= \|\partial_\beta^2W_{\mathrm{LHPST}}(\beta)\|
=\max_d\sqrt{\theta_d''(\beta)^2+\theta_d'(\beta)^4},
\end{align}
where the maxima run over the positive energy increments.

Thus, at the initial endpoint \cref{eq:L1_l2}, $\sqrt{1-e^{-\beta d}} = \sqrt{d\beta}+O(\beta^{3/2})$, so $B(\beta)$ is continuous but not differentiable in $\beta$ at zero. Nevertheless, its finite differences remain well defined, and these are used directly to construct the numerical schedule.

We, then, see that the numerical schedule starting at $\beta=0$ is not automatically covered by \cref{thm:quantum_gap_scaling}. Such coverage requires verifying the theorem’s uniform discrete regularity and gap assumptions along the sampled sequence. We do not establish those endpoint estimates here; the numerical results for this schedule are therefore presented as empirical evidence rather than as a direct application of the asymptotic runtime bound.

\section{Construction of the schedules used in the numerical experiments}\label{app:sched_construc}

In this appendix we give the explicit construction of the schedules used
in the numerical experiments.  It is useful to distinguish between \textit{fixed schedules}, which are identical for all disorder realizations,
and \textit{instance-dependent schedules}, which are constructed from the dynamical geometry of each realization.

\subsection{Fixed schedules}

We now list the fixed schedules explored in this report.

\paragraph{Linear schedule.}
As a reference, we first consider a uniform interpolation in inverse
temperature,
\begin{equation}
\label{eq:schedule_linear_two_state}
\beta_{\mathrm{lin}}(t) = \beta_0 + (\beta_f-\beta_0)t.
\end{equation}
This schedule does not make use of either the statistical properties of the Gibbs distribution or the relaxation properties of the Markov chain.

\paragraph{Power-law schedules.}
We also consider the family
\begin{equation}
\label{eq:schedule_power_two_state}
\beta_p(t) = \beta_0 + (\beta_f-\beta_0)t^p,
\end{equation}
with $p=1/2$, 2 and 3. The choice $p<1$ produces a front-loaded schedule, for which the inverse temperature changes faster at early times, whereas $p>1$ produces a back-loaded schedule.

\paragraph{Exponential schedules.}
A second family of heuristic schedules is given by
\begin{equation}
\label{eq:schedule_exp_two_state}
\beta_{\exp}(t) = \beta_0 + (\beta_f-\beta_0) \frac{e^{at}-1}{e^a-1},
\end{equation}
where we consider different values for $a$ according to the problem studied. For $a<0$ the schedule is front-loaded and slows down toward the final inverse temperature, while $a>0$ gives the opposite behavior.

\subsection{Construction of the schedules for the two-state model}

Here we derive the inverse temperature schedules based on different proposals, including the schedule results from the main theorem \cref{thm:tracking_changing_norm} for the solvable two-state model, \cref{sec:two-state_m}.
\paragraph{Variance-only schedule.}
For the Gibbs family,
\begin{equation}
    V(\beta)
    =
    \sqrt{\operatorname{Var}_{\pi_\beta}(E)}
    =
    \frac{1}{2\cosh(\beta/2)}.
\end{equation}
The corresponding statistical metric is $g_{\beta\beta}^{\mathrm{stat}}(\beta) = V(\beta)^2$. A constant-speed parametrization with respect to this metric satisfies $V(\beta(t))\,\dot\beta(t) = \alpha_V$ and therefore
\begin{equation}
\label{eq:variance_schedule_velocity}
    \dot\beta_{\mathrm{var}}(t)
    =
    \frac{\alpha_V}{V(\beta(t))}.
\end{equation}
Using
\begin{equation}
\int \frac{d\beta}{2\cosh(\beta/2)}  =
    \arctan\!\left[\sinh\!\left(\frac{\beta}{2}\right)\right],
\end{equation}
we obtain the explicit schedule
\begin{equation}
\label{eq:variance_schedule_explicit}
\beta_{\mathrm{var}}(t) = 2\,\operatorname{arsinh}
\left[ \tan\left((1-t)F_V(\beta_0)+ tF_V(\beta_f) \right)\right],
\end{equation}
where
\begin{equation}
 F_V(\beta) = \arctan\!\left[ \sinh\!\left(\frac{\beta}{2}\right)\right].
\end{equation}

\paragraph{Gap-only schedule.}
The discrete tracking theorem exhibits a leading inverse-linear dependence
on the spectral gap. To isolate this contribution, we consider a schedule
satisfying
\begin{equation}
\frac{\dot\beta(t)}{\Delta(\beta(t))} = \alpha_\Delta.
\end{equation}
Since in the present model $\Delta(\beta)=e^{-\beta}$ the corresponding schedule obeys
\begin{equation}
\label{eq:gap_schedule_velocity}
\dot\beta_{\mathrm{gap}}(t) = \alpha_\delta e^{-\beta(t)}.
\end{equation}
Integration gives
\begin{equation}
\label{eq:gap_schedule_explicit}
    \beta_{\mathrm{gap}}(t)
    =
    \log\left[
    (1-t)e^{\beta_0}
    +
    te^{\beta_f}
    \right].
    \end{equation}
This schedule therefore directly probes the inverse-linear gap dependence
predicted by the tracking theorem.

\paragraph{Friction schedule.}
We also compare with the constant-excess-power schedule obtained from the
thermodynamic friction tensor. For a one-dimensional control parameter,
the friction coefficient takes the form
\begin{equation}
\label{eq:friction_tensor_two_state}
    \zeta_{\beta\beta}(\beta)
    \sim
    \operatorname{Var}_{\pi_\beta}(E)\,
    \tau_E(\beta),
\end{equation}
where $\tau_E(\beta)$ is the equilibrium correlation time of the energy.
For the present two-state transition operator, the centered energy
observable lies entirely in the unique nonstationary mode, whose eigenvalue
is $1-\delta(\beta)$. Using the one-sided discrete correlation-time
convention,
\begin{equation}
\tau_E(\beta) = \sum_{k=0}^{\infty}
\frac{C_E(k)}{C_E(0)} = \frac{1}{\delta(\beta)}.
\end{equation}
Hence
\begin{equation}
 \zeta_{\beta\beta}(\beta) \sim \frac{V(\beta)^2}{\delta(\beta)}.
\end{equation}
The constant-excess-power condition then gives
\begin{equation}
\label{eq:friction_schedule_velocity}
    \dot\beta_{\mathrm{fr}}(t)
    \propto
    \frac{\sqrt{\delta(\beta(t))}}
    {V(\beta(t))}.
\end{equation}
For
$\delta(\beta)=e^{-\beta}$ and
$V(\beta)=1/[2\cosh(\beta/2)]$, this simplifies to
\begin{equation}
    \dot\beta_{\mathrm{fr}}(t)
    \propto
    1+e^{-\beta(t)}.
\end{equation}
Integrating, we obtain
\begin{equation}
\label{eq:friction_schedule_explicit}
\beta_{\mathrm{fr}}(t) = \log\left[ \exp\left( (1-t)\log(1+e^{\beta_0}) + t\log(1+e^{\beta_f}) \right) -1 \right].
\end{equation}

\paragraph{Local tracking schedule.}
Finally, we consider the schedule obtained from the local tracking metric \cref{lemma:effective_metric}. A constant-speed parametrization satisfies
\begin{equation}
\label{eq:tracking_schedule_velocity}
\dot\beta_{\mathrm{track}}(t) = \alpha_{\mathrm{track}} \frac{\delta(\beta(t))}{(1-\delta(\beta(t)))V(\beta(t))}.
\end{equation}
For the present two-state model, we have
\begin{equation}
\sqrt{g_{\beta\beta}^{\mathrm{track}}(\beta)}
= e^{\beta/2}\tanh\!\left(\frac{\beta}{2}\right),
\end{equation}
so introducing
\begin{equation}
F_{\mathrm{track}}(\beta) = 2e^{\beta/2} - 4\arctan\!\left(e^{\beta/2}\right),
\end{equation}
the constant-speed condition becomes
\begin{equation}
F_{\mathrm{track}}(\beta(t)) = (1-t)F_{\mathrm{track}}(\beta_0) + tF_{\mathrm{track}}(\beta_f).
\end{equation}
Thus,
\begin{equation}
\label{eq:tracking_schedule_explicit}
\beta_{\mathrm{track}}(t) = F_{\mathrm{track}}^{-1} \left[ (1-t)F_{\mathrm{track}}(\beta_0) + tF_{\mathrm{track}}(\beta_f)\right],
\end{equation}
where the inverse is evaluated numerically. Notice that at $\beta_0=0$, the tracking metric vanishes for the two-state model. However, the schedule \cref{eq:tracking_schedule_explicit} remains continuous, but behaves as $\beta(t)\sim 2\sqrt{\alpha_{\text{track}}t}$ near $t=0$, and hence is not differentiable there. Thus, \cref{lemma:effective_metric} does not directly apply to this schedule on the full interval, nevertheless, the discrete adiabatic theorem applies to the sampled sequence, and the numerical results remain valid.

These schedules allow us to compare different notions of optimality. The variance-only schedule accounts exclusively for the statistical variation of the Gibbs state, the gap-only schedule isolates the inverse-linear dependence on the relaxation gap, and the friction schedule optimizes the constant-excess-power criterion. In contrast, the local
tracking schedule is constructed to equalize the leading local adiabatic
tracking difficulty,
\begin{equation}
    \frac{(1-\delta(\beta))V(\beta)}
    {\delta(\beta)}
    |\dot\beta|,
\end{equation}
and is therefore directly connected to the discrete tracking theorem.

\subsection{Markov dynamics and schedules constructions}\label{app:schedu_sk}

Before we show how we extract the schedule in the experiments for the SK instance, we explain how the Markov dynamics is constructed in our numerical testing.

\subsubsection{Construction of the SK Markov dynamics}

For each SK realization, the state space contains all $2^n$ spin configurations, $\Omega=\{-1,+1\}^n$. It is convenient numerically to label each configuration by an $n$-bit string
\begin{equation}
    x=(b_1,\ldots,b_n),
    \qquad
    b_i\in\{0,1\},
\end{equation}
with the correspondence $\sigma_i=1-2b_i$.  Thus, every integer $x\in\{0,\ldots,2^n-1\}$ labels one spin configuration.

The Markov dynamics is constructed from single-spin-flip Metropolis updates. From a configuration $x$, a proposal consists of choosing one of the $n$ spins uniformly and flipping it. In the binary representation this corresponds to flipping a single bit,
\begin{equation}
    y=x\oplus 2^{i-1},
\end{equation}
where $\oplus$ denotes the bitwise exclusive OR operation. Consequently, two configurations are directly connected by the Markov chain if and only if their Hamming distance is one, $d_{\mathrm H}(x,y)=1$. Each configuration therefore has exactly $n$ possible single-spin-flip neighbours.

For the SK energy used in the classical schedule benchmark,
\begin{equation}
E_r(\sigma)=-2\sum_{1\leq i<j\leq n}J_{ij}^{(r)}\sigma_i\sigma_j,
\end{equation}
the energy difference associated with a proposed move $x\rightarrow y$ is
\begin{equation}
    \Delta E_{yx}
    =
    E_r(y)-E_r(x).
\end{equation}
The Metropolis acceptance probability at inverse temperature $\beta$ is
\begin{equation}
A_\beta(y,x) = \min\left\{1,
    e^{-\beta\Delta E_{yx}}
    \right\}.
\end{equation}
Thus, energy-lowering moves are always accepted, whereas energy-increasing moves are accepted with probability
$e^{-\beta\Delta E_{yx}}$. With the convention that probability distributions are column vectors, the non-lazy Metropolis transition matrix has off-diagonal elements
\begin{equation}
    \mathcal W_{yx}(\beta)
    =
    \frac{1}{n}
    A_\beta(y,x),
    \qquad
    d_{\mathrm H}(x,y)=1,
\end{equation}
and
\begin{equation}
    \mathcal W_{yx}(\beta)=0,
    \qquad
    d_{\mathrm H}(x,y)>1.
\end{equation}
The probability of remaining in the same configuration is determined by
probability conservation,
\begin{equation}
    \mathcal W_{xx}(\beta)
    =
    1-
    \sum_{y\neq x}
    \mathcal W_{yx}(\beta).
\end{equation}
The diagonal element therefore contains the probability associated with
rejected spin-flip proposals.

In the numerical results reported below we use the lazy version of the chain,
\begin{equation}
\mathcal W^{\mathrm L}_\beta = \frac{1}{2}I+
\frac{1}{2}\mathcal W_\beta .
\label{eq:sk_lazy_kernel}
\end{equation}
Equivalently, at each Markov step there is probability $1/2$ of remaining at the current configuration without attempting a move, while with probability
$1/2$ a standard single-spin Metropolis update is performed. Lazification removes the period-two behaviour that can occur for the single-spin-flip walk on the hypercube, particularly at $\beta=0$, while preserving the Gibbs stationary distribution.

For every value of $\beta$, the Gibbs distribution satisfies detailed balance,
\begin{equation}
    \pi_\beta(x)
    \mathcal W_{yx}(\beta)
    =
    \pi_\beta(y)
    \mathcal W_{xy}(\beta),
\end{equation}
and hence $\mathcal W_\beta\pi_\beta = \pi_\beta$. The same stationary distribution is preserved by the lazy kernel.

\subsubsection{Estimation of the spectral gap}

The annealing schedules that use spectral information require the instantaneous gap of the lazy Markov kernel $W^L_\beta$, defined in \cref{eq:SK_lazy_transition}. Using reversibility, we introduce the symmetric similarity transform
\begin{equation}
S_\beta = \operatorname{diag}(\pi_\beta^{-1/2})
\mathcal W^L_\beta \operatorname{diag}(\pi_\beta^{1/2}).
    \label{eq:sk_symmetrized_walk}
\end{equation}
Detailed balance implies $S_\beta=S_\beta^{\mathsf T}$. Since $S_\beta$ and $W^L_\beta$ are similar they have the same eigenvalues. 

Write the eigenvalues of $W^L_\beta$ as $1=\lambda_0(\beta), \lambda_1(\beta), \ldots, \lambda_{2^n-1}(\beta)$. The spectrum is nonnegative because $W^L_\beta=1/2(I+W_\beta)$. Consequently, its absolute spectral gap is 
\begin{equation}
\delta(\beta) = 1- \max_{k\geq1} |\lambda_k(\beta)|.
\label{eq:sk_absolute_gap}
\end{equation}
All spectral gaps used to construct the classical SK schedules refer to this lazy kernel. Lazification maps each eigenvalue $\mu_k=[-1,1]$ of $W_\beta$ to $\lambda_k=(1+\mu_k)/2 \in [0,1]$. This, the absolute gap of the lazy kernel equals $1-\lambda_1(\beta)$.

The extraction of the gap therefore follows the sequence
\begin{equation}
A_\beta(y,x) \;\longrightarrow\; \mathcal W_\beta
    \;\longrightarrow\; \mathcal{W}_\beta^L \longrightarrow
    S_\beta
    \;\longrightarrow\;
    \{\lambda_k(\beta)\}
    \;\longrightarrow\;
    \delta(\beta).
\end{equation}
That is, the Metropolis acceptance probabilities determine the entries of the full Markov matrix, and the spectral gap is then obtained from the spectrum of that matrix.

The matrix $\mathcal W_\beta$ is sparse: each column contains at most $n$ off-diagonal single-spin-flip transitions, corresponding to the $n$ configurations at Hamming distance one, together with one diagonal entry accounting for rejected moves and, for the lazy chain, the additional probability of remaining in the same state. We therefore construct $\mathcal W_\beta$ directly in sparse form.

For the smaller state spaces, the complete spectrum of the symmetric matrix $S_\beta$ is obtained by direct diagonalization. For the larger systems, only the extremal eigenvalues required to evaluate \cref{eq:sk_absolute_gap} are computed using a sparse symmetric eigensolver. The stationary eigenvalue closest to one is identified and removed, after which $ \max_{k\geq1}|\lambda_k(\beta)|$ is determined from the remaining spectrum. The gap is evaluated on a grid of inverse temperatures along $\beta\in[\beta_0,\beta_f]$. Together with the exactly computed Gibbs energy standard deviation, $V(\beta)$, the resulting behaviors of $\Delta(\beta)$ and $V(\beta)$ provide the local information required to construct the annealing schedules described next.

\subsubsection{Construction of the annealing schedules}

We construct the annealing schedules from quantities evaluated along the inverse-temperature interval $\beta\in[\beta_0,\beta_f]$. For each SK disorder realization, the Gibbs distribution $\pi_\beta(x)$ is evaluated exactly on a grid of inverse temperatures. From this distribution we compute the energy standard deviation $ V(\beta)$, together with the instantaneous absolute spectral gap of the lazy Metropolis kernel, $\delta(\beta)$

Each annealing prescription specifies a local rate
\begin{equation}
\frac{d\beta}{ds}, \qquad s\in[0,1],
\end{equation}
which determines how fast the inverse temperature is varied along the normalized evolution parameter $s$. Rather than integrating this differential equation directly, we define the associated schedule density as the inverse local speed,
\begin{equation}
    \rho(\beta)
    \propto
    \left(
    \frac{d\beta}{ds}
    \right)^{-1}.
    \label{eq:schedule_density_def}
\end{equation}
The overall proportionality constant is irrelevant, since it is fixed by the boundary conditions $\beta_0$ and $\beta_f$.

The quantity $\rho(\beta)$ therefore measures the amount of evolution parameter allocated per unit change in inverse temperature. A large value of $\rho(\beta)$ corresponds to a small $d\beta/ds$ and hence to a larger number of Markov steps being concentrated in that region of the Gibbs path.

Given a positive schedule density $\rho(\beta)$, we define the cumulative schedule coordinate
\begin{equation}
    A(\beta)
    =
    \int_{\beta_0}^{\beta}
    \rho(u)\,du,
\end{equation}
and normalize it according to
\begin{equation}
    s(\beta)
    =
    \frac{
    A(\beta)
    }{
    A(\beta_f)
    }
    =
    \frac{
    \displaystyle\int_{\beta_0}^{\beta}\rho(u)\,du
    }{
    \displaystyle\int_{\beta_0}^{\beta_f}\rho(u)\,du
    }.
    \label{eq:sk_schedule_coordinate}
\end{equation}
Differentiating this expression gives
\begin{equation}
    \frac{ds}{d\beta}
    =
    \frac{
    \rho(\beta)
    }{
    \displaystyle\int_{\beta_0}^{\beta_f}\rho(u)\,du
    },
\end{equation}
which is equivalent to the original local annealing-rate prescription.

The continuous annealing schedule is obtained by numerically inverting
\cref{eq:sk_schedule_coordinate}, $\beta=\beta(s)$. For an evolution containing $T$ applications of the Markov kernel, the normalized evolution coordinate is sampled uniformly, $s_j=j/T$ for $j=0,\ldots,T$, and the corresponding inverse temperatures are $\beta_j=\beta(j/T)$. The resulting time-inhomogeneous evolution is then $p_{j+1} = \mathcal W_{\beta_{j+1}}p_j$.

Numerically, $\rho(\beta)$ is evaluated on the same inverse-temperature grid used to compute $V(\beta)$ and $\Delta(\beta)$. The cumulative integral $A(\beta)$ is evaluated using the trapezoidal rule, the normalized values $s(\beta)$ are constructed, and this monotone relation is interpolated and inverted to obtain the discrete sequence $\{\beta_j\}_{j=0}^{T}$. Thus, all schedules are implemented through the same numerical procedure; they differ only in the choice of the density $\rho(\beta)$.

\subsection{Variance-only schedule}

The variance-only schedule uses only the local statistical variation of the Gibbs distribution. Since
\begin{equation}
    \|\partial_\beta\pi_\beta\|_{\pi_\beta^{-1}}
    =
    V(\beta),
\end{equation}
we choose
\begin{equation}
    \rho_{\mathrm{var}}(\beta)
    =
    V(\beta).
    \label{eq:variance_schedule_density}
\end{equation}
The resulting schedule satisfies
\begin{equation}
    \dot\beta
    \propto
    \frac{1}{V(\beta)}.
\end{equation}
It therefore slows down in regions where neighboring Gibbs distributions are changing fast, but it does not include information about the relaxation properties of the Markov chain.

\subsection{Gap-only schedule}

To isolate the role of the relaxation timescale, we also consider a schedule constructed only from the spectral gap. The local relaxation time is proportional to the inverse gap, motivating
\begin{equation}
    \rho_{\mathrm{gap}}(\beta)
    =
    \frac{1}{\delta(\beta)}.
    \label{eq:gap_schedule_density}
\end{equation}
Thus,
\begin{equation}
    \dot\beta
    \propto
    \delta(\beta).
\end{equation}
The schedule allocates more steps to small-gap regions where relaxation is slow, but does not account for how fast the stationary Gibbs distribution itself changes.

\subsection{Tracking schedule}

The Tracking schedule is constructed from the effective local tracking metric derived in \cref{lemma:effective_metric}. Its constant-metric-speed parametrization satisfies $\sqrt{g^{\text{track}}_{\beta\beta}\dot{\beta}}=\alpha$, where $\alpha>0$  is fixed by the endpoints and the duration of the normalized schedule. Thus, for an increasing schedule,
\begin{equation}
\dot\beta
=\alpha\,\frac{\delta(\beta)}
{[1-\delta(\beta)]V(\beta)}.    
\end{equation}
Following the schedule-density construction described above, we use
\begin{equation}
\rho_{\mathrm{track}}(\beta)
=\frac{1-\delta(\beta)}{\delta(\beta)}V(\beta).    
\end{equation}
In the small-gap regime, $\delta(\beta)\ll 1$, the factor $1-\delta(\beta)$ is approximately unity, giving
\begin{equation}
  \rho_{\mathrm{track}}(\beta)\simeq\frac{V(\beta)}{\delta(\beta)},
\qquad
\dot\beta
\propto
\frac{\delta(\beta)}
    {V(\beta)}.
\end{equation}
We retain the factor $1-\delta(\beta)$ in the numerical implementation. This factor arises from the contraction of the newly generated tracking error by the Markov step; it remains present in the leading local bound as $T\to\infty$ at fixed gap.

This schedule therefore combines the two local quantities entering the tracking theorem: the energy fluctuations determine how fast the target Gibbs state changes, while the spectral gap determines how fast the Markov dynamics can respond to that change.

\subsection{Friction schedule}

The friction schedule is motivated by thermodynamic geometry, which combines equilibrium energy fluctuations with their relaxation time. For the inverse-temperature Gibbs family, the thermodynamic friction scales as
\begin{equation}
\zeta_{\beta\beta}(\beta) \propto
\operatorname{Var}_{\pi_\beta}(E)
\tau_E(\beta),
\end{equation}
where $\tau_E(\beta)$ is the integrated energy autocorrelation time. For the SK numerical comparasion, we use the inverse spctral gap as a proxy for this correlation time and define
\begin{equation}
\hat{\tau}_E(\beta) \coloneqq
    \frac{1}{\delta(\beta)},    
\end{equation}
\begin{equation}
    \dot\beta
    \propto
    \frac{1}
    {\sqrt{\zeta_{\beta\beta}(\beta)}}.
\end{equation}

In the numerical comparison, the relaxation time is estimated from the spectral gap,
\begin{equation}
    \tau_E(\beta)
    \sim
    \frac{1}{\delta(\beta)}, \quad \hat{\zeta}_{\beta \beta}(\beta) \coloneqq \frac{V(\beta)^2}{\delta(\beta)}.
\end{equation}
The constant-spped condition for this proxy gives $\sqrt{ \hat{\zeta}_{\beta \beta}}\dot{\beta}=\text{constant}$, and therefore
\begin{equation}
\dot\beta
\propto\frac{\sqrt{\delta(\beta)}}{V(\beta)}.
\end{equation}
Equivalently, the schedule density is
\begin{equation}
\rho_{\mathrm{fric}}(\beta)
= \frac{V(\beta)} {\sqrt{\delta(\beta)}}.
\label{eq:friction_schedule_density}
\end{equation}

For fixed endpoints and duration, this schedule minimizes the metric-energy functional constructed from $\hat{\zeta}_{\beta \beta}$. Since the actual energy autocorrelation time need not to be equal to $1/\delta(\beta)$, it should be interpreted as a gap-based friction proxy rather than an exact minimum-dissipation protocol for SK dynamics.

\subsection{Numerical construction}

The quantities $V(\beta)$ and $\Delta(\beta)$ are first evaluated on a grid
\begin{equation}
\beta_0=\beta^{(0)} < \beta^{(1)} < \cdots < \beta^{(M)}=\beta_f.
\end{equation}
For each schedule, the corresponding density $\rho(\beta)$ is evaluated on this grid and integrated numerically using the trapezoidal rule,
\begin{equation}
A_k \simeq \sum_{\ell=0}^{k-1}
\frac{\rho(\beta^{(\ell)}) +  \rho(\beta^{(\ell+1)})}{2}
\left( \beta^{(\ell+1)}  - \beta^{(\ell)} \right).
\end{equation}
The cumulative values are normalized by $A_M$ and the resulting monotone function $s(\beta)$ is numerically inverted to obtain $\beta(s)$. Finally, $\beta_j = \beta\!\left(\frac{j}{T}\right)$ is used in the time-inhomogeneous Markov evolution.

A small numerical floor is applied only when constructing schedules containing inverse powers of the gap, in order to avoid numerical singularities due to finite-precision eigenvalue estimates. The measured spectral gap itself is not modified when reporting spectral quantities or analysing the dynamics.

\section{Spectral-gap extrapolation for classical and quantum schedules}
\label{app:SK_extap}

To investigate the transferability of the classical and quantum schedules to larger SK systems, we estimate the finite-size dependence of the Markov-chain spectral gap using smaller instances. At each inverse temperature $\beta$, we average the gap over $N_{\rm inst}$ independent disorder realizations,
\begin{equation}
\overline{\delta}_n(\beta)
=
\frac{1}{N_{\rm inst}}
\sum_{r=1}^{N_{\rm inst}}
\delta_{n,r}(\beta),
\end{equation}
and fit its dependence on system size according to
\begin{equation}
\log\overline{\delta}_n(\beta)
=
a(\beta)+b(\beta)n.
\end{equation}
For the classical simulations, the fit uses $n=6,8,10,12$ and is extrapolated to $n=18$, whereas the quantum simulations use $n=6,8,10$ to predict the gaps for $n=12,14,16$. The predicted gaps are therefore empirical estimates, rather than rigorous spectral-gap lower bounds.

For the classical $n=18$ simulations, the energy-fluctuation scale $V(\beta)$ is estimated directly from samples generated by independent lazy Metropolis chains. At each inverse temperature, the chains undergo initial relaxation updates, followed by energy measurements. The resulting estimate is
\begin{equation}
\widehat V(\beta_j)
=
\left[
\frac{1}{N_s-1}
\sum_{\ell=1}^{N_s}
\left(
E_j^{(\ell)}
-
\overline E_j
\right)^2
\right]^{1/2},
\qquad
\overline E_j
=
\frac{1}{N_s}
\sum_{\ell=1}^{N_s}E_j^{(\ell)}.
\end{equation}
The final configurations are reused at the next inverse temperature to provide warm starts. The predicted classical Gap-only and Tracking schedules are then constructed from
\begin{equation}
\rho_{\rm gap}^{(18)}(\beta)
=
\frac{1}{\widehat{\delta}_{18}(\beta)},
\qquad
\rho_{\rm track}^{(18)}(\beta)
=
\frac{1-\widehat{\delta}_{18}(\beta)}
{\widehat{\delta}_{18}(\beta)}
\widehat V(\beta).
\label{eq:n18_tracking_density}
\end{equation}

For the quantum simulations, write $\widehat\delta_{L,n}$ for the predicted gap of the lazy kernel. Using $\delta_+=2\delta_L$, the quantum-walk eigenphase-gap estimate is
\begin{equation}
\widehat{\Delta}_{w,n}(\beta)
=
\arccos\left[1-2\widehat{\delta}_{L,n}(\beta)\right].
\end{equation}
The corresponding discrete schedule is constructed using the finite-difference variation $D_{n,i}$, evaluated analytically from \cref{eq:SK_discrete_L1}, together with the predicted eigenphase gaps. Thus, constructing the quantum schedules for $n=12,14,16$ does not require instance-specific spectral diagonalization.

In both cases, the cumulative schedule densities are normalized and inverted following \cref{app:schedu_sk} to obtain the discrete inverse-temperature sequence used in the evolution.

\section{Numerical tests -- supporting material}
\label{sec:supp_mate}

\begin{table*}[t]
\centering
\caption{ Comparison of inverse-temperature schedules for the analytically solvable two-state model using $T=200$ Markov steps, with $\beta_0=0$ and $\beta_f=4$. Here,
$\mathrm{TV}_T=\|p_T-\pi_T\|_{\mathrm{TV}}$ is the final total-variation error,
$e_T=\|p_T-\pi_T\|_{\pi_T^{-1}}$ is the final weighted tracking error, and
$e_{\max}=\max_j\|p_j-\pi_j\|_{\pi_j^{-1}}$
is the maximum instantaneous tracking error.
For this two-state model and monotone schedules, the product-form weighted-norm tracking bound is saturated, on the other hand, the exponential relaxation bound and conversion to total variation need not be saturated. The column $e_T/2$ gives the generic bound
$\|p_T-\pi_T\|_{\mathrm{TV}}\leq e_T/2$.
The Tracking schedule is derived from the local tracking metric and depends
on both the spectral gap and the Gibbs-state energy variance, whereas the Friction schedule is the constant-excess-power schedule obtained from $\zeta_{\beta\beta}\sim
\operatorname{Var}_{\pi_\beta}(E)\tau_E(\beta)$ with
$\tau_E(\beta)=1/\delta(\beta)$.}
\label{tab:two_state_T200}
{\begin{tabular}{lccccc}
\hline\hline
Schedule & $\mathrm{TV}_T$ & $e_T$ & $e_{\max}$ & $e_T/2$
& $\max_j |B_j-e_j|$
\\
\hline
Gap-only
& $6.1655\times10^{-3}$
& $4.6392\times10^{-2}$
& $8.9069\times10^{-2}$
& $2.3196\times10^{-2}$
& $1.11\times10^{-16}$
\\
Exp.\ $a=-3$
& $6.4186\times10^{-3}$
& $4.8296\times10^{-2}$
& $8.7623\times10^{-2}$
& $2.4148\times10^{-2}$
& $1.94\times10^{-16}$
\\
Tracking
& $8.0471\times10^{-3}$
& $6.0550\times10^{-2}$
& $\mathbf{7.3121\times10^{-2}}$
& $3.0275\times10^{-2}$
& $1.25\times10^{-16}$
\\
Power $p=0.5$
& $1.0341\times10^{-2}$
& $7.7808\times10^{-2}$
& $7.7808\times10^{-2}$
& $3.8904\times10^{-2}$
& $1.67\times10^{-16}$
\\
Friction
& $1.5637\times10^{-2}$
& $1.1766\times10^{-1}$
& $1.1766\times10^{-1}$
& $5.8828\times10^{-2}$
& $1.53\times10^{-16}$
\\
Linear
& $1.8068\times10^{-2}$
& $1.3595\times10^{-1}$
& $1.3595\times10^{-1}$
& $6.7976\times10^{-2}$
& $1.39\times10^{-16}$
\\
Variance-only
& $2.9362\times10^{-2}$
& $2.2093\times10^{-1}$
& $2.2093\times10^{-1}$
& $1.1046\times10^{-1}$
& $1.40\times10^{-16}$
\\
Power $p=2$
& $3.0813\times10^{-2}$
& $2.3185\times10^{-1}$
& $2.3185\times10^{-1}$
& $1.1592\times10^{-1}$
& $1.80\times10^{-16}$
\\
Exp.\ $a=3$
& $4.0497\times10^{-2}$
& $3.0471\times10^{-1}$
& $3.0471\times10^{-1}$
& $1.5236\times10^{-1}$
& $1.67\times10^{-16}$
\\
Power $p=3$
& $4.1486\times10^{-2}$
& $3.1215\times10^{-1}$
& $3.1215\times10^{-1}$
& $1.5608\times10^{-1}$
& $2.78\times10^{-16}$
\\
\hline\hline
\end{tabular}
}
\end{table*}

\begin{table*}[t]
\centering
\caption{
Performance of the inverse-temperature schedules for the lazy SK chain using
$T=6400$ Markov steps. Each entry is averaged over the same $56$ disorder realizations for each system size $n$.
Smaller values of $e_{\max}$ and $e^{\mathrm{final}}_{TV}$ indicate better pathwise tracking and final Gibbs-state preparation, respectively, whereas values of $q_{\rm succ}$ and $q_{\rm AEAR}$ closer to unity indicate better
terminal preparation. The Tracking schedule is obtained from the local tracking metric, the
Friction schedule from the scalar-gap friction proxy, and the Gap-only schedule uses only the instantaneous spectral gap.
Bold values indicate the best result within each system size and column. }
\label{tab:sk_equal56_performance}
{ \begin{tabular}{clcccc}
\hline\hline
$n$ & Schedule & $e_{\max}$ & $e^{\mathrm{final}}_{TV}$
& $q_{\rm succ}$ & $q_{\rm AEAR}$
\\
\hline

6
& Linear
& $\mathbf{5.6168\times10^{-3}}$
& $1.1676\times10^{-3}$
& $0.998297$
& $0.999488$
\\
& Variance-only
& $7.0659\times10^{-3}$
& $2.3682\times10^{-3}$
& $0.996632$
& $0.998969$
\\
& Friction
& $6.3417\times10^{-3}$
& $7.3292\times10^{-4}$
& $0.998915$
& $0.999678$
\\
& Gap-only
& $9.9235\times10^{-2}$
& $2.7043\times10^{-4}$
& $0.999594$
& $0.999881$
\\
& Tracking
& $4.0771\times10^{-2}$
& $3.5396\times10^{-4}$
& $0.999471$
& $0.999844$
\\
& Exp.\ $a=-3$
& $9.5686\times10^{-3}$
& $\mathbf{1.8612\times10^{-4}}$
& $\mathbf{0.999728}$
& $\mathbf{0.999919}$
\\
\hline

8
& Linear
& $\mathbf{1.3675\times10^{-2}}$
& $2.5367\times10^{-3}$
& $0.996227$
& $0.999308$
\\
& Variance-only
& $1.9533\times10^{-2}$
& $5.7780\times10^{-3}$
& $0.991612$
& $0.998415$
\\
& Friction
& $1.8817\times10^{-2}$
& $1.5167\times10^{-3}$
& $0.997707$
& $0.999590$
\\
& Gap-only
& $5.5326\times10^{-1}$
& $5.2804\times10^{-4}$
& $0.999192$
& $0.999858$
\\
& Tracking
& $2.0561\times10^{-1}$
& $7.0234\times10^{-4}$
& $0.998928$
& $0.999811$
\\
& Exp.\ $a=-3$
& $2.4006\times10^{-2}$
& $\mathbf{4.1731\times10^{-4}}$
& $\mathbf{0.999379}$
& $\mathbf{0.999887}$
\\
\hline

10
& Linear
& $\mathbf{3.2756\times10^{-2}}$
& $6.8202\times10^{-3}$
& $0.990646$
& $0.998732$
\\
& Variance-only
& $6.0243\times10^{-2}$
& $1.6327\times10^{-2}$
& $0.977672$
& $0.996904$
\\
& Friction
& $4.4745\times10^{-2}$
& $3.6065\times10^{-3}$
& $0.995066$
& $0.999348$
\\
& Gap-only
& $1.5183\times10^{0}$
& $\mathbf{1.2532\times10^{-3}}$
& $\mathbf{0.998283}$
& $\mathbf{0.999779}$
\\
& Tracking
& $5.1878\times10^{-1}$
& $1.6414\times10^{-3}$
& $0.997760$
& $0.999710$
\\
& Exp.\ $a=-3$
& $5.0346\times10^{-2}$
& $1.2979\times10^{-3}$
& $0.998242$
& $0.999768$
\\
\hline

12
& Linear
& $\mathbf{6.4267\times10^{-2}}$
& $1.1995\times10^{-2}$
& $0.984378$
& $0.998417$
\\
& Variance-only
& $1.2214\times10^{-1}$
& $2.7732\times10^{-2}$
& $0.963393$
& $0.996034$
\\
& Friction
& $8.2633\times10^{-2}$
& $6.7686\times10^{-3}$
& $0.991203$
& $0.999146$
\\
& Gap-only
& $3.5707\times10^{0}$
& $\mathbf{2.6769\times10^{-3}}$
& $\mathbf{0.996541}$
& $\mathbf{0.999669}$
\\
& Tracking
& $9.7862\times10^{-1}$
& $3.4498\times10^{-3}$
& $0.995539$
& $0.999571$
\\
& Exp.\ $a=-3$
& $9.3918\times10^{-2}$
& $3.0682\times10^{-3}$
& $0.996149$
& $0.999626$
\\

\hline\hline
\end{tabular}
}
\end{table*}

Table~\ref{tab:two_state_T200} first summarizes the numerical results for the analytically solvable two-state model discussed in the main text. For this two-state model, monotone schedules and equilibrium initialization ensure that the error and stationary-distribution increment are aligned. Consequently, the product-form weighted-norm tracking bound is saturated at every step up to numerical precision. The table compares several inverse-temperature schedules at fixed $T=200$ using the final total-variation error $\mathrm{TV}_T$, the final weighted tracking error $e_T$, and the maximum instantaneous tracking error $e_{\max}$. The results already show that pathwise and terminal performance are distinct objectives: the Tracking schedule gives the smallest $e_{\max}$ among the schedules derived from the local geometric quantities, while Gap-only and the decreasing exponential schedule yield smaller final errors.

Table~\ref{tab:sk_equal56_performance} then extends the same comparison to the many-state SK model. It reports the average performance at fixed $T=6400$ over the same $56$ disorder realizations for each system size. The quantity $e_{\max}$ measures the largest departure from the instantaneous Gibbs distribution along the entire annealing path, whereas $e^{\mathrm{final}}_{TV}$, $q_{\rm succ}$, and $q_{\rm AEAR}$ quantify the quality of the final prepared state.

The SK results reinforce the distinction already visible in the two-state model. Among the schedules retained in the comparison, Linear gives the smallest $e_{\max}$ for all system sizes shown, indicating the strongest control of the full path. This does not, however, lead to the best terminal preparation. Schedules that place greater weight on the spectral gap, particularly Gap-only and Tracking, produce substantially smaller final total-variation errors and values of $q_{\rm succ}$ and $q_{\rm AEAR}$ closer to the exact Gibbs targets.

For $n=6$ and $n=8$, the fixed exponential schedule with $a=-3$ gives the best terminal performance among the schedules considered. For $n=10$ and $n=12$, Gap-only gives the smallest final total-variation error and the largest values of both terminal performance ratios. Tracking follows the same qualitative trend while providing a less aggressive concentration of steps in the small-gap region. Friction yields a more moderate improvement over Linear, whereas Variance-only performs consistently worse, indicating that equilibrium energy fluctuations alone do not capture the dominant dynamical bottlenecks of the SK annealing process.

Taken together, Tables~\ref{tab:two_state_T200} and~\ref{tab:sk_equal56_performance} show the same general feature in increasingly complex settings: minimizing the largest transient deviation from the instantaneous Gibbs state and minimizing the final preparation error are different finite-time objectives. Schedules that devote more of the evolution to slow-relaxation regions can improve the final Gibbs-state preparation even when they tolerate larger intermediate deviations.

\end{document}